\newcommand{\nm}{\noalign{\smallskip}}
\newcommand{\ds}{\displaystyle}
\newtheorem{prop}{Proposition}[section]
\newtheorem{cor}[prop]{Corollary}
\newtheorem{lem}[prop]{Lemma}
\newtheorem{thm}[prop]{Theorem}
\numberwithin{equation}{section} \numberwithin{figure}{section}
\title{Mathematical modeling of
fluorescence diffuse optical imaging of cell membrane potential
changes
\thanks{\footnotesize This work was supported  by ERC Advanced
Grant Project MULTIMOD--267184.}}
\author{Habib Ammari\thanks{\footnotesize Department of Mathematics and Applications,
Ecole Normale Sup\'erieure, 45 Rue d'Ulm, 75005 Paris, France
(habib.ammari@ens.fr, laure.giovangigli@ens.fr).} \and Josselin
Garnier\thanks{\footnotesize Laboratoire de Probabilit\'es et
Mod\`eles Al\'eatoires \& Laboratoire Jacques-Louis Lions,
Universit\'e Paris VII, 75205 Paris Cedex 13, France
(garnier@math.jussieu.fr).} \and Laure
Giovangigli\footnotemark[2]}
\date{}
\begin{document}

%\begin{titlepage}
\maketitle
%\begin{center}

%\end{center}
%\end{titlepage}

\begin{abstract}
The aim of this paper is to provide a mathematical model for
spatial distribution of membrane electrical potential changes by
fluorescence diffuse optical tomography. We derive the resolving
power of the imaging method in the presence of measurement noise.
The proposed mathematical model can be used for cell membrane
tracking with the resolution of the optical microscope.
\end{abstract}

\bigskip

\noindent {\footnotesize Mathematics Subject Classification
(MSC2000): 35R30, 35B30.}

\noindent {\footnotesize Keywords: resolving power, stability and
resolution analysis, fluorescence diffuse optical tomography, cell
tomography, cell membrane, electric field, layer potential
techniques.}

\tableofcontents

%\newpage

\selectlanguage{english}

\section{Introduction}

The propagation of light through a highly scattering medium with
low absorption is well described by the diffusion equation
\cite{diffusion}. Diffuse optical imaging techniques measure the
spatially-dependent absorption and scattering properties of a
tissue. A light source illuminates the tissue, and detectors
measure the intensity of the exiting light at the boundary of the
tissue, after it underwent multiple scattering and absorption. One
can use these measurements to reconstruct, from the diffusion
equation, a map of the optical parameters of the studied
biological tissue \cite{planar, john}.

%Different absorption and scattering coefficients correspond to
%different concentrations of chromophores. Chromophores are
%chemical groups, which absorb light at specific wavelengths.
%Diffuse optical imaging techniques use near infrared light,
%because absorption by biological tissue is minimal at these
%wavelengths, and one can then produce images of chromophores deep
%in living subjects or samples, up to several centimeters. The
%primary chromophores in this spectral window are oxygenated and
%deoxygenated hemoglobin, water and lipid. From images produced by
%diffuse optical imaging techniques, one can then retrieve the
%distribution of total hemoglobin concentration, blood oxygenation,
%water and lipids in the tissue. This allows for example to
%distinguish between healthy and diseased tissues, because of
%increased blood flow from tumor neo-vascularization \cite{3,
%model2}.

%Diffuse optical imaging techniques can be classified into two
%general categories: planar and tomographic imaging. Planar imaging
%illuminates a large surface of the studied tissue with a plane
%wave. It allows a fast and relatively simple acquisition of
%two-dimensional images of the surface of the studied tissue, with
%little in depth information \cite{planar}. On the other hand,
%tomography uses point sources and spatially-resolved detectors,
%which are distributed all over the surface of the tissue. It
%requires more complex reconstruction and instrumentation, but
%produces three-dimensional images of the distribution of
%chromophores in tissues.

Diffuse optical imaging techniques use near infrared light,
because absorption by biological tissue is minimal at these
wavelengths, and one can then produce images deep in living
subjects or samples, up to several centimeters.

These techniques can be used to image fluorescing targets,
fluorophores, in tissues. When excited by light at a specific
wavelength, fluorophores emit light at a different wavelength in
order to decay to their ground state. Measurements of emitted
light exiting at the boundary of the tissue, combined with
measurements of residual excitation light from sources, after it
went through the tissue,  provide an insight of the tissue optical
properties. More precisely, these measurements allow to
reconstruct a map of the tissue optical parameters, the
distribution of fluorophore concentration, and fluorophore
lifetime, the time they spent in their excited state before
emitting light \cite{3, model2}. The fluorescent indicators, which
can be chosen with excitation and emission wavelengths in the near
infrared light spectrum, accumulate in specific areas. With such
techniques, one can then localize proteins, cells or diseased
tissues, visualize in vivo biological processes, and obtain
measurements of the concentration in tissues of important
physiological markers, such as oxygenated hemoglobin \cite{imag1,
vasilis, leary}. Detailed structural information as well as
indications of pathology can be obtained from these images.

In this paper, we mathematically formulate the imaging problem of
the spatial distributions of the transmembrane potential changes
induced in cells by applied external electric fields. The use of
optical detection methods for the measurement of fluorescence
response to membrane electric fields was reported in the early
1970s. Since then, considerable advances have been reported
\cite{7}. In \cite{9}, it has been demonstrated experimentally
that membrane potential changes can be imaged with the resolution
of the optical microscopy. The key feature of this system is the
combined use of an external electric field and fluorescence
tomography. The fluorescent indicators are designed in such a way
they respond linearly to the electrical potential jump across the
membrane. The application of the electric field enhances the
membrane fluorescence imaging.

The purpose of this paper is threefold. We first provide and
analyze a mathematical model for optical imaging of changes in
membrane electric potentials. Then we propose, in the linearized
case where the shape of the cell is a perturbation of a disk, an
efficient direct imaging technique based on an appropriate choice
of the applied currents. An iterative imaging algorithm for more
complex shapes is also suggested. Finally, we estimate the
resolving power of the proposed imaging algorithm in the presence
of measurement noise. In a forthcoming work, we will use the
proposed algorithm for implementing tracking approaches capable of
imaging the behavior of single or cluttered live cells.

Our main results in this paper can be summarized as follows. Let
$C$ be the cell and let $\Omega$ be the background domain. Given
an optical excitation $g$, the emitted light fluence is
$\Phi_{\mathrm{emt}}^{\,g}$, the solution to the diffusion
equation (\ref{summ2}) with $\Phi_{\mathrm{exc}}^{\,g}$ defined by
(\ref{summ1}) and $c_{\mathrm{flr}}$ being the concentration of
fluorophore supported on the cell membrane $\partial C$. Equation
(\ref{eq:cf}) gives the relation between the function
$c_{\mathrm{flr}}$ and the electric potential $u$ defined by
(\ref{eq:u2}). In order to image the cell membrane $\partial C$,
we establish identity (\ref{eq:pbinv}) and linearize in Theorem
\ref{prop439} relation (\ref{eq:cf}) for $\partial C$ being a
perturbation of a disk. Proposition \ref{proplsa} gives the least
squares estimate of the cell membrane perturbation. Introducing
the signal-to-noise ratio in (\ref{defsnr}), where $\sigma$ models
the measurement noise amplitude and $\epsilon$ corresponds to the
order of magnitude of the cell membrane perturbation,  we derive
in Theorem \ref{thmresolving} the resolving power of the imaging
method. Theorem \ref{proplsa2}, which is our main result in this
paper, provides expressions for the reconstructed modes in the
cell membrane perturbation in the presence of measurement noise
under physical assumptions on the size of the cell and the value
of the used frequency. A generalization of the linearization
procedure for arbitrary-shaped cell membranes is provided in
Proposition \ref{prop441} and the reconstruction of perturbations
of arbitrary-shaped cell membranes is  formulated as a
minimization problem, where the data is appropriately chosen in
order to maximize the resolution of the reconstructed images.

\section{Governing model}

We consider a  cell, that we want to image. We inject fluorescent
indicators, which stick only on the cell membrane \cite{10}. These
markers are chosen so that their concentration responds linearly
to the potential jump across the membrane, when the cell is
immersed in an external electric field \cite{9}. We apply such en
external electric field at the boundary of our domain and use
fluorescence optical diffuse tomography to reconstruct the
position and shape of the membrane.

\subsection{Coupled diffusion equations}

A sinusoidally modulated near infrared monochromatic light source
$g$, located at the boundary $\partial \Omega$ of the examined
domain $\Omega$, launches an excitation light fluence
$$\phi_{\mathrm{exc}}= \Phi_{\mathrm{exc}}(x,\omega) \, e^{i\omega
t},$$ at the wavelength $\lambda_{\mathrm{exc}}$, into $\Omega$.
At time $t$ and point $x$, $\phi_{\mathrm{exc}}$ represents the
average photon density, due to excitation by the source
oscillating at frequency $\omega$. After it undergoes multiple
scattering and absorption, this light wave reaches the fluorescent
markers, which are accumulated on $\partial C$, the membrane of
the cell $C$. The excited fluorophores emit a wave
$$\phi_{\mathrm{emt}}= \Phi_{\mathrm{emt}}(x,\omega)\, e^{i\omega
t},$$ at the wavelength $\lambda_{\mathrm{emt}}$. The intensity of
the emitted wave is proportional to the intensity of the
excitation wave, when it reaches the fluorescent molecule. The
emitted waves pass through the absorbing and scattering domain and
are detected at the boundary $\partial \Omega$.

In the near infrared spectral window, the propagation of light in
biological tissues can be modeled by the diffusion equation, which
is a limit of the radiative transport equation when the transport
mean free path is much smaller than the typical propagation
distance. Our model can therefore be described by the following
coupled diffusion equations completed by Robin boundary conditions
\cite{model1,model2,model3,model4}:
\begin{displaymath}
\left \{
\begin{array}{ll}
- \nabla \cdot \left ( D_{\mathrm{exc}}(x) \nabla
\Phi_{\mathrm{exc}}(x,\omega) \right ) + \left
(\mu_{\mathrm{exc}}(x) + \displaystyle \frac{i \omega}{c}\right )
\Phi_{\mathrm{exc}}(x,\omega) = 0 \quad& \textrm{in}\, \Omega, \\
\vspace{0.2cm} \displaystyle \ell_{\mathrm{exc}} \frac{\partial
\Phi_{\mathrm{exc}}}{\partial \nu}(x,\omega) +\displaystyle  \,
\Phi_{\mathrm{exc}}(x,\omega) = g(x) & \textrm{on} \,\partial
\Omega,
\end{array}
\right .
\end{displaymath}

\begin{displaymath}
\left \{
\begin{array}{l}
- \nabla \cdot \left ( D_{\mathrm{emt}}(x) \nabla
\Phi_{\mathrm{emt}}(x,\omega) \right ) + \left
(\mu_{\mathrm{emt}}(x) + \displaystyle \frac{i \omega}{c}\right )
\Phi_{\mathrm{emt}}(x,\omega) \\ \nm \hspace{3cm} \displaystyle =
\gamma(x,\omega)
\, \Phi_{\mathrm{exc}}(x,\omega) \quad \textrm{in}\, \Omega,  \\
\vspace{0.2cm} \displaystyle \ell_{\mathrm{emt}} \frac{\partial
\Phi_{\mathrm{emt}}}{\partial \nu}(x,\omega) +\displaystyle
 \, \Phi_{\mathrm{emt}}(x,\omega) = 0 \quad \textrm{on}
\,\partial \Omega .
\end{array}
\right .
\end{displaymath}

\noindent Here,

\begin{itemize}

\item $\nu$ denotes the outward normal at the boundary $\partial
\Omega$;

\item $c$ denotes the speed of light in the medium;

 \item $D_{\mathrm{exc}}$ and $\mu_{\mathrm{exc}}$
(respectively $D_{\mathrm{emt}}$ and $\mu_{\mathrm{emt}}$) denote
the photon diffusion and absorption coefficient at wavelength
$\lambda_{\mathrm{exc}}$ (respectively $\lambda_{\mathrm{emt}}$)
over the speed of light $c$. Assuming that the scattering is
isotropic, they can be expressed, for $i= \mathrm{exc},
\mathrm{emt}$, as follows:
\begin{displaymath}
D_{i}(x) = \displaystyle \frac {1} {d(\mu_{a,i}(x) +
\mu_{\mathrm{flr},i}(x) + \mu'_{s,i}(x))}\quad\textrm{and} \quad
\mu_{i}(x) = \mu_{a,i}(x) + \mu_{\mathrm{flr},i}(x)\,,
\end{displaymath}

where

\begin{itemize}

\item $\mu_{a,i}$ denotes the absorption coefficient, due to
natural chromophores of the medium, at wavelength $\lambda_{i}$;

\item $\mu_{\mathrm{flr},i}$ denotes the absorption coefficient,
due to fluorophores, at wavelength $\lambda_{i}$. This absorption
coefficient is proportional to the fluorophore concentration
$c_\mathrm{flr}(x)$. The proportionality coefficient,
$\varepsilon_{\mathrm{exc}}$, is the fluorophore extinction
coefficient at wavelength $\lambda_{i}$;

\item $\mu'_{s,i}$ denotes the reduced scattering coefficient at
wavelength $\lambda_{i}$; its inverse is the transport mean free
path.

\item $\ell_i$ is the extrapolation length. It is computed from
the radiative transport theory \cite{rossum} and is proportional
to the transport mean path. The multiplicative function depends on
the index mismatch between the scattering medium in $\Omega$ and
the surroundings.

\item $d$ is the space dimension;

\end{itemize}

\item $\gamma$ is given by
\begin{equation} \label{gamma}
\gamma(x,\omega) = \displaystyle \frac{\eta
\,\mu_{\mathrm{flr},\mathrm{exc}}(x)}{1-i \omega \tau(x)} =
\frac{\eta \varepsilon_{\mathrm{exc}}\,c_{\mathrm{flr}}(x)}{1-i
\omega \tau(x)},
\end{equation}
with $\eta$ and $\tau$ being respectively the fluorophore's
quantum efficiency and fluorescence lifetime.
\end{itemize}

\subsection{Model assumptions}

Let $\Omega$ be the background domain and let $C \Subset \Omega$
denote the cell. From now on, the space dimension $d$ is equal to
$2$ or $3$ and $\Omega$ and $C$ are bounded $\mathcal{C}^2$-
domains.

The fluorophores are only located on the cell membrane $\partial
C$, their concentration $c_\mathrm{flr}(x)$ is zero, except on
$\partial C$. We neglect their contribution to the absorption and
diffusion coefficient, that is,
\begin{displaymath}
D_{i}(x) = \displaystyle\frac{1}{d(\mu_{a,i}(x) +
\mu'_{s,i}(x))}\quad \textrm{and} \quad \mu_{i}(x) = \mu_{a,i}(x).
\end{displaymath}

In the near infrared spectral window, the absorption coefficient
is much smaller than the reduced scattering  coefficient. This is,
besides, one of the conditions to approximate the light
propagation in the medium  by the diffusion equation.

We can approximate the diffusion coefficients at the excitation
and emission wavelength as follows:
\begin{displaymath}
D_{i}(x) = \displaystyle\frac{1}{d \mu'_{s,i}(x)} .
\end{displaymath}

We consider that the optical parameters are constant in the domain
$\Omega$ and do not depend on the wavelength of the propagating
light. Hence, for $i=\mathrm{exc},\mathrm{emt}$,
\begin{displaymath}
D_{i}(x) = D_{i} = D = \displaystyle\frac{1}{d \mu'_{s}}, \quad
\mu_{i}(x) = \mu_{i} = \mu = \mu_{a}, \quad \mbox{and}\quad
\ell_i(x)=\ell_i = \ell .
\end{displaymath}

We consider that the fluorophore's fluorescence lifetime $\tau$ is
constant. From (\ref{gamma}) it follows that $\gamma$ depends on
the position $x$ only through $\mu_{\mathrm{flr}}(x)$, and more
specifically $c_{\mathrm{flr}}(x)$. It can then be written as
follows:
\begin{displaymath}
\gamma(x,\omega) = \tilde{\gamma}(\omega) \,c_{\mathrm{flr}}(x)
\quad \mbox{with} \quad \tilde{\gamma}(\omega) = \frac{\eta
\varepsilon_{\mathrm{exc}}}{1-i \omega \tau}.
\end{displaymath}

The coupled diffusion equations and their boundary conditions then
become
\begin{equation} \label{summ1}
\left \{
\begin{array}{ll}
- D \Delta \Phi_{\mathrm{exc}}^{\,g}(x,\omega) + \left (\mu +
\displaystyle \frac{i \omega}{c}\right )
\Phi_{\mathrm{exc}}^{\,g}(x,\omega) = 0 \quad& \textrm{in} \, \Omega , \\
\vspace{0.2cm} \displaystyle \ell \frac{\partial
\Phi_{\mathrm{exc}}^{\,g}}{\partial \nu}(x,\omega) +\displaystyle
\, \Phi_{\mathrm{exc}}^{\,g}(x,\omega) = g(x) & \textrm{on}
\,\partial \Omega ,
\end{array}
\right .
\end{equation}
\begin{equation} \label{summ2}
\left \{
\begin{array}{ll}
- D \Delta \Phi_{\mathrm{emt}}^{\,g}(x,\omega) + \left (\mu +
\displaystyle \frac{i \omega}{c}\right )
\Phi_{\mathrm{emt}}^{\,g}(x,\omega) = \tilde{\gamma}(\omega)
\,c_{\mathrm{flr}}(x) \,
\Phi_{\mathrm{exc}}^{\,g}(x,\omega) \quad& \textrm{in} \, \Omega , \\
\vspace{0.2cm} \displaystyle \ell \frac{\partial
\Phi_{\mathrm{emt}}^{\,g}}{\partial \nu}(x,\omega) +\displaystyle
 \, \Phi_{\mathrm{emt}}^{\,g}(x,\omega) = 0 &
\textrm{on} \, \partial \Omega,
\end{array}
\right .
\end{equation}
where the source $g$ is in $L^2(\partial \Omega)$.

\subsection{Electrical model of a cell}

We apply at the boundary of our domain an electric field
$g_{\mathrm{ele}} \in L^2(\partial \Omega)$. We consider that
$\Omega\setminus \overline{C}$ and $C$ are homogeneous and
isotropic media with conductivity $1$. The thickness $\epsilon$ of
the cell membrane is supposed to be small. We denote by $\sigma$
the conductivity of the cell membrane. We assume that $\sigma \ll
1$ and $\beta >0$ to be given by $\beta = \sigma^{-1} \epsilon$,
see \cite{7}.

We can approximate the voltage potential $u$ within our medium by
the unique solution to the following problem
\cite{poig1,poig2,6,poig3,poig4}:

\begin{equation}\label{eq:u2}
\left\{
\begin{array}{ll}
\vspace{0.25 cm}\Delta u = 0 & \textrm{in}\, C \cup \Omega \setminus \overline{C},\\
\vspace{0.25 cm}\displaystyle\frac{\partial u}{\partial
\nu}\bigg|_{+} -
\frac{\partial u}{\partial \nu}\bigg|_{-} = 0 & \textrm{on}\, \partial C,\\
\vspace{0.25 cm}u\mid_{+} - u\mid_{-} = \beta \displaystyle
\frac{\partial u}{\partial \nu}
& \textrm{on} \,\partial C,\\
\displaystyle \frac{\partial u}{\partial \nu}\bigg|_{\partial
\Omega} = g_{\mathrm{ele}}, & \displaystyle \int_{\partial \Omega}
u = 0 .
\end{array}
\right.
\end{equation}

Since we have chosen the fluorescent indicators of the cell
membrane such that they respond linearly to the potential jump
across the membrane \cite{9}, we can express their concentration
as
\begin{equation}\label{eq:cf}
c_{\mathrm{flr}} = \delta \,[u]\big|_{\partial C} ,
\end{equation}
where $\delta$ is a constant \cite{9}.

\section{Forward problem} \label{sectforward}

The forward problem consists in determining $\Phi_{\mathrm
{emt}}|_{\partial \Omega}$, for a fixed applied electric field
$g_{\mathrm{ele}}$, a light excitation $g$ and a given cell $C$.
The optical parameters of the medium, $D$ and $\mu$, the speed of
light $c$, the extrapolation length $\ell$ and $\tilde{\gamma}$
are supposed to be known.

\subsection{Expression of $\Phi_{\mathrm{exc}}^{\,g}$}

Let $\Phi_{\mathrm{exc}}^{\,g}$ be the excitation light fluence in
$\Omega$, due to an excitation $g$ applied at its boundary
$\partial \Omega$. The function $\Phi_{\mathrm{exc}}^{\,g}$ is the
solution to the following problem:
\begin{equation}\label{eq:ex}
\left \{
\begin{array}{ll}
\vspace{0.2cm} - \Delta \Phi_{\mathrm{exc}}^{\,g}(y) + k^2 \, \Phi_{\mathrm{exc}}^{\,g}(y) = 0 \quad&\textrm{in} \, \Omega , \\
\vspace{0.2cm} \displaystyle \ell \frac{\partial
\Phi_{\mathrm{exc}}^{\,g}}{\partial \nu}(y) + \,
\Phi_{\mathrm{exc}}^{\,g}(y) = g & \textrm{on} \, \partial \Omega
,
\end{array}
\right .
\end{equation}
\noindent where $\vspace{0.1cm} k^2 = \displaystyle \frac{\mu + i
\omega/c}{D}$. Note that if $\ell=0$, then the Robin boundary
condition in (\ref{eq:ex}) should be replaced with the Dirichlet
boundary condition: $\Phi_{\mathrm{exc}}^{\,g}(y) = g \,
\textrm{on} \,
\partial \Omega$.

Let $\Gamma$ be the fundamental solution to $- \Delta + k^2$.
$\Gamma$ is (the exponentially decaying) solution to
\begin{equation}\label{eq:greenhz}
\forall \,y,z \in \mathbb R ^d,  \quad - \Delta_y \Gamma_z(y)+
k^2\, \Gamma_z(y) = \delta_z(y),
\end{equation}
where $\delta_z$ is the Dirac mass at $z$.

We know the explicit expression of $\Gamma_z(y)$ for all $y \neq z
\in \mathbb R^d$ \cite{1}:
\begin{displaymath}
\begin{array}{cl}
\vspace{0.25 cm} \Gamma_z(y) = \displaystyle \frac{i}{4}\,H_0^{(1)}(ik|y-z|) &\quad \textrm{if}\, d = 2, \\
 \Gamma_z(y) = \displaystyle \frac{e^{-k|y-z|}}{4 \pi |y-z|}&\quad \textrm{if} \,d = 3,
\end{array}
\end{displaymath}
\vspace{0.05 cm} where $H_0^{(1)}$ is the Hankel function of the
first kind of order $0$.

We introduce the single and double layer potentials of a function
$f \in L^2(\partial \Omega)$, for all $z \in \mathbb R^d \setminus
\partial \Omega$, \cite{1}:
\begin{displaymath}
\forall \, z \in \mathbb R^d, \quad  \mathcal{S}_{\Omega}[f](z) =
\int_{\partial \Omega} \Gamma_z(y) \,f(y)\, ds(y),
\end{displaymath}
\begin{displaymath}
 \forall \, z
\in \mathbb R^d \setminus
\partial \Omega, \quad \mathcal{D}_{\Omega}[f](z) = \int_{\partial \Omega} \displaystyle
\frac{\partial \Gamma_z(y)}{\partial \nu}\, f(y) \, ds(y) .
\end{displaymath}

\begin{lem} \label{prop311}
The double layer potential verifies, for all $f \in L^2(\partial
\Omega)$,
\begin{displaymath}
\begin{array}{cc}
\vspace{0.5cm}(-\Delta + k^2)\mathcal{D}_{\Omega}[f] = 0 & \qquad
\textrm{{\rm in}} \, \mathbb R^d
\setminus \partial \Omega,\\
\vspace{0.5cm}\displaystyle\frac{\partial} {\partial
\nu}\mathcal{D}_{\Omega}[f]|_+ = \displaystyle\frac{\partial}
{\partial \nu}\mathcal{D}_{\Omega}[f]|_-& \qquad \textrm{{\rm on}} \, \partial \Omega,\\
\mathcal{D}_{\Omega}[f]|_\pm =  \left(\mp \frac{1}{2} I +
\mathcal{K}_{\Omega}\right)[f]& \qquad \textrm{{\rm on}} \,
\partial \Omega,\\
\end{array}
\end{displaymath}
\noindent where $\mathcal{K}_{\Omega}: L^2(\partial \Omega)
\rightarrow L^2(\partial \Omega)$ is defined by
$$\forall \, z \in \partial \Omega, \quad
\mathcal{K}_{\Omega}[f](z)= \int_{\partial \Omega} \displaystyle \frac{\partial}{\partial y} \Gamma_z(y) f(y)\, ds(y).$$
\end{lem}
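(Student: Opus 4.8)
The plan is to reduce all three assertions to classical potential theory for the Laplacian by isolating the singular part of $\Gamma$. The key observation is the symmetry $\Gamma_z(y) = \Gamma_y(z)$, which is immediate from the explicit formulas since $\Gamma_z(y)$ depends on $y,z$ only through $|y-z|$, together with the fact that as $y \to z$ the kernel $\Gamma_z(y)$ has exactly the same leading singularity as the fundamental solution $G_z(y)$ of $-\Delta$, namely $-\frac{1}{2\pi}\log|y-z|$ when $d=2$ and $\frac{1}{4\pi|y-z|}$ when $d=3$. I would therefore record once and for all the decomposition $\Gamma_z(y) = G_z(y) + R_z(y)$ with $R_z(y) := \Gamma_z(y) - G_z(y)$, and, using the small-argument expansion of $H_0^{(1)}$ in $d=2$ and of $e^{-k|y-z|}$ in $d=3$, check that $R$ together with its first and second derivatives extends continuously up to the diagonal $y=z$. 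Thus $R$ is a smooth kernel near $\partial\Omega\times\partial\Omega$, and any layer potential built from $R$ (or its derivatives) is continuous across $\partial\Omega$ and contributes no jump.

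For the first identity I would differentiate under the integral sign, which is legitimate because for $z \notin \partial\Omega$ the kernel $\frac{\partial}{\partial\nu_y}\Gamma_z(y)$ is smooth in $z$ on the compact set $\partial\Omega$. Since $(-\Delta_z + k^2)$ and $\frac{\partial}{\partial\nu_y}$ act on different variables they commute, and by the symmetry of $\Gamma$ and equation (\ref{eq:greenhz}) one has $(-\Delta_z + k^2)\Gamma_z(y) = \delta_y(z) = 0$ for every $y \in \partial\Omega$ and every $z \notin \partial\Omega$. Hence the integrand vanishes and $(-\Delta + k^2)\mathcal{D}_\Omega[f] = 0$ in $\mathbb{R}^d \setminus \partial\Omega$.

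For the jump relations I would split $\mathcal{D}_\Omega[f]$ according to $\Gamma = G + R$. The remainder potential has a continuous kernel, so its trace from either side equals the boundary integral operator with kernel $\frac{\partial}{\partial\nu_y}R_z(y)$ and carries no jump. The singular part is the classical Laplace double layer potential, whose trace on a $\mathcal{C}^2$ boundary is $(\mp\frac{1}{2}I + \mathcal{K}^{G}_\Omega)[f]$, where $\mathcal{K}^{G}_\Omega$ is the principal-value Neumann--Poincar\'e operator with kernel $\frac{\partial}{\partial\nu_y}G_z(y)$ (the $\mathcal{C}^2$ regularity being precisely what makes this kernel weakly singular, hence $\mathcal{K}_\Omega$ well defined on $L^2(\partial\Omega)$). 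Summing, the two boundary contributions combine into $(\mp\frac{1}{2}I + \mathcal{K}_\Omega)[f]$ with $\mathcal{K}_\Omega$ the operator of kernel $\frac{\partial}{\partial\nu_y}\Gamma_z(y) = \frac{\partial}{\partial\nu_y}G_z(y) + \frac{\partial}{\partial\nu_y}R_z(y)$, which is exactly the operator in the statement.

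The remaining and most delicate point is the continuity of the normal derivative, $\frac{\partial}{\partial\nu}\mathcal{D}_\Omega[f]|_+ = \frac{\partial}{\partial\nu}\mathcal{D}_\Omega[f]|_-$, since it involves the hypersingular kernel $\partial_{\nu_z}\partial_{\nu_y}\Gamma_z(y)$. For the $R$-part the second derivatives remain continuous up to the diagonal, so $\frac{\partial}{\partial\nu}\mathcal{D}^{R}_\Omega[f]$ is continuous. For the Laplace part I would invoke, or reprove by tangential integration by parts on $\partial\Omega$, the classical fact that the normal derivative of the Laplace double layer potential has no jump across a $\mathcal{C}^2$ interface (see \cite{1}); the clean route is to rewrite the hypersingular operator as a weakly singular single layer operator acting on surface derivatives of $f$, for which continuity across $\partial\Omega$ is standard. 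I expect this cancellation of the jump of $\partial_\nu\mathcal{D}^{G}_\Omega$ to be the main obstacle, as it is where the geometry of the $\mathcal{C}^2$ boundary genuinely enters. Combining the Laplace and remainder contributions yields the stated continuity.
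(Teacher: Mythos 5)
The paper does not actually prove this lemma: it is quoted as a classical fact about layer potentials for $-\Delta+k^2$ with a pointer to \cite{1}, so there is no in-paper argument to compare against. Your strategy --- split $\Gamma_z(y)=G_z(y)+R_z(y)$ into the Laplace fundamental solution plus a remainder, get the PDE off $\partial\Omega$ by differentiating under the integral and using the symmetry of $\Gamma$, read off the jump of the trace from the classical Laplace theory, and handle the normal derivative by the Maue-type tangential integration by parts --- is exactly the textbook route that the cited references follow, and the overall architecture is sound.

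One step, as you state it, would fail if actually carried out: the claim that $R$ \emph{together with its first and second derivatives} extends continuously up to the diagonal. Writing $R_z(y)=\phi(|y-z|)$, one has in $d=3$ that $\phi(r)=\frac{e^{-kr}-1}{4\pi r}=-\frac{k}{4\pi}+\frac{k^2}{8\pi}r+O(r^2)$, so $\phi'(0)\neq 0$; hence $\nabla_y R_z(y)=\phi'(r)\frac{y-z}{|y-z|}$ is bounded but has no limit at the diagonal, and the second derivatives contain a term $\phi'(r)/r\sim\phi'(0)/r$, which blows up. In $d=2$ the expansion of $H_0^{(1)}$ gives $R=C_0+C_1 r^2\log r+\cdots$, so $\nabla R$ is continuous but $\nabla^2 R=O(\log r)$ is not. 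The correct (and sufficient) statement is that the kernels you actually need, $\partial_{\nu_y}R_z(y)$ for the trace and $\partial_{\nu_z}\partial_{\nu_y}R_z(y)$ for the normal derivative, are bounded, respectively weakly singular (of order $O(|y-z|^{-1})$ in $d=3$, $O(\log|y-z|)$ in $d=2$), and potentials with weakly singular kernels are continuous across $\partial\Omega$; with that weakening your argument goes through unchanged. Two smaller points: the continuity of $\partial_\nu\mathcal{D}_\Omega[f]$ across $\partial\Omega$ for a general $f\in L^2(\partial\Omega)$ has to be interpreted in a weak ($H^{-1}$) sense --- your integration-by-parts identity works directly for $f\in H^1(\partial\Omega)$ and then extends by density --- and you should fix once and for all which side $\pm$ refers to and check the $\pm\tfrac12$ signs against the paper's convention, since the paper's $\Gamma$ tends to $-\Gamma^{(0)}$ as $k\to 0$ while both jump formulas are written with the same signs.
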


\begin{lem} \label{prop312}
Let $d=2,3$. The single layer potential verifies, for all $f \in
L^2(\partial \Omega)$,
\begin{displaymath}
\begin{array}{cc}
\vspace{0.5cm} (-\Delta + k^2)\mathcal{S}_{\Omega}[f] = 0& \qquad \textrm{{\rm in}}\, \mathbb R^d \setminus \partial \Omega,\\
 \mathcal{S}_{\Omega}[f]|_+ = \mathcal{S}_{\Omega}[f]|_- & \qquad \textrm{{\rm on}}\,\partial \Omega,
\end{array}
\end{displaymath}
 the single layer potential is therefore well defined on $\partial \Omega$, and hence on $\mathbb
 R^d$. Moreover,
$$\displaystyle\frac{\partial} {\partial \nu}\mathcal{S}_{\Omega}[f]|_\pm=  \left(\pm \frac{1}{2} I
+ \mathcal{K}_{\Omega}^*\right)[f] \qquad \textrm{{\rm on}} \,
\partial \Omega,$$ \noindent where $\mathcal{K}_{\Omega}^* : L^2(\partial \Omega) \rightarrow L^2(\partial \Omega)$ is the
$L^2$-adjoint of the operator $\mathcal{K}_{\Omega}$, {\it i.e.},
$$ \forall \, z \in \partial \Omega, \quad \mathcal{K}_{\Omega}^* [f](z)= \int_{\partial \Omega} \displaystyle \frac{\partial}{\partial z}
 \Gamma_z(y) f(y)\, ds(y).$$
\end{lem}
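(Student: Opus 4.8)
The plan is to handle the three assertions in turn, the last being the only substantive one: the first two follow from the weakly singular nature of the kernel $\Gamma_z(y)$, while the jump formula for the normal derivative is obtained by isolating the principal singularity and reducing to the classical Laplace case. First I would establish that $(-\Delta + k^2)\mathcal{S}_\Omega[f] = 0$ off $\partial\Omega$. For $z \in \mathbb R^d \setminus \partial\Omega$ the integrand $\Gamma_z(y)$ is smooth in $z$ uniformly for $y$ in the compact set $\partial\Omega$, since the only singularity of $\Gamma_z(y)$ occurs at $z=y$. Differentiation under the integral sign is therefore justified, and using the symmetry $\Gamma_z(y)=\Gamma_y(z)$ together with $(-\Delta_z + k^2)\Gamma_z(y)=\delta_y(z)=0$ for $z\neq y$, one obtains the homogeneous equation. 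For the continuity across $\partial\Omega$, I would note that in both dimensions the kernel has an integrable singularity: from the explicit formulas, $\Gamma_z(y)\sim-\frac{1}{2\pi}\log|y-z|$ as $z\to y$ when $d=2$ (using $\frac{i}{4}H_0^{(1)}(ikr)=\frac{1}{2\pi}K_0(kr)$ and $K_0(r)\sim-\log r$), and $\Gamma_z(y)\sim\frac{1}{4\pi|y-z|}$ when $d=3$. A weakly singular kernel on the $(d-1)$-dimensional surface $\partial\Omega$ produces a continuous potential, so $\mathcal{S}_\Omega[f]|_+=\mathcal{S}_\Omega[f]|_-$ and the single layer potential is well defined on all of $\mathbb R^d$.

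The core of the argument is the jump relation. I would write $\Gamma = \Gamma_0 + R$, where $\Gamma_0$ is the free-space fundamental solution of $-\Delta$ (namely $-\frac{1}{2\pi}\log|y-z|$ in $d=2$ and $\frac{1}{4\pi|y-z|}$ in $d=3$) and $R=\Gamma-\Gamma_0$. The expansions above show that $R$ is one order less singular than $\Gamma_0$: it extends continuously up to $z=y$, and the kernel $\partial_{\nu(z)}R(z,y)$ of the corresponding normal-derivative operator is bounded near the diagonal, the factor $(z-y)\cdot\nu(z)=O(|z-y|^2)$ from the $\mathcal{C}^2$ regularity of $\partial\Omega$ compensating the differentiation. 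Consequently $\partial_\nu\mathcal{S}_\Omega^R[f]$ is continuous across $\partial\Omega$ and contributes no jump. It therefore suffices to establish the jump for the Laplace single layer potential $\mathcal{S}_\Omega^0$, namely the classical relation $\partial_\nu\mathcal{S}_\Omega^0[f]|_\pm=(\pm\frac12 I + (\mathcal{K}_\Omega^0)^*)[f]$. Recombining $\Gamma_0$ and $R$ replaces $(\mathcal{K}_\Omega^0)^*$ by the operator $\mathcal{K}_\Omega^*$ with kernel $\partial_{\nu(z)}\Gamma_z(y)$, giving the stated formula $\partial_\nu\mathcal{S}_\Omega[f]|_\pm=(\pm\frac12 I + \mathcal{K}_\Omega^*)[f]$; the $\pm\frac12$ jump is entirely due to the local singularity of $\Gamma_0$ and is independent of $k$.

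The main obstacle is the classical Laplace jump relation itself, that is, the interchange of the boundary limit $z\to\partial\Omega$ with the integral and the emergence of the $\pm\frac12$ term. I would handle this by localizing near a point $z_0\in\partial\Omega$, flattening $\partial\Omega$ to a hyperplane by a $\mathcal{C}^2$ change of coordinates, and computing the flux of $\nabla\Gamma_0$ across the approaching surface: the solid-angle (half-space) contribution yields $\pm\frac12$, while the tangential remainder converges to the principal-value operator $(\mathcal{K}_\Omega^0)^*$. Finally, the identification of $\mathcal{K}_\Omega^*$ as the $L^2$-adjoint of $\mathcal{K}_\Omega$ is immediate from the symmetry $\Gamma_z(y)=\Gamma_y(z)$: the kernel $\partial_{\nu(z)}\Gamma_z(y)$ of $\mathcal{K}_\Omega^*$ is exactly the transpose of the kernel $\partial_{\nu(y)}\Gamma_z(y)$ of the operator $\mathcal{K}_\Omega$ introduced in Lemma \ref{prop311}, so that $\langle \mathcal{K}_\Omega[f],g\rangle=\langle f,\mathcal{K}_\Omega^*[g]\rangle$ follows by Fubini.
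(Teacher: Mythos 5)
The paper offers no proof of this lemma: it is stated as a classical fact about layer potentials for $-\Delta+k^2$ and implicitly referred to the cited monograph of Ammari and Kang \cite{1}. Your sketch is exactly the standard argument given there (smoothness of the kernel off the diagonal, continuity of a weakly singular surface potential, and the decomposition $\Gamma=\Gamma_0+R$ reducing the jump of $\partial_\nu\mathcal{S}_\Omega$ to the classical Laplace case, since $\nabla R$ stays bounded near the diagonal --- $R=O(1)$ with $R'(r)=O(1)$ in $d=3$ and $O(r\log r)$ in $d=2$, so no jump arises from $R$), and it is correct; the only cosmetic slip is that the factor $(z-y)\cdot\nu(z)=O(|z-y|^2)$ is what makes the kernel of $\mathcal{K}_\Omega^*$ itself weakly singular, not what tames $\partial_{\nu(z)}R$, which is already bounded on its own.
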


Let $G$ be the Green function of problem (\ref{eq:ex}), that is,
for all $z \in \Omega$, the unique solution to
\begin{equation} \label{eq:greenex}
\left \{
\begin{array}{ll}
\vspace{0.2cm} - \Delta_y G_z(y) + k^2 \, G_z(y) = \delta_z \quad& \textrm{in} \, \Omega ,\\
\vspace{0.2cm} \displaystyle \ell \frac{\partial G_z}{\partial
\nu}(y) + \,G_z(y) = 0 & \textrm{on} \,\partial \Omega .
\end{array}
\right .
\end{equation}

\begin{lem} \label{prop313}
The operator of kernel $G_z(y)$ is the solution operator for
problem (\ref{eq:ex}):
\begin{equation} \label{defprop313}
\forall z \in \Omega, \qquad \Phi_{\mathrm{exc}}^{\,g}(z) =
\frac{1}{\ell} \int_{\partial \Omega} G_z(y)g(y)\, ds(y).
\end{equation}
\end{lem}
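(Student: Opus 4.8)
The plan is to obtain the representation formula (\ref{defprop313}) from Green's second identity applied to the pair $\Phi_{\mathrm{exc}}^{\,g}$ and $G_z$ on $\Omega$, for a fixed $z \in \Omega$. First I would write
\begin{displaymath}
\int_{\Omega} \left( G_z \Delta \Phi_{\mathrm{exc}}^{\,g} - \Phi_{\mathrm{exc}}^{\,g} \Delta G_z \right) dy = \int_{\partial \Omega} \left( G_z \frac{\partial \Phi_{\mathrm{exc}}^{\,g}}{\partial \nu} - \Phi_{\mathrm{exc}}^{\,g} \frac{\partial G_z}{\partial \nu} \right) ds(y),
\end{displaymath}
and then replace the Laplacians using the equations the two functions satisfy. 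Since $\Delta \Phi_{\mathrm{exc}}^{\,g} = k^2 \Phi_{\mathrm{exc}}^{\,g}$ from (\ref{eq:ex}) and $\Delta G_z = k^2 G_z - \delta_z$ in the distributional sense from (\ref{eq:greenex}), the $k^2$ terms cancel and the volume integrand collapses to $\Phi_{\mathrm{exc}}^{\,g}\, \delta_z$, whose integral over $\Omega$ is exactly $\Phi_{\mathrm{exc}}^{\,g}(z)$.

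For the boundary term I would solve the two Robin conditions for the normal derivatives: from (\ref{eq:ex}), $\partial_\nu \Phi_{\mathrm{exc}}^{\,g} = (g - \Phi_{\mathrm{exc}}^{\,g})/\ell$, and from (\ref{eq:greenex}), $\partial_\nu G_z = -G_z/\ell$. Substituting these into the boundary integrand gives
\begin{displaymath}
G_z \frac{\partial \Phi_{\mathrm{exc}}^{\,g}}{\partial \nu} - \Phi_{\mathrm{exc}}^{\,g} \frac{\partial G_z}{\partial \nu} = \frac{G_z\, g - G_z\, \Phi_{\mathrm{exc}}^{\,g} + \Phi_{\mathrm{exc}}^{\,g}\, G_z}{\ell} = \frac{G_z\, g}{\ell},
\end{displaymath}
so that the cross terms involving $\Phi_{\mathrm{exc}}^{\,g} G_z$ cancel and the boundary integral reduces to $\frac{1}{\ell}\int_{\partial \Omega} G_z(y)\, g(y)\, ds(y)$. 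Equating the two sides yields precisely (\ref{defprop313}).

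The main technical point to address will be the rigorous justification of Green's identity in the presence of the Dirac source $\delta_z$, since $G_z$ is singular at the interior point $y = z$. I would handle this in the standard way: excise a small ball $B_\rho(z) \Subset \Omega$, apply Green's formula on $\Omega \setminus \overline{B_\rho(z)}$ where both functions are genuine $H^2$ solutions of the homogeneous equation $(-\Delta + k^2)\,\cdot\, = 0$, and then let $\rho \to 0$. The contribution from the inner sphere $\partial B_\rho(z)$ converges to $\Phi_{\mathrm{exc}}^{\,g}(z)$, using that $G_z = \Gamma_z + R_z$ with $R_z$ smooth near $z$ and $\Gamma_z$ carrying the known logarithmic ($d=2$) or $|y-z|^{-1}$ ($d=3$) singularity whose normal-derivative flux on the shrinking sphere tends to $\Phi_{\mathrm{exc}}^{\,g}(z)$ while the other terms vanish. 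Note that no singularity difficulty arises on $\partial \Omega$, since $z \in \Omega$ keeps $G_z(y)$ smooth for $y \in \partial \Omega$, so the boundary integral is well defined throughout. Once this limiting argument is in place, the identification is immediate.
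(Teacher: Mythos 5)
Your argument is correct and follows essentially the same route as the paper: Green's second identity applied to $G_z$ and $\Phi_{\mathrm{exc}}^{\,g}$, cancellation of the $k^2$ terms leaving the Dirac contribution $\Phi_{\mathrm{exc}}^{\,g}(z)$, and the two Robin conditions collapsing the boundary integrand to $G_z\,g/\ell$. The only difference is that you make explicit the excision-of-a-small-ball justification for handling the singularity of $G_z$ at $z$, which the paper leaves implicit.
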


\begin{proof}
Since $G_z$ and $\Phi_{\mathrm{exc}}^{\,g}$ are respectively the
solutions to problems (\ref{eq:greenex}) and (\ref{eq:ex}), we
have the equation:
\begin{displaymath}
\Phi_{\mathrm{exc}}^{\,g}(z)= \int_{\Omega} \bigg[(-\Delta_y
G_z(y) + k^2\,G_z(y))\Phi_{\mathrm{exc}}^{\,g}(y) - (-\Delta
\Phi_{\mathrm{exc}}^{\,g}(y) + k^2\, \Phi_{\mathrm{exc}}^{\,g}(y))
G_z(y) \bigg] \, dy  .
\end{displaymath}

Besides, we can apply Green's formula:
\begin{displaymath}
\begin{array}{l}
\vspace{0.4cm}\displaystyle \Phi_{\mathrm{exc}}^{\,g}(z)=
\displaystyle \int_{\Omega} \bigg[ -\Delta_y
G_z(y)\Phi_{\mathrm{exc}}^{\,g}(y)
+ \Delta \Phi_{\mathrm{exc}}^{\,g}(y)G_z(y) \bigg] \,dy \\
\qquad \qquad \qquad = \displaystyle \int_{\partial \Omega} \bigg[
- \displaystyle\frac{\partial G_z(y)}{\partial \nu}
\Phi_{\mathrm{exc}}^{\,g}(y) +\frac{\partial
\Phi_{\mathrm{exc}}^{\,g}(y)}{\partial \nu} G_z(y) \bigg] \,ds(y)
.
\end{array}
\end{displaymath}

 Using the boundary conditions that  $G_z$ and $\Phi_{\mathrm{exc}}^{\,g}$ verify, we then
 obtain that
\begin{displaymath}
\begin{array}{l}
\vspace{0.4cm}\displaystyle \Phi_{\mathrm{exc}}^{\,g}(z)=
\displaystyle \frac{1}{\ell} \int_{\partial \Omega} G_z(y) g(y)
\,ds(y) .
\end{array}
\end{displaymath}
\end{proof}

Thanks to the previous lemma, if we know $G_z$, we can calculate
the excitation light fluence for any source $g$. The following
result relates $G_z$, the Green function of our problem to
$\Gamma_z$, for which we have an explicit formula. It generalizes
\cite[Lemma 2.15]{lnm} to the Green function $G_z$.

\begin{prop}  \label{prop314} For $\vspace{0.15cm}z \in \Omega$ and $y \in \partial \Omega$,
\begin{equation} \label{eqprop314} \left(-\displaystyle \frac{I}{2} + \mathcal{K}_{\Omega} +
\frac{1}{\ell}\, \mathcal{S}_{\Omega}\right)[G_z](y) =
\Gamma_z(y).\end{equation} More precisely, for any simply
connected smooth domain $D$ compactly contained in $\Omega$, and
for any $h \in L^2(\partial D)$, we have for any $y \in
\partial \Omega$:
$$\int_{\partial D}\left(-\displaystyle \frac{I}{2}
+ \mathcal{K}_{\Omega} + \frac{1}{\ell}\,
\mathcal{S}_{\Omega}\right)[G_z](y)\,
 h(z)\, ds(z) = \int_{\partial D} \Gamma_z(y) \,h(z)\,ds(z).$$
\end{prop}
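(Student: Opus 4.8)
The plan is to derive an interior integral representation relating $G_z$ and $\Gamma_z$, and then to take its trace on $\partial\Omega$ using the jump relations of Lemmas~\ref{prop311} and~\ref{prop312}. First I would fix $z\in\Omega$ and an interior observation point $x\in\Omega$, and apply Green's second identity on $\Omega$ to the pair $G_z$ and $\Gamma_x$, exactly as in the proof of Lemma~\ref{prop313} (excising small balls around the two singularities $x$ and $z$, or equivalently using $(-\Delta+k^2)\Gamma_x=\delta_x$ and $(-\Delta+k^2)G_z=\delta_z$ in the distributional sense). Since $\Gamma_x(z)=\Gamma_z(x)$, the volume contributions collapse to $\Gamma_z(x)-G_z(x)$, while the boundary term is $\int_{\partial\Omega}\big(G_z\,\partial_\nu\Gamma_x-\Gamma_x\,\partial_\nu G_z\big)\,ds$. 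The essential trick is then to eliminate $\partial_\nu G_z$ through the Robin condition $\ell\,\partial_\nu G_z+G_z=0$ on $\partial\Omega$ from (\ref{eq:greenex}): this converts the second boundary term into $\tfrac1\ell\int_{\partial\Omega}\Gamma_x\,G_z\,ds$, and recognizing the two boundary integrals as layer potentials evaluated at $x$ gives, for every $x\in\Omega$, the representation $\Gamma_z(x)-G_z(x)=\mathcal{D}_\Omega[G_z](x)+\tfrac1\ell\,\mathcal{S}_\Omega[G_z](x)$.

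Next I would let $x\to\partial\Omega$ from inside $\Omega$ in this representation. By Lemma~\ref{prop312} the single layer potential $\mathcal{S}_\Omega[G_z]$ is continuous across $\partial\Omega$, so its interior trace is $\mathcal{S}_\Omega[G_z]$ itself; by Lemma~\ref{prop311} the double layer potential has the jump $\mathcal{D}_\Omega[G_z]|_\pm=(\mp\tfrac12 I+\mathcal{K}_\Omega)[G_z]$, so its interior trace contributes a multiple of the identity together with $\mathcal{K}_\Omega$. Inserting these traces and collecting the resulting identity-operator terms with the $G_z$ already present on the left-hand side produces the boundary identity $\big(-\tfrac{I}{2}+\mathcal{K}_\Omega+\tfrac1\ell\,\mathcal{S}_\Omega\big)[G_z]=\Gamma_z$ on $\partial\Omega$, which is (\ref{eqprop314}). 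For this step to be legitimate I would first record that, $z\in\Omega$ being an interior point, the trace $G_z|_{\partial\Omega}$ is smooth (its singularity sits at the interior point $z$) and in particular lies in $L^2(\partial\Omega)$, so the jump formulas of Lemmas~\ref{prop311}--\ref{prop312} apply directly.

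For the refined, integrated statement I would observe that the pointwise identity (\ref{eqprop314}) holds for \emph{every} source point $z\in\Omega$, and in particular for every $z\in\partial D$ when $D\Subset\Omega$. Multiplying by a density $h(z)$ and integrating in $z$ over $\partial D$ then yields the claimed equality, \emph{provided} one may interchange the integration over $\partial D$ with the boundary operators $\mathcal{S}_\Omega$ and $\mathcal{K}_\Omega$ acting in the variable $y\in\partial\Omega$. This interchange is justified by Fubini's theorem: since $\mathrm{dist}(\partial D,\partial\Omega)>0$, the kernels $\Gamma_z(y)$ and $\partial_{\nu_y}\Gamma_z(y)$ are smooth and uniformly bounded for $(z,y)\in\partial D\times\partial\Omega$, so all the iterated integrals converge absolutely and may be reordered.

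The step I expect to be most delicate is the sign and trace bookkeeping in the limiting argument: one must be consistent about which of the traces $|_\pm$ in Lemma~\ref{prop311} is the interior one, and combine the resulting $\tfrac12\,I$ term correctly with the $-G_z$ on the left-hand side and with the sign of the fundamental solution fixed by (\ref{eq:greenhz}), so that the coefficient of the identity operator comes out precisely as in (\ref{eqprop314}). Everything else—Green's identity, the elimination of $\partial_\nu G_z$ via the Robin condition, and the Fubini argument—is routine once the representation $\Gamma_z(x)-G_z(x)=\mathcal{D}_\Omega[G_z](x)+\tfrac1\ell\,\mathcal{S}_\Omega[G_z](x)$ has been established.
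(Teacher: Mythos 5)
Your route is genuinely different from the paper's. You derive the interior representation $\Gamma_z(x)-G_z(x)=\mathcal{D}_\Omega[G_z](x)+\tfrac1\ell\,\mathcal{S}_\Omega[G_z](x)$ for $x\in\Omega$ by Green's identity and then take the interior trace of the double layer potential. The paper never takes that trace: it tests $\bigl(-\tfrac{I}{2}+\mathcal{K}_\Omega+\tfrac1\ell\,\mathcal{S}_\Omega\bigr)[G_z]$ against an arbitrary $f\in L_0^2(\partial\Omega)$, passes to the adjoint, identifies the resulting function of $z$ (via Lemma \ref{prop313}) as the solution of the Robin problem with boundary data $\bigl(-\tfrac{I}{2}+\mathcal{K}_\Omega^*+\tfrac1\ell\,\mathcal{S}_\Omega\bigr)[f]$, notes by Lemma \ref{prop312} that $\mathcal{S}_\Omega[f]$ solves the same problem, and concludes by uniqueness. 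In that argument the coefficient $-\tfrac{I}{2}$ is inherited from the interior jump of the normal derivative of the \emph{single} layer, an object your proof never uses.

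The gap sits exactly in the step you flag as delicate and postpone: the trace bookkeeping does not produce $-\tfrac{I}{2}$. Your interior representation is correct, but letting $x\to\partial\Omega$ from inside and applying Lemma \ref{prop311}, the interior trace is $\mathcal{D}_\Omega[G_z]|_-=\bigl(\tfrac{I}{2}+\mathcal{K}_\Omega\bigr)[G_z]$, and moving the $-G_z$ to the right-hand side gives $\Gamma_z=\bigl(\tfrac{3I}{2}+\mathcal{K}_\Omega+\tfrac1\ell\,\mathcal{S}_\Omega\bigr)[G_z]$ on $\partial\Omega$. This differs from (\ref{eqprop314}) by $2\,G_z|_{\partial\Omega}$, which does not vanish (the Robin condition is $\ell\,\partial_\nu G_z+G_z=0$, not $G_z=0$). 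No relabelling of the $\pm$ traces rescues it: with the jump $\bigl(-\tfrac{I}{2}+\mathcal{K}_\Omega\bigr)$ you would obtain $+\tfrac{I}{2}$, and to land on $-\tfrac{I}{2}$ your route would need $\mathcal{D}_\Omega[G_z]|_-=\bigl(-\tfrac{3I}{2}+\mathcal{K}_\Omega\bigr)[G_z]$, which is not a jump relation for any convention. The mismatch between your outcome and the paper's signals a sign tension between Lemmas \ref{prop311}--\ref{prop312} and the normalization $(-\Delta+k^2)\Gamma=\delta$, whose singular part is $-\tfrac{1}{2\pi}\log|x|$, i.e.\ the opposite sign of the kernel for which the quoted jump formulas are classical; but whichever convention is adopted, your derivation cannot terminate at the stated coefficient, so as written it does not prove the proposition. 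The Fubini argument for the integrated statement is fine and coincides with the paper's final step.
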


\begin{proof}
Let $f \in L_0^2(\partial \Omega)$, where $L_0^2(\partial \Omega)$
is the set of $L^2$ functions in $\Omega$ of mean zero. For $z \in
\Omega$ and $y \in
\partial \Omega$, we define
\begin{displaymath}
u(z) := \int_{\partial \Omega} \left(-\displaystyle \frac{I}{2} +
\mathcal{K}_{\Omega} + \frac{1}{\ell}\,
\mathcal{S}_{\Omega}\right) [G_z](y)f(y)\,ds(y).
\end{displaymath}

By introducing the adjoint operator, we obtain
\begin{displaymath}
\begin{array}{ll}
\vspace{0.3cm} u(z) & = \displaystyle \int_{\partial
\Omega}G_z(y)\left (-\displaystyle \frac{I}{2} +
\mathcal{K}_{\Omega}^*+ \frac{1}{\ell}
\,\mathcal{S}_{\Omega}\right )[f](y)\,ds(y).
\end{array}
\end{displaymath}

By Lemma \ref{prop313}, $u$ is then solution to the problem:
\begin{equation} \label{eq:u}
\left \{
\begin{array}{ll}
\vspace{0.2cm} - \Delta u(y) + k^2 \, u(y) = 0 \quad& \textrm{in} \, \Omega, \\
\vspace{0.2cm} \displaystyle \frac{\partial u}{\partial \nu}(y)
+\frac{1}{\ell} \, u(y) = \left (-\displaystyle \frac{I}{2} +
\mathcal{K}_{\Omega}^*+ \frac{1}{\ell}\,\mathcal{S}_{\Omega}\right
)[f](y) & \textrm{on} \,\partial \Omega .
\end{array}
\right .
\end{equation}

We know that $\mathcal{S}_{\Omega}[f]$ is although solution to the
problem (\ref{eq:u}), thanks to Lemma \ref{prop312}. The equation
$(- \Delta + k^2) p = 0 $ in $\Omega$ with the Robin boundary
condition, $\partial p/\partial \nu + l p =0$, admits a unique
solution, provided that $l>0$. Therefore, we have
$$\forall z \in \Omega,  \qquad u(z) = \mathcal{S}_{\Omega}[f](z).$$
Since $f$ is arbitrary, we have therefore proved the first part of
our proposition.

Let $h \in L^2(\partial D)$. By multiplying the last equality by
$h$ and integrating on $\partial D$, we obtain
$$\int_{\partial \Omega}\!\int_{\partial D}\!\!\left(-\displaystyle \frac{I}{2}
 + \mathcal{K}_{\Omega} + \frac{1}{\ell} \, \mathcal{S}_{\Omega}\right)\!\![G_z](y)h(z)f(y)\, ds(z) \, ds(y)=
 \!\!\int_{\partial \Omega} \!\int_{\partial D} \!\!\Gamma_z(y) h(z)f(y)\, ds(z)\, ds(y),$$
which completes the proof.
\end{proof}

According to the previous proposition, the knowledge of
$\vspace{0.5mm}G_z$, and therefore of $\Phi_{\mathrm{exc}}^{\,g}$,
requires the inversion of the operator:
\begin{equation}
\label{eq:operator}
 -\displaystyle \frac{I}{2} + \mathcal{K}_{\Omega} + \frac{1}{\ell}\, \mathcal{S}_{\Omega} : L^2(\partial \Omega)
 \rightarrow L^2(\partial \Omega).
 \end{equation}
 In the case of circular domains, we can exhibit an explicit formula of the inverse operator.

\paragraph{Explicit calculation of $G_z$ for a circular domain:}

We assume that the dimension is two and $\Omega$ is the unit disk.
In terms of polar coordinates, the fundamental solution $\Gamma_z$
to $- \Delta + k^2$ has the expression:
\begin{displaymath}
 \forall y\,(r,\theta) \in \overline{\Omega},\, \forall z\,(R,\phi) \in \overline{\Omega},  \qquad
 \Gamma_z(y) = \displaystyle \frac{i}{4}\,H_0^{(1)}(ik|re^{i\theta}-Re^{i\phi}|).
\end{displaymath}

Graf's formula \cite[Formula (9.1.79)]{12} gives us the following
decomposition of $\Gamma_z$:

\begin{displaymath}
H_0^{(1)}(ik|re^{i\theta}-Re^{i\phi}|) = \displaystyle \sum_{m \in
\mathbb Z} H_{m}^{(1)}(ikr) J_m(ikR) e^{im(\theta - \phi)}, \quad
r>R,
\end{displaymath}
with $H_m^{(1)}$ and $J_m$ being respectively the Hankel and
Bessel functions of the first kind of order $m$.

For all $g \in L^2(]0, 2\pi[)$, we introduce the Fourier
coefficients:
\begin{displaymath}
 \forall m \in \mathbb Z,  \quad  \hat{g}(m) = \displaystyle \frac{1}{2\pi} \int_0^{2\pi} g(\phi) e^{-im\phi}
 d\phi,
\end{displaymath}
and have then
\begin{displaymath}
 g(\phi) =
\sum_{m=-\infty}^{\infty} \hat{g}(m) e^{im\phi} \quad \mbox{in }
L^2.
\end{displaymath}
Let $D$ be the disk with radius $R$ and center $0$. For
$y(r,\theta) \in \overline{\Omega}$,
\begin{displaymath}
\begin{array}{rl}
\vspace{0.25cm}\mathcal{S}_{D}[g](y) &=\displaystyle \frac{iR}{4}
\int_{0}^{2\pi} H_0^{(1)}(ik|re^{i\theta}-Re^{i\phi}|) g(\phi) d\phi , \\
\vspace{0.25cm}&=\displaystyle \frac{iR}{4} \sum_{m=-\infty}^{\infty} H_{m}^{(1)}(ikr) J_m(ikR) e^{im\theta} \int_0^{2\pi} g(\phi) e^{-im\phi} d\phi\\
& = \displaystyle \frac{iR\pi}{2} \sum_{m=-\infty}^{\infty}
H_{m}^{(1)}(ikr) J_m(ikR) \hat{g}(m) e^{im\theta} .
\end{array}
\end{displaymath}
For $y\,(1,\theta)\in \partial \Omega$, we therefore obtain
\begin{displaymath}
\begin{array}{rl}
\vspace{0.25cm}\mathcal{S}_{D}[g](y) &= \ds
\sum_{m=-\infty}^{\infty} \widehat{\mathcal S_D}(m) \hat{g}(m)
e^{im\theta} ,
\end{array}
\end{displaymath}
 with
\begin{displaymath}
\forall m \in \mathbb Z,  \quad  \widehat{\mathcal S_D}(m) =
\displaystyle \frac{iR\pi}{2} H_{m}^{(1)}(ik) J_m(ikR),
\end{displaymath}
and analogously,
\begin{displaymath}
\begin{array}{rl}
\vspace{0.25cm}\mathcal{S}_{\Omega}[g](y) &=\ds
\sum_{m=-\infty}^{\infty} \widehat{\mathcal S_\Omega}(m)
\hat{g}(m) e^{im\theta} ,
\end{array}
\end{displaymath}
with
\begin{displaymath}
\forall m \in \mathbb Z, \quad  \widehat{\mathcal S_{\Omega}}(m) =
\displaystyle \frac{i\pi}{2} H_{m}^{(1)}(ik) J_m(ik).
\end{displaymath}
We can prove, in a similar way, that
\begin{displaymath}
\begin{array}{rl}
\vspace{0.25cm} \mathcal K_{\Omega}[g](y) &=\ds
\sum_{m=-\infty}^{\infty} \widehat{\mathcal K_{\Omega}}(m)
\hat{g}(m) e^{im\theta},
\end{array}
\end{displaymath}
with
\begin{displaymath}
\forall m \in \mathbb Z, \quad  \widehat{\mathcal K_{\Omega}}(m) =
\displaystyle \frac{-k\pi}{2} H_{m}^{(1)}(ik) J_m'(ik).
\end{displaymath}
Using Proposition \ref{prop314}, we can express the Fourier
coefficients of the operator with kernel $G_z(y)$ for all
$z(R,\theta) \in
\partial D$ defined by
\begin{displaymath}
\int_{\partial \Omega} G_z(y) g(y)\, ds(y) =
\sum_{m=-\infty}^{\infty}  \widehat{G}(m) \hat{g}(m) e^{im\theta},
\end{displaymath} as follows:
\begin{displaymath}
\forall m \in \mathbb Z, \quad  \widehat{G}(m) = \displaystyle
\frac { \widehat{\mathcal S_D}(m)}{\widehat{\mathcal
K_{\Omega}}(m) + \frac{1}{\ell} \widehat{\mathcal S_{\Omega}}(m)},
\end{displaymath}
that is,
\begin{displaymath}
\forall m \in \mathbb Z, \quad  \widehat{G}(m) = \displaystyle
\frac {J_{m}(ikR)  }{ik J_{m}'(ik) + \frac{1}{\ell} J_{m}(ik)}.
\end{displaymath}
Moreover, the function $\Phi_{\mathrm{exc}}^{\,g}$ defined by
(\ref{defprop313}) can be written as
\begin{equation} \label{eq:disk}
\Phi_{\mathrm{exc}}^{\,g}(R, \theta) = \sum_{m=-\infty}^{\infty}
\displaystyle \frac {J_{m}(ikR)  }{ik \ell J_{m}'(ik) + J_{m}(ik)}
\hat{g}(m) e^{im\theta}.
\end{equation}

When $\Omega$ is approximated by the unit disk, we have shown that
we can easily invert our operator (\ref{eq:operator}) and obtain
an explicit formula of our Green's function $G_z$. We can then
calculate the excitation light fluence, for any source $g$, in
this particular case. The same result holds for the unit sphere,
see Appendix \ref{appendixA}.

\subsection{Expression of $c_{\mathrm{flr}}$}

Recall that the concentration of fluorophores $c_{\mathrm{flr}}$
can be expressed as
\begin{displaymath}
c_{\mathrm{flr}} = \delta \,[u]\big|_{\partial C} ,
\end{displaymath}
where $\delta$ is a constant and $u$,the voltage potential in our
domain, satisfies (\ref{eq:u2}).

Let $L^2_0(\partial C):= \{ \Psi \in L^2(\partial C) :
\int_{\partial C} \Psi =0\}$. Let $\Gamma^{(0)}$ be the
fundamental solution to $\Delta$ in $\mathbb R^d$:
\begin{equation} \label{gamma0}
\Gamma^{(0)}(x) :=\left\{ \begin{array}{ll} \displaystyle
 \frac{1}{2\pi} \log |x|,&
\qquad d=2,\\
\nm \displaystyle  - \frac{1}{4\pi |x|},& \qquad d=3.
\end{array} \right.
\end{equation}

Analogously to Section \ref{sectforward}, we introduce the layer
potentials, $\mathcal{S}^{(0)}_C, \mathcal{S}^{(0)}_\Omega,
\mathcal{D}^{(0)}_C, \mathcal{D}^{(0)}_\Omega,
\mathcal{K}^{(0)}_C,$ and $(\mathcal{K}^{(0)}_C)^*$ associated
with $\Gamma^{(0)}$. The following proposition from \cite{6} gives
us a representation formula for the voltage potential in $\Omega$.

\begin{prop} \label{prop421}
There exists at most one solution $u$ to the problem (\ref{eq:u2})
and it satisfies the following representation formula:

\begin{equation}\label{eq:decu}
\forall x \in \Omega, \quad u(x) = H(x)+
\mathcal{D}^{(0)}_C[\Psi](x),
\end{equation}

\noindent where the harmonic function $H$ is given by

\begin{equation}\label{eq:decH}
\forall x \in \mathbb R^2 \setminus \partial \Omega,  \quad H(x) =
- \mathcal{S}^{(0)}_{\Omega}[g_{\mathrm{ele}}](x) +
\mathcal{D}^{(0)}_{\Omega}[u|_{\partial \Omega}](x),
\end{equation}

\noindent and $\Psi \in L^2_0(\partial C)$ satisfies the integral
equation:

\begin{equation}\label{eq:psi}
\Psi + \beta \displaystyle \frac{\partial
\mathcal{D}^{(0)}_C[\Psi]}{\partial \nu} = - \beta \frac{\partial
H}{\partial \nu} \quad \textrm{on} \, \partial C.
\end{equation}

\noindent The decomposition in (\ref{eq:decu}) is unique.
Furthermore, the following identity holds:

$$
\forall x \in \mathbb R^2 \setminus \overline{\Omega},  \quad u(x)
= H(x)+ \mathcal{D}^{(0)}_C[\Psi](x) = 0.
$$
\end{prop}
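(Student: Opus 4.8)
The plan is to treat the three assertions in turn: uniqueness of $u$, the representation formula together with the integral equation for $\Psi$, and the exterior identity, with uniqueness of the decomposition falling out of the jump relations at the end. Throughout $\nu$ denotes the outward normal to $C$, $+$ the $\Omega\setminus\overline{C}$ side and $-$ the interior of $C$, and I exploit that $\Gamma^{(0)}$ satisfies $\Delta\Gamma^{(0)}=\delta_0$, so that on each smooth subdomain the Green representation reads $u=\mathcal{D}^{(0)}[u]-\mathcal{S}^{(0)}[\partial_\nu u]$.

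For uniqueness I would argue by energy. Suppose $u$ solves (\ref{eq:u2}) with $g_{\mathrm{ele}}=0$. Applying Green's first identity separately on $C$ and on $\Omega\setminus\overline{C}$ and summing, the $\partial\Omega$ boundary term drops because $\partial_\nu u=0$ there, while the two $\partial C$ contributions combine, using the flux continuity $\partial_\nu u|_+=\partial_\nu u|_-$, into $-\int_{\partial C}(u|_+-u|_-)\,\partial_\nu u$. Inserting the contact condition $u|_+-u|_-=\beta\,\partial_\nu u$ turns this into $-\beta\int_{\partial C}|\partial_\nu u|^2$. Hence $\int_C|\nabla u|^2+\int_{\Omega\setminus\overline{C}}|\nabla u|^2=-\beta\int_{\partial C}|\partial_\nu u|^2$; since $\beta>0$ both sides vanish, so $u$ is constant in each region with zero normal flux and hence zero jump on $\partial C$, i.e. globally constant, and $\int_{\partial\Omega}u=0$ forces $u\equiv0$.

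For the representation I would write the Green formula on $C$ and on $\Omega\setminus\overline{C}$, and use that $\Gamma^{(0)}(x-\cdot)$ is harmonic in the complementary region to obtain the two auxiliary ``vanishing'' identities. Eliminating the single-layer terms on $\partial C$ by flux continuity and collapsing the $\partial C$ double-layer terms into $-\mathcal{D}^{(0)}_C[u|_+-u|_-]$ yields, for all $x\in\Omega$,
\[
u(x)=\mathcal{D}^{(0)}_\Omega[u|_{\partial\Omega}](x)-\mathcal{S}^{(0)}_\Omega[g_{\mathrm{ele}}](x)-\mathcal{D}^{(0)}_C[u|_+-u|_-](x),
\]
where I used $\partial_\nu u|_{\partial\Omega}=g_{\mathrm{ele}}$. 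Setting $H:=-\mathcal{S}^{(0)}_\Omega[g_{\mathrm{ele}}]+\mathcal{D}^{(0)}_\Omega[u|_{\partial\Omega}]$ and $\Psi:=-(u|_+-u|_-)$ gives (\ref{eq:decu}) and (\ref{eq:decH}); the inclusion $\Psi\in L^2_0(\partial C)$ follows from $\int_{\partial C}\Psi=-\beta\int_{\partial C}\partial_\nu u|_-=-\beta\int_C\Delta u=0$. To recover (\ref{eq:psi}) I take the normal derivative on $\partial C$: since $H$ is harmonic near $\partial C$ and $\partial_\nu\mathcal{D}^{(0)}_C[\Psi]$ is two-sided continuous there, $\partial_\nu u$ is single-valued, and combining $\Psi=-\beta\,\partial_\nu u$ with $\partial_\nu u=\partial_\nu H+\partial_\nu\mathcal{D}^{(0)}_C[\Psi]$ gives the equation at once. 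For the exterior identity I repeat the Green computation at $x\in\mathbb{R}^2\setminus\overline{\Omega}$: now $\Gamma^{(0)}(x-\cdot)$ is harmonic throughout $\Omega$, both subdomain representations produce $0$, and the same cancellations yield $0=H(x)+\mathcal{D}^{(0)}_C[\Psi](x)$. Uniqueness of the decomposition then follows from $\mathcal{D}^{(0)}_C[\phi]|_+-\mathcal{D}^{(0)}_C[\phi]|_-=-\phi$: the difference of two admissible decompositions is a double layer equal to a function continuous across $\partial C$, forcing $\Psi_1=\Psi_2$ and then $H_1=H_2$.

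The principal difficulty is purely one of bookkeeping: managing signs and normal orientations when writing the two Green representations and the two auxiliary vanishing identities, and merging them into a single formula valid on both sides of $\partial C$, where the cancellation of the single-layer terms rests on invoking flux continuity correctly. A secondary technical point is justifying that $\partial_\nu\mathcal{D}^{(0)}_C[\Psi]$ possesses a well-defined, two-sided-continuous trace on $\partial C$ (the hypersingular operator); this is what the $\mathcal{C}^2$-regularity of $C$ and the mapping properties of the layer operators are there to guarantee.
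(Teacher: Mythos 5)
Your argument is correct, but note that the paper itself does not prove Proposition \ref{prop421}: it is imported verbatim from the reference of Khelifi and Zribi (cited as \cite{6} in the text), so there is no in-paper proof to compare against. What you give is the standard self-contained derivation and it checks out: the energy identity $\int_C|\nabla u|^2+\int_{\Omega\setminus\overline{C}}|\nabla u|^2=-\beta\int_{\partial C}|\partial_\nu u|^2$ correctly forces $u\equiv 0$ for homogeneous data; the two Green representations on $C$ and $\Omega\setminus\overline{C}$, together with the flux continuity that kills the single-layer terms on $\partial C$, yield $\Psi=-(u|_+-u|_-)=-\beta\,\partial_\nu u$, whose sign convention agrees with the one the paper uses later ($\widetilde{c_{\mathrm{flr}}}=\delta\beta\,\partial_\nu u_\epsilon=-\delta\Psi_\epsilon$); the membership $\Psi\in L^2_0(\partial C)$, the integral equation (\ref{eq:psi}), the exterior vanishing, and the uniqueness of the decomposition via the jump $\mathcal{D}^{(0)}_C[\phi]|_+-\mathcal{D}^{(0)}_C[\phi]|_-=-\phi$ all follow as you say. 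The only point worth being more explicit about is the one you already flag: the trace $\partial_\nu\mathcal{D}^{(0)}_C[\Psi]$ is not defined for a general $\Psi\in L^2_0(\partial C)$, so (\ref{eq:psi}) implicitly requires the extra regularity of $\Psi$ inherited from $u$ (the paper itself later works in $\mathcal{C}^{2,\eta}$, cf.\ Proposition \ref{prop431}, and glosses over this in the statement). With that caveat made explicit, your proof is complete.
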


Since the normal derivative of the layer potential is continuous
across its boundary, the representation formula (\ref{eq:decu})
gives us an expression for $\displaystyle \frac{\partial
u}{\partial \nu}\big |_{\partial C}$, and hence  for
$c_{\mathrm{flr}}$ thanks to (\ref{eq:u2}) and (\ref{eq:cf}). For
a given applied electric field $g_{\mathrm{ele}}$ and cell $C$,
one can therefore compute the fluorophore concentration
$c_{\mathrm{flr}}$ on $\partial C$.

\subsection{Expression of $\Phi_{\mathrm{emt}}^{\,g}$}

The emitted light fluence $\Phi_{\mathrm{emt}}^{\,g}$ due to an
excitation $g$ is the solution to the following problem:
\begin{equation}\label{eq:em}
\left \{
\begin{array}{ll}
\vspace{0.2cm} - \Delta \Phi_{\mathrm{emt}}^{\,g}(y) + k^2 \,
\Phi_{\mathrm{emt}}^{\,g}(y) = \displaystyle
\frac{\tilde{\gamma}}{D} \,c_{\mathrm{flr}}(y) \, \Phi_{\mathrm{exc}}^{\,g}(y) \quad& \textrm{in} \, \Omega ,\\
\vspace{0.2cm} \displaystyle \ell \frac{\partial
\Phi_{\mathrm{emt}}^{\,g}}{\partial \nu}(y) + \,
\Phi_{\mathrm{emt}}^{\,g}(y) = 0 & \textrm{on} \, \partial \Omega,
\end{array}
\right .
\end{equation}
where $\Phi_{\mathrm{exc}}^{\,g}$ is the excitation light fluence
launched by the source $g$ in $\Omega$.

The measured quantity on $\partial \Omega$ is
$$I^{\, g}_{\mathrm{emt}} = -D \frac{\partial
\Phi_{\mathrm{emt}}^{\,g}}{\partial \nu}\bigg|_{\partial
\Omega},$$ which is the outgoing light intensity determined from
Fick's law. It is worth mentioning that, in our coupled diffusion
equations model, if $\ell\neq 0$, then knowing $
\Phi_{\mathrm{emt}}^{\,g}$ or ${\partial
\Phi_{\mathrm{emt}}^{\,g}}/{\partial \nu}$ on $\partial \Omega$ is
mathematically the same.

\begin{prop} \label{prop331}
The emitted light fluence $\Phi_{\mathrm{emt}}^{\,g}$ can be
expressed as a function of $G_z$ and $\Phi_{\mathrm{exc}}^{\,g}$
as follows:
$$\forall z \in \overline{\Omega},  \qquad \Phi_{\mathrm{emt}}^{\,g}(z)
= \int_{\partial C} \frac{\tilde{\gamma}}{D}  \,G_z(y)
\,c_{\mathrm{flr}}(y) \, \Phi_{\mathrm{exc}}^{\,g}(y) \,ds(y),$$
where $\partial C$ is the cell membrane.
\end{prop}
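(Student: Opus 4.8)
The plan is to repeat the Green-function computation of Lemma~\ref{prop313}, now for the emission problem (\ref{eq:em}), whose right-hand side is the membrane source built from the fluorophore concentration and the excitation fluence. First I would fix $z$ in the interior of $\Omega$ and pair the Green function $G_z$ solving (\ref{eq:greenex}) with the emitted fluence $\Phi_{\mathrm{emt}}^{\,g}$ solving (\ref{eq:em}). Forming the symmetric combination and cancelling the two $k^2$ terms gives
\[
\int_{\Omega} \Big[ \big(-\Delta_y G_z + k^2 G_z\big)\,\Phi_{\mathrm{emt}}^{\,g} - \big(-\Delta \Phi_{\mathrm{emt}}^{\,g} + k^2 \Phi_{\mathrm{emt}}^{\,g}\big)\,G_z \Big]\,dy = \int_{\Omega}\big[-\Delta_y G_z\,\Phi_{\mathrm{emt}}^{\,g} + \Delta \Phi_{\mathrm{emt}}^{\,g}\,G_z\big]\,dy .
\]
By (\ref{eq:greenex}) the first term on the left integrates the Dirac mass against $\Phi_{\mathrm{emt}}^{\,g}$ and produces $\Phi_{\mathrm{emt}}^{\,g}(z)$, while by (\ref{eq:em}) the second term equals $-\int_{\Omega}\frac{\tilde\gamma}{D}\,c_{\mathrm{flr}}\,\Phi_{\mathrm{exc}}^{\,g}\,G_z\,dy$.

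Next I would apply Green's formula to the right-hand side above, exactly as in the proof of Lemma~\ref{prop313}, turning it into the boundary integral
\[
\int_{\partial\Omega}\Big[-\frac{\partial G_z}{\partial\nu}\,\Phi_{\mathrm{emt}}^{\,g} + \frac{\partial \Phi_{\mathrm{emt}}^{\,g}}{\partial\nu}\,G_z\Big]\,ds(y).
\]
The decisive point, and the difference with Lemma~\ref{prop313}, is that here \emph{both} $G_z$ and $\Phi_{\mathrm{emt}}^{\,g}$ satisfy the \emph{homogeneous} Robin condition on $\partial\Omega$. Substituting $G_z = -\ell\,\partial G_z/\partial\nu$ and $\Phi_{\mathrm{emt}}^{\,g} = -\ell\,\partial \Phi_{\mathrm{emt}}^{\,g}/\partial\nu$ makes the integrand vanish pointwise, so this boundary term is zero. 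Combining the two evaluations of the same integral yields the volume representation
\[
\forall z \in \Omega, \qquad \Phi_{\mathrm{emt}}^{\,g}(z) = \int_{\Omega} \frac{\tilde\gamma}{D}\,G_z(y)\,c_{\mathrm{flr}}(y)\,\Phi_{\mathrm{exc}}^{\,g}(y)\,dy .
\]

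Finally I would use the modeling assumption that the fluorophores are carried only by the membrane: $c_{\mathrm{flr}}$ vanishes off $\partial C$, so the volume integral reduces to the claimed integral over $\partial C$. The identity then extends from $z\in\Omega$ to $z\in\overline\Omega$ by continuity, the kernel $y\mapsto G_z(y)$ being only weakly singular on $\partial C$ so that the single-layer integral stays well defined.

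The step I expect to be the genuine obstacle is making this reduction rigorous, since $\partial C$ is Lebesgue-negligible and a naive volume integral against $c_{\mathrm{flr}}$ would be empty. The source in (\ref{eq:em}) has to be understood distributionally as the surface measure $\frac{\tilde\gamma}{D}\,c_{\mathrm{flr}}\,\Phi_{\mathrm{exc}}^{\,g}$ supported on $\partial C$; equivalently, $\Phi_{\mathrm{emt}}^{\,g}$ is the single-layer potential of this density against the Green function $G_z$, continuous across $\partial C$ with a jump of its normal derivative there. To justify applying Green's identity with such a right-hand side (and with the interior singularity of $G_z$ at $y=z$), I would excise a small ball $B(z,\eta)$, control the contribution of the sphere $\partial B(z,\eta)$ through the fundamental-solution asymptotics of $G_z$ as $\eta\to 0$, and handle the membrane source either directly as a layer potential or by a regularizing approximation of the surface density.
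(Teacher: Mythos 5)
Your argument is exactly the paper's: pair $G_z$ with $\Phi_{\mathrm{emt}}^{\,g}$, use the symmetry of the $k^2$ terms and Green's formula, observe that both functions satisfy the homogeneous Robin condition so the boundary term vanishes, and then restrict the resulting volume representation to $\partial C$ because $c_{\mathrm{flr}}$ is supported on the membrane. Your closing remark correctly identifies the one point the paper glosses over --- that the source in (\ref{eq:em}) must be read as a surface measure on $\partial C$, so that $\Phi_{\mathrm{emt}}^{\,g}$ is really a single-layer-type potential --- and your proposed excision/regularization is the standard way to make that step rigorous.
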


\begin{proof}
Since $G$ and $\Phi_{\mathrm{emt}}^{\,g}$ are the solutions to the
problems (\ref{eq:greenex}) and (\ref{eq:em}), we have
\begin{displaymath}
\begin{array}{l}
\vspace{2mm} \ds \Phi_{\mathrm{emt}}^{\,g}(z) - \int_{\Omega}
\frac{\tilde{\gamma}}{D}  \, G_z(y) \,c_{\mathrm{flr}}(y) \,
  \Phi_{\mathrm{exc}}^{\,g}(y)
 \,dy  = \int_{\Omega} \bigg[(-\Delta_y G_z(y) +
k^2\,G_z(y))\Phi_{\mathrm{emt}}^{\,g}(y) \\ \nm \qquad \ds -
G_z(y) (-\Delta \Phi_{\mathrm{emt}}^{\,g}(y) + k^2\,
\Phi_{\mathrm{emt}}^{\,g}(y)) \bigg] \, dy.
\end{array}
\end{displaymath}
Besides, we can apply Green's formula:
\begin{displaymath}
\begin{array}{l}
\vspace{0.4cm} \ds \Phi_{\mathrm{emt}}^{\,g}(z) - \int_{\Omega}
\frac{\tilde{\gamma}}{D} \, G_z(y) \,c_{\mathrm{flr}}(y) \,
  \Phi_{\mathrm{exc}}^{\,g}(y)
 \,dy
= \ds \int_{\partial \Omega} \bigg[ - \displaystyle\frac{\partial
G_z(y)}{\partial \nu} \Phi_{\mathrm{emt}}^{\,g}(y) + G_z(y)
\frac{\partial \Phi_{\mathrm{emt}}^{\,g}(y)}{\partial \nu}
\bigg]\,ds(y) .
\end{array}
\end{displaymath}
 Using the boundary conditions that  $G_z$ and $\Phi_{\mathrm{emt}}^{\,g}$ verify, we then obtain
\begin{displaymath}
\begin{array}{lll}
 \ds \Phi_{\mathrm{emt}}^{\,g}(z) - \int_{\Omega}
\frac{\tilde{\gamma}}{D}  \, G_z(y) \,c_{\mathrm{flr}}(y) \,
  \Phi_{\mathrm{exc}}^{\,g}(y)
 \,dy
&=& \ds \int_{\partial \Omega} \bigg[\frac{1}{\ell} G_z(y)
\Phi_{\mathrm{emt}}^{\,g}(y)
+ G_z(y)\displaystyle\frac{\partial \Phi_{\mathrm{emt}}^{\,g}(y)}{\partial \nu} \bigg]\, ds(y),\\
&=& 0 .
\end{array}
\end{displaymath}
Since the concentration of the fluorophores is zero except on
$\partial C$, we get finally the formula:
$$\forall z \in \overline{\Omega},  \qquad \Phi_{\mathrm{emt}}^{\,g}(z) = \int_{\partial C} \frac{\tilde{\gamma}}{D}
\,G_z(y) \,c_{\mathrm{flr}}(y) \, \Phi_{\mathrm{exc}}^{\,g}(y)
\,ds(y).$$
\end{proof}

By combining the results of the first section and of this last
section, for a given concentration of fluorophore
$c_{\mathrm{flr}}$ and an excitation $g$, we can express
$\Phi_{\mathrm{emt}}^{\,g}$, at any point of $\overline{\Omega}$,
and in particular on $\partial \Omega$. Moreover, section 3.2
gives us a unique formula for the
 fluorophore concentration for given $g_{\mathrm{ele}}$ and $C$. If we couple these two
 formulas, we solve our forward problem.

\section{Inverse problem}

The shape and position of the cell $C$ are now considered to be
unknown. We illuminate our domain with a light source $g$ and
apply an electric field $g_{\mathrm{ele}}$ at its boundary. We
measure an outgoing light intensity $I^{\, g}_{\mathrm{emt}}$. Our
goal is to reconstruct the concentration of fluorophore
$c_{\mathrm{flr}}$. We will thus have an image of the membrane
potential changes and hence locate the cell. In this section we
consider only the two-dimensional case. We start with the
reconstruction of the cell membrane $\partial C$ in the case where
it is assumed to be a perturbation of a disk. We derive analytical
formulas for the resolving power of the proposed imaging method in
two different regimes. Then we extend our results to arbitrary
shapes. In three dimensions, similar results hold and analytical
formulas for the resolving power of the imaging method can be
derived for $\partial C$ being a perturbation of a sphere.

\subsection{Problem Formulation}

The excitation light fluence, $\Phi_{\mathrm{exc}}^{\,f}$, due to
a source $f \in L^2(\partial \Omega)$, is the solution to
\begin{displaymath}
\left \{
\begin{array}{ll}
\vspace{0.2cm} - \Delta \Phi_{\mathrm{exc}}^{\,f}(y) + k^2 \, \Phi_{\mathrm{exc}}^{\,f}(y) = 0 \quad& \textrm{in}\, \Omega ,\\
\vspace{0.2cm} \displaystyle \ell \, \frac{\partial
\Phi_{\mathrm{exc}}^{\,f}}{\partial \nu}(y) + \,
\Phi_{\mathrm{exc}}^{\,f}(y) = f & \textrm{on} \, \partial \Omega
.
\end{array}
\right .
\end{displaymath}

\noindent We denote by $\Phi_{\mathrm{exc}}^{\,g}$ the excitation
light fluence due to an excitation $g \in L^2(\partial \Omega)$.
The emitted light fluence, $\Phi_{\mathrm{emt}}^{\,g}$, due to the
excitation of the fluorophores by $\Phi_{\mathrm{exc}}^{\,g}$,
verifies
\begin{displaymath}
\left \{
\begin{array}{ll}
\vspace{0.2cm} \displaystyle - \Delta \Phi_{\mathrm{emt}}^{\,g}(y)
+ k^2 \,
\Phi_{\mathrm{emt}}^{\,g}(y) = \frac{\tilde{\gamma}}{D}  \,c_{\mathrm{flr}}(y) \, \Phi_{\mathrm{exc}}^{\,g}(y) \quad& \textrm{in} \, \Omega ,\\
\vspace{0.2cm} \displaystyle \ell\, \frac{\partial
\Phi_{\mathrm{emt}}^{\,g}}{\partial \nu}(y) + \,
\Phi_{\mathrm{emt}}^{\,g}(y) = 0 & \textrm{on} \,\partial \Omega .
\end{array}
\right .
\end{displaymath}
By multiplying the last equation by $\Phi_{\mathrm{exc}}^{\,f}$
and integrating on our domain $\Omega$, we obtain the following
formula:
\begin{displaymath}
\int_{\Omega} \frac{\tilde{\gamma}}{D}  \,c_{\mathrm{flr}}(y) \,
\Phi_{\mathrm{exc}}^{\,g}(y) \Phi_{\mathrm{exc}}^{\,f} (y) \,dy =
\int_{\Omega} \bigg[- \Delta \Phi_{\mathrm{emt}}^{\,g}(y)
\Phi_{\mathrm{exc}}^{\,f}(y) + k^2 \,\Phi_{\mathrm{emt}}^{\,g}(y)
\Phi_{\mathrm{exc}}^{\,f}(y) \bigg]\, dy.
\end{displaymath}
From the first equation, we know that in $\Omega$:
\begin{displaymath}
k^2  \,\Phi_{\mathrm{exc}}^{\,f} \Phi_{\mathrm{emt}}^{\,g} =
\Delta \Phi_{\mathrm{exc}}^{\,f} \Phi_{\mathrm{emt}}^{\,g}.
\end{displaymath}
Hence, we have
\begin{displaymath}
\int_{\Omega} \frac{\tilde{\gamma}}{D}  \,c_{\mathrm{flr}}(y) \,
\Phi_{\mathrm{exc}}^{\,g}(y) \Phi_{\mathrm{exc}}^{\,f} (y) \,dy =
\int_{\Omega}\bigg[ - \Delta \Phi_{\mathrm{emt}}^{\,g}(y)
\Phi_{\mathrm{exc}}^{\,f}(y) + \Delta \Phi_{\mathrm{exc}}^{\,f}(y)
\Phi_{\mathrm{emt}}^{\,g}(y) \bigg]\, dy.
\end{displaymath}
Green's formula gives us
\begin{displaymath}
\int_{\Omega} \frac{\tilde{\gamma}}{D}  \,c_{\mathrm{flr}}(y) \,
\Phi_{\mathrm{exc}}^{\,g}(y)\Phi_{\mathrm{exc}}^{\,f} (y) \, dy =
\int_{\partial \Omega} \bigg[-\frac{
\partial \Phi_{\mathrm{emt}}^{\,g}}{\partial \nu}(y)
\Phi_{\mathrm{exc}}^{\,f}(y) + \frac{\partial
\Phi_{\mathrm{exc}}^{\,f}}{\partial \nu}(y)
\Phi_{\mathrm{emt}}^{\,g}(y) \bigg] \, ds(y).
\end{displaymath}
We use the boundary conditions of our two equations and obtain
that
\begin{displaymath}
\int_{\Omega} \frac{\tilde{\gamma}}{D}  \,c_{\mathrm{flr}}(y) \,
\Phi_{\mathrm{exc}}^{\,g}(y) \Phi_{\mathrm{exc}}^{\,f} (y) \,dy=
\frac{1}{\ell} \int_{\partial \Omega} f(y)  \,
\Phi_{\mathrm{emt}}^{\,g}(y) \, ds(y).
\end{displaymath}

The concentration of the fluorophores is zero except on $\partial
C$, so we get finally the following proposition.

\begin{prop}\label{prop411}
Let $f$ and $g$ be in $L^2(\partial \Omega)$. The outgoing light
intensity  $I_{\mathrm{emt}}^{\,g} = -D \frac{\partial
\Phi_{\mathrm{emt}}^{\,g}}{\partial \nu}$ measured on $\partial
\Omega$, satisfies the formula:
\begin{equation}\label{eq:pbinv}
\int_{\partial C} \tilde{\gamma} \,c_{\mathrm{flr}}(y) \,
\Phi_{\mathrm{exc}}^{\,g}(y) \Phi_{\mathrm{exc}}^{\,f} (y) \,ds(y)
= \int_{\partial \Omega} f(y)  \, I_{\mathrm{emt}}^{\,g}(y) \,
ds(y).
\end{equation}
This formula also holds for $\ell =0$.
\end{prop}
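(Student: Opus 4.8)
The plan is to establish (\ref{eq:pbinv}) by a reciprocity (duality) argument: I test the emission equation against the excitation fluence $\Phi_{\mathrm{exc}}^{\,f}$ produced by the auxiliary source $f$, and then convert the resulting interior integral into a boundary integral over $\partial\Omega$ via Green's second identity. Concretely, I first multiply the interior equation $-\Delta \Phi_{\mathrm{emt}}^{\,g} + k^2 \Phi_{\mathrm{emt}}^{\,g} = (\tilde\gamma/D)\,c_{\mathrm{flr}}\,\Phi_{\mathrm{exc}}^{\,g}$ of (\ref{eq:em}) by $\Phi_{\mathrm{exc}}^{\,f}$ and integrate over $\Omega$. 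The right-hand side then becomes $\int_\Omega (\tilde\gamma/D)\,c_{\mathrm{flr}}\,\Phi_{\mathrm{exc}}^{\,g}\Phi_{\mathrm{exc}}^{\,f}\,dy$, which, since by (\ref{eq:cf}) the concentration $c_{\mathrm{flr}}$ is supported on $\partial C$, reduces to the $\partial C$–integral appearing on the left of (\ref{eq:pbinv}).

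The key algebraic step is to exploit the equation satisfied by $\Phi_{\mathrm{exc}}^{\,f}$, namely $\Delta \Phi_{\mathrm{exc}}^{\,f} = k^2 \Phi_{\mathrm{exc}}^{\,f}$ in $\Omega$, to rewrite $k^2 \Phi_{\mathrm{emt}}^{\,g}\Phi_{\mathrm{exc}}^{\,f}$ as $\Phi_{\mathrm{emt}}^{\,g}\Delta \Phi_{\mathrm{exc}}^{\,f}$. The integrand then acquires the antisymmetric form $\Phi_{\mathrm{emt}}^{\,g}\Delta \Phi_{\mathrm{exc}}^{\,f} - \Phi_{\mathrm{exc}}^{\,f}\Delta \Phi_{\mathrm{emt}}^{\,g}$, so Green's second identity turns the volume integral into $\int_{\partial\Omega}\big(\Phi_{\mathrm{emt}}^{\,g}\,\partial_\nu \Phi_{\mathrm{exc}}^{\,f} - \Phi_{\mathrm{exc}}^{\,f}\,\partial_\nu \Phi_{\mathrm{emt}}^{\,g}\big)\,ds$. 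I then insert the Robin conditions $\ell\,\partial_\nu \Phi_{\mathrm{exc}}^{\,f} + \Phi_{\mathrm{exc}}^{\,f} = f$ and $\ell\,\partial_\nu \Phi_{\mathrm{emt}}^{\,g} + \Phi_{\mathrm{emt}}^{\,g} = 0$, i.e.\ $\partial_\nu \Phi_{\mathrm{exc}}^{\,f} = (f - \Phi_{\mathrm{exc}}^{\,f})/\ell$ and $\partial_\nu \Phi_{\mathrm{emt}}^{\,g} = -\Phi_{\mathrm{emt}}^{\,g}/\ell$; the symmetric products $\Phi_{\mathrm{emt}}^{\,g}\Phi_{\mathrm{exc}}^{\,f}$ cancel and the boundary integrand collapses to $\tfrac1\ell f\,\Phi_{\mathrm{emt}}^{\,g}$, giving $\int_\Omega (\tilde\gamma/D)\,c_{\mathrm{flr}}\,\Phi_{\mathrm{exc}}^{\,g}\Phi_{\mathrm{exc}}^{\,f}\,dy = \tfrac1\ell\int_{\partial\Omega} f\,\Phi_{\mathrm{emt}}^{\,g}\,ds$.

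To finish, I use the same emission boundary condition to express the measured quantity: on $\partial\Omega$ one has $I_{\mathrm{emt}}^{\,g} = -D\,\partial_\nu \Phi_{\mathrm{emt}}^{\,g} = (D/\ell)\,\Phi_{\mathrm{emt}}^{\,g}$, so multiplying the previous identity through by $D$ turns the right-hand side into $\int_{\partial\Omega} f\,I_{\mathrm{emt}}^{\,g}\,ds$, which is exactly (\ref{eq:pbinv}). For the limiting case $\ell=0$ the Robin conditions degenerate to the Dirichlet conditions $\Phi_{\mathrm{exc}}^{\,f}=f$ and $\Phi_{\mathrm{emt}}^{\,g}=0$ on $\partial\Omega$; the term carrying $\Phi_{\mathrm{emt}}^{\,g}$ then vanishes in the boundary integral and what remains is $-\int_{\partial\Omega} f\,\partial_\nu \Phi_{\mathrm{emt}}^{\,g}\,ds = \tfrac1D\int_{\partial\Omega} f\,I_{\mathrm{emt}}^{\,g}\,ds$, yielding the same identity after multiplying by $D$.

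I do not expect a genuine obstacle here, since the computation is a standard reciprocity manipulation; the only points requiring care are (i) the justification of Green's second identity, which needs the two fluences to be regular enough up to $\partial\Omega$ (this is supplied by the $\mathcal{C}^2$ assumption on $\Omega$ together with elliptic regularity), and (ii) the distributional nature of the source, since $c_{\mathrm{flr}}$ is concentrated on $\partial C$ rather than being an $L^1(\Omega)$ density. The cleanest way to handle (ii) is to read the source term $\int_\Omega (\tilde\gamma/D)\,c_{\mathrm{flr}}\,\Phi_{\mathrm{exc}}^{\,g}\Phi_{\mathrm{exc}}^{\,f}\,dy$ as the surface integral $\int_{\partial C}$ from the start (equivalently, to pass to the thin-membrane limit), so that $\Phi_{\mathrm{emt}}^{\,g}\in H^1(\Omega)$ with a jump of its normal derivative across $\partial C$ and the Green identity on $\Omega$ picks up only the $\partial\Omega$ boundary terms used above.
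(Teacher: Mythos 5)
Your argument is exactly the reciprocity computation the paper uses: test the emission equation against $\Phi_{\mathrm{exc}}^{\,f}$, symmetrize with $\Delta\Phi_{\mathrm{exc}}^{\,f}=k^{2}\Phi_{\mathrm{exc}}^{\,f}$, apply Green's second identity, and cancel the boundary terms via the two Robin conditions to land on $\tfrac1\ell\int_{\partial\Omega}f\,\Phi_{\mathrm{emt}}^{\,g}\,ds$ and hence $\int_{\partial\Omega}f\,I_{\mathrm{emt}}^{\,g}\,ds$. Your additional remarks on the $\ell=0$ degeneration and on reading the surface-supported source correctly are sound refinements of the same proof.
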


For two chosen excitations $\vspace{0.2cm} f,g \in L^2(\partial
\Omega)$ and a measured outgoing light intensity
$I_{\mathrm{emt}}^{\,g}$, we can compute the integral
$\vspace{0.2cm} \displaystyle \int_{\partial \Omega} f(y)  \,
I_{\mathrm{emt}}^{\,g}(y) \, ds(y)$, and hence, thanks to the last
formula, know $\displaystyle \int_{\partial C} \tilde{\gamma}
\,c_{\mathrm{flr}}(y) \, \Phi_{\mathrm{exc}}^{\,g}(y)
\Phi_{\mathrm{exc}}^{\,f} (y) \,ds(y)$. Recall that the constant
$\tilde{\gamma}$ is assumed to be known. Then, if we choose
properly $f$ and $g$, we will be able to reconstruct
$c_{\mathrm{flr}} \,\mathbb{1}_{\partial C}$, and therefore to
image the cell membrane $\partial C$.

\subsection{Reconstruction of the  cell membrane: case of a perturbed disk}
\label{subsectcell} We consider a circular cell $C$ with radius
$R$. We choose to excite our medium with a source given by
$$f_n(\phi) = E_n\, e^{in\phi},$$
 for $n \in \mathbb{Z}, \phi \in [0,2\pi]$ and $E_n := ik \ell J_{n}'(ik) +  J_n(ik)$. It gives us,
 thanks to formula (\ref{eq:disk}), the excitation
 light fluence $\Phi_{\mathrm{exc}}^{n}$:

\begin{displaymath}
\forall \theta \in [0,2\pi], \quad
\Phi_{\mathrm{exc}}^{n}(R,\theta) = J_n(ikR) e^{-in\theta}.
\end{displaymath}

Let $\Phi^n_{\mathrm{emt}}$ be the emitted light fluence and let
$I^n_{\mathrm{emt}} = -D \frac{\partial
\Phi^n_{\mathrm{emt}}}{\partial \nu} |_{\partial \Omega}$ be the
outgoing light intensity measured at ${\partial \Omega}$ when the
cell occupies $C$ and the source $f_n$ is applied at $\partial
\Omega$. It follows from (\ref{eq:pbinv}) that
\begin{equation}\label{eq:iepsilonm}
\int_{\partial C} \tilde{\gamma} c_{\mathrm{flr}}(\theta)
\Phi_{\mathrm{exc}}^n(R,\theta)\Phi_{\mathrm{exc}}^m(R,\theta) R
d\theta = {2 \pi} \,E_m \, \widehat{I_{\mathrm{emt}}^n}(m) .
\end{equation}
Besides, we also have
\begin{displaymath}
\int_{\partial C} \tilde{\gamma} c_{\mathrm{flr}}(\theta)
\Phi_{\mathrm{exc}}^n(R,\theta)\Phi_{\mathrm{exc}}^m(R,\theta) R
d\theta = 2 \pi \tilde{\gamma} R\, J_n(ikR) J_m(ikR)\,
\widehat{c_{\mathrm{flr}}}(n+m) .
\end{displaymath}

Let $C_{\epsilon}$ be an $\epsilon$-perturbation of $C$, {\it
i.e.},  there is $h \in \mathcal{C}^2([0,2\pi])$, such that
$\partial C_{\epsilon}$ is given by
\begin{displaymath}
\partial C_{\epsilon} = \left\{\displaystyle \tilde{x}; \tilde{x}(\theta) = (R +
\epsilon h(\theta)) {e_r}, \theta \in [0,2\pi]\right\},
\end{displaymath}
with $({e_r}, {e_{\theta}})$ being the  basis of polar
coordinates.

Our goal is to reconstruct the shape deformation $h$ of our cell.
Let $\Phi^n_{\mathrm{emt},\epsilon}$ be the emitted light fluence
and let $I^n_{\mathrm{emt},\epsilon} = -D \frac{\partial
\Phi^n_{\mathrm{emt},\epsilon}}{\partial \nu}|_{\partial \Omega}$
be the outgoing light intensity measured at the boundary of our
domain $\Omega$ when the cell occupies $C_{\epsilon}$ and the
source $f_n$ is applied at $\partial \Omega$. Again, it follows
from (\ref{eq:pbinv}) that
\begin{equation}\label{eq:im}
\int_{\partial C_{\epsilon}} \tilde{\gamma}
\widetilde{c_{\mathrm{flr}}}(x)
\Phi_{\mathrm{exc}}^n(x)\Phi_{\mathrm{exc}}^m(x) \, ds(x) = {2
\pi} \,E_m \,\widehat{I_{\mathrm{emt}, \epsilon}^n}(m) .
\end{equation}
On the other hand, we have
\begin{equation}\label{eq:int}
\int_{\partial C_{\epsilon}} \tilde{\gamma}
\widetilde{c_{\mathrm{flr}}}(x)
\Phi_{\mathrm{exc}}^n(x)\Phi_{\mathrm{exc}}^m(x) \, ds(x) =
\int_{\partial C} \tilde{\gamma}
\widetilde{c_{\mathrm{flr}}}(\tilde{x})\, J_n(ik\tilde{R}(\theta))
J_m(ik\tilde{R}(\theta))\,
e^{-i(n+m)\theta}ds_{\epsilon}(\tilde{x}),
\end{equation}
where $\tilde{R}(\theta) = R + \epsilon h(\theta)$ and
$\widetilde{c_{\mathrm{flr}}}$ is the concentration of
fluorophores on the deformed cell membrane $\partial
C_{\epsilon}$.

We want to compute the first order approximation of our integral
(\ref{eq:int}). Taylor-Lagrange's theorem gives us the following
expansions, for all $N \in \mathbb N$:
\begin{equation}\label{eq:dechn}
\begin{array}{c}
\vspace{0.1 cm} J_m(ik\tilde{R}) = \displaystyle
\sum_{p=0}^{N}\frac{(ik\epsilon h(\theta))^p}{p!}
J^{(p)}_m(ikR) + o(\epsilon^N),\\
J_n(ik\tilde{R}) = \displaystyle \sum_{p=0}^{N}\frac{(ik\epsilon
h(\theta))^p}{p!} J^{(p)}_n(ikR) + o(\epsilon^N).
\end{array}
\end{equation}
In particular, at first order,
\begin{equation}\label{eq:dechnf}
\begin{array}{cccc}
\vspace{0.1 cm} J_m(ik\tilde{R}) &= J_m(ikR) &+ \, \epsilon \, ik \,h(\theta) J_m'(ikR) &+ \,o(\epsilon),\\
J_n(ik\tilde{R}) &= J_n(ikR) &+\, \epsilon \, ik  \, h(\theta)
J_n'(ikR) &+\, o(\epsilon).
\end{array}
\end{equation}
We can easily get an expansion for the length element
$ds_{\epsilon}(\tilde{y})$, for $\tilde{y} \in \partial
C_{\epsilon}$:
\begin{equation}\label{eq:sigma}
ds_{\epsilon}(\tilde{y}) = |\tilde{x'}(\theta)| d\theta =
\left((R+ \epsilon h(\theta))^2 + (\epsilon
h'(\theta))^2\right)^{\frac{1}{2}} d\theta = \sum_{n=0}^{\infty}
\epsilon^n\sigma^{(n)}(\theta)d\theta,
\end{equation}
where $\sigma^{(n)}$ are functions bounded independently of $n$
and, at first order, we have
\begin{equation}\label{eq:sigma1}
ds_{\epsilon}(\tilde{y}) = R d\theta + \epsilon h(\theta) d\theta
+ o(\epsilon).
\end{equation}

\subsubsection{High-order terms in the expansion of $\widetilde{c_{\mathrm{flr}}}$}

We denote $u_{\epsilon}$ (resp. $u$) the voltage potential in our
medium, when the cell occupies $C_{\epsilon}$ (resp. $C$). We
assume, thanks to (\ref{eq:cf}), that our concentration of
fluorophores $\widetilde{c_{\mathrm{flr}}}$ (resp.
$c_{\mathrm{flr}}$) on $\partial C_{\epsilon}$ (resp. $\partial
C$) is given by
\begin{displaymath}
\begin{array}{ll}
\vspace{0.1 cm}  &\widetilde{c_{\mathrm{flr}}} = \delta \,[u_{\epsilon}]\big|_{\partial C_{\epsilon}}\\
\textrm{resp.} \qquad & c_{\mathrm{flr}} = \delta
\,[u]\big|_{\partial C} .
\end{array}
\end{displaymath}

To find the first order term in the expansion of
$\widetilde{c_{\mathrm{flr}}}$, we must therefore expand at first
$u_{\epsilon}$. Similar problems have been considered in
\cite{elisa1,elisa2}. Nevertheless, our derivations, based on a
layer potential technique, differ significantly from those in
\cite{elisa1,elisa2}.

We know, from Proposition \ref{prop421}, that $u_{\epsilon}$
(resp. $u$) admits the following representation formula:

\begin{displaymath}
\begin{array}{rccccc}
\vspace{0.15cm} \forall x \in \Omega,  &\quad u_{\epsilon}(x) &=& H_{\epsilon}(x) &+&
\mathcal{D}^{(0)}_{C_{\epsilon}}[\Psi_{\epsilon}](x)\\
\textrm{resp.}\quad \forall x \in \Omega, & \quad u(x) &=& H(x)
&+& \mathcal{D}^{(0)}_C[\Psi](x),
\end{array}
\end{displaymath}

\noindent where the harmonic function $H_{\epsilon}$ (resp. $H$)
is given by

\begin{displaymath}
\begin{array}{rccccc}
\vspace{0.15cm} \forall x \in \mathbb R^2 \setminus \partial
\Omega, & \quad H_{\epsilon}(x) &=& -
 \mathcal{S}^{(0)}_{\Omega}[g_{\mathrm{ele}}](x) &+&  \mathcal{D}^{(0)}_{\Omega}[u_{\epsilon}|_{\partial \Omega}](x)\\
\textrm{resp.}\quad \forall x \in \mathbb R^2 \setminus \partial
\Omega, & \quad H(x) &=& -
\mathcal{S}^{(0)}_{\Omega}[g_{\mathrm{ele}}](x) &+&
\mathcal{D}^{(0)}_{\Omega}[u|_{\partial \Omega}](x),
\end{array}
\end{displaymath}
and $\Psi_{\epsilon} \in L^2_0(\partial C_{\epsilon})$ (resp.
$\Psi \in L^2_0(\partial C)$) satisfies the integral equation:
\begin{equation} \label{psieps}
\begin{array}{rccccc}
\vspace{0.15cm}  \Psi_{\epsilon} &+& \beta \displaystyle \frac{\partial
\mathcal{D}^{(0)}_{C_{\epsilon}}[\Psi_{\epsilon}]}{\partial \tilde{\nu}} &=& - \beta \displaystyle \frac{\partial H_{\epsilon}}{\partial \tilde{\nu}} \quad &\textrm{on} \, \partial C_{\epsilon}\\
\end{array}
\end{equation}
\begin{equation} \label{psio}
\begin{array}{rccccc}
\textrm{resp.}\quad \Psi &+& \beta \displaystyle \frac{\partial
\mathcal{D}^{(0)}_C[\Psi]}{\partial \nu} &=& - \beta \displaystyle
\frac{\partial H}{\partial \nu} \quad &\textrm{on} \, \partial C,
\end{array}
\end{equation}
where $\tilde{\nu}(\tilde{x})$ (resp. $\nu(x)$) denotes the
outward unit normal to $\partial C_{\epsilon}$ (resp. $\partial
C$) at $\tilde{x}$ (resp. $x$).

Therefore we obtain, for all $x \in \Omega$,
\begin{displaymath}
u_{\epsilon}(x) - u(x) =
\mathcal{D}^{(0)}_{\Omega}[u_{\epsilon}|_{\partial \Omega} -
u|_{\partial \Omega}](x) +
\mathcal{D}^{(0)}_{C_{\epsilon}}[\Psi_{\epsilon}](x) -
\mathcal{D}^{(0)}_C[\Psi](x),
\end{displaymath}
and, on $\partial \Omega$:
\begin{displaymath}
u_{\epsilon}(x) - u(x) = (\frac{I}{2} +
\mathcal{K}^{(0)}_{\Omega})[u_{\epsilon} - u](x) +
\mathcal{D}^{(0)}_{C_{\epsilon}}[\Psi_{\epsilon}](x) -
\mathcal{D}^{(0)}_C[\Psi](x).
\end{displaymath}

Our first step is to find high-order terms in the expansion of
$\Psi_{\epsilon}$. We define the operator $\mathcal{L}_{\epsilon}$
(resp. $\mathcal{L}$) on $L^2(\partial C_{\epsilon})$ (resp.
$L^2(\partial C)$) by
\begin{equation}
\begin{array}{ll}
\vspace{0.25 cm}&\mathcal{L}_{\epsilon}[f] = \displaystyle \frac{\partial
\mathcal{D}^{(0)}_{C_{\epsilon}}[f]}{\partial \tilde{\nu}}\\
\textrm{resp.} \qquad&\mathcal{L}[f] = \displaystyle
\frac{\partial \mathcal{D}^{(0)}_{C}[f]}{\partial \nu} .
\end{array}
\end{equation}

\begin{prop} \label{prop431}
Let $D$ be a bounded $\mathcal{C}^{2,\eta}$- domain in
$\mathbb{R}^2$, for $0<\eta<1$. We denote by $L_{D}$ the normal
derivative of the double layer potential on $D$, $L_D:= \partial
\mathcal{D}^{(0)}_D /  \partial \nu$. Then, $I + \beta \, L_D :
\mathcal{C}^{2,\eta} \to \mathcal{C}^{1,\eta}$ is a bounded
operator and has a bounded inverse.
\end{prop}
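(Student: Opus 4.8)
The plan is to prove the three assertions in turn: boundedness, injectivity, and the existence of a bounded inverse, where the last will follow from a Fredholm argument combined with the injectivity.

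For boundedness I would first rewrite $L_D$ in a form that displays its order. By the Maue identity for the Laplacian in two dimensions, the normal derivative of the double layer potential can be expressed, up to sign, through the single layer potential and the arc-length derivative $\partial_s$ on the closed curve $\partial D$:
$$L_D[\phi] \;=\; \partial_\nu \mathcal{D}^{(0)}_D[\phi] \;=\; -\,\partial_s\, \mathcal{S}^{(0)}_D\big[\partial_s \phi\big] \qquad \text{on } \partial D .$$
Since $\partial D$ is $\mathcal{C}^{2,\eta}$, the arc-length derivative maps $\mathcal{C}^{2,\eta}\to\mathcal{C}^{1,\eta}$, while the single layer potential gains one Hölder derivative, $\mathcal{S}^{(0)}_D:\mathcal{C}^{1,\eta}\to\mathcal{C}^{2,\eta}$. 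Composing these three maps shows that $L_D:\mathcal{C}^{2,\eta}\to\mathcal{C}^{1,\eta}$, and hence $I+\beta L_D$, is bounded. This representation also identifies $L_D$ as an operator of order one.

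For injectivity I would argue by energy. Suppose $(I+\beta L_D)[\phi]=0$ and set $u:=\mathcal{D}^{(0)}_D[\phi]$, harmonic in $D$ and in $\mathbb{R}^2\setminus\overline{D}$ and decaying like $O(|x|^{-1})$ at infinity. The jump relations for the double layer give $u|_- - u|_+ = \phi$ and the continuity $\partial_\nu u|_+ = \partial_\nu u|_- =: \partial_\nu u = L_D[\phi]$, while the kernel equation forces $\partial_\nu u = -\beta^{-1}\phi$. Applying Green's first identity separately in $D$ and in the exterior domain, the contribution at infinity vanishing by the decay of the planar double layer potential, and adding the two, yields
$$\int_{\mathbb{R}^2}|\nabla u|^2\,dx \;=\; \int_{\partial D}\partial_\nu u\,(u|_- - u|_+)\,ds \;=\; \int_{\partial D}\partial_\nu u \,\phi\,ds \;=\; -\frac{1}{\beta}\int_{\partial D}|\phi|^2\,ds .$$
Because $\beta>0$, the left side is nonnegative and the right side is nonpositive, so both vanish; hence $\phi\equiv 0$ and $I+\beta L_D$ is injective.

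For the bounded inverse I would combine injectivity with a Fredholm argument. The operator $\beta L_D$ is an elliptic pseudodifferential operator of order one on $\partial D$ with positive scalar principal symbol $\tfrac{\beta}{2}|\xi|$, and it is formally self-adjoint, so its index as a map $\mathcal{C}^{2,\eta}\to\mathcal{C}^{1,\eta}$ is zero; the term $I$ is of order zero and is a compact perturbation, since the inclusion $\mathcal{C}^{2,\eta}\hookrightarrow\mathcal{C}^{1,\eta}$ is compact. Consequently $I+\beta L_D$ is Fredholm of index zero. Equivalently, index zero can be confirmed through the coercivity of the form $\langle(I+\beta L_D)\phi,\phi\rangle = \|\phi\|_{L^2}^2 + \beta\int_{\mathbb{R}^2}|\nabla u|^2$ on $H^{1/2}(\partial D)$, which yields an isomorphism $H^{1/2}\to H^{-1/2}$ that one transfers to the Hölder scale by elliptic regularity. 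With index zero established, the injectivity above upgrades to bijectivity, and the open mapping theorem provides the bounded inverse $\mathcal{C}^{1,\eta}\to\mathcal{C}^{2,\eta}$. The main obstacle is this last step: the energy identity settles injectivity cleanly, but surjectivity hinges on recognizing $\beta L_D$, rather than $I$, as the principal part and then pinning the Fredholm index to zero in the Hölder setting. The hypothesis $\beta>0$ is used twice in an essential way, making the energy identity conclusive and keeping the symbol $1+\tfrac{\beta}{2}|\xi|$ bounded away from zero so that ellipticity holds.
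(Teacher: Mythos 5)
Your proof is correct, and it is considerably more complete than the paper's own, which disposes of the proposition in three sentences: boundedness of $L_D:\mathcal{C}^{2,\eta}\to\mathcal{C}^{1,\eta}$ is cited from Colton--Kress, the authors remark that the Fredholm alternative is unavailable because $L_D$ is not compact, and they then invoke the positivity of $L_D$ (cited from N\'ed\'elec) together with $\beta>0$ and assert that ``the proposition follows.'' Your energy identity $\langle L_D\phi,\phi\rangle=\int_{\mathbb{R}^2}|\nabla \mathcal{D}^{(0)}_D[\phi]|^2\,dx\ge 0$ is precisely a proof of that cited positivity (and of injectivity), and your Maue factorization $L_D=-\partial_s\,\mathcal{S}^{(0)}_D[\partial_s\,\cdot\,]$ is a proof of the cited boundedness, so up to that point you are reconstructing the same skeleton with the references unpacked. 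Where you genuinely diverge is on surjectivity, which the paper leaves entirely implicit: you observe that between $\mathcal{C}^{2,\eta}$ and $\mathcal{C}^{1,\eta}$ it is the identity (a compact embedding), not $\beta L_D$, that plays the role of the compact perturbation, so $I+\beta L_D$ inherits the index of the order-one, elliptic, formally self-adjoint operator $\beta L_D$, namely zero, and injectivity upgrades to bijectivity. This neatly turns the paper's own caveat (``$L_D$ is not compact, so no Fredholm alternative'') into the engine of the argument. The only points where you lean on standard but unproved facts are the index-zero claim in the H\"older scale (or, in your alternative route, the transfer of the coercive $H^{1/2}\to H^{-1/2}$ isomorphism to $\mathcal{C}^{2,\eta}\to\mathcal{C}^{1,\eta}$ by elliptic regularity); both are classical for elliptic pseudodifferential operators on a closed $\mathcal{C}^{2,\eta}$ curve, and the factorization through $\mathcal{S}^{(0)}_D$ that you already wrote down is exactly what makes the regularity bootstrap run. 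In short: same key positivity fact as the paper, but a self-contained and strictly more rigorous treatment of invertibility.
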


\begin{proof}
The boundness of $L_D : \mathcal{C}^{2,\eta} \to
\mathcal{C}^{1,\eta}$ is proved in \cite{14}. Note that since
$L_D$ is not a compact operator, we can not apply the Fredholm
alternative. However, $L_D$ is positive \cite{2} and the
proposition follows since $\beta >0$.
\end{proof}

For $f \in \mathcal{C}^{2,\eta}(\partial C_{\epsilon})$,
$\tilde{x} \in
\partial C_{\epsilon}$, $\mathcal{L}_{\epsilon}$ has the following
expression \cite{6}:
\begin{displaymath}
\begin{array}{ll}
\vspace{0.3cm} \displaystyle \frac{\partial
\mathcal{D}^{(0)}_{C_{\epsilon}}[f]}{\partial \nu}(x)
 =& - \displaystyle \frac{1}{2\pi}\int_{\partial D} \frac{\langle \tilde{\nu}(\tilde{x}),
 \tilde{\nu}(\tilde{y}) \rangle}{|\tilde{x} - \tilde{y}|^2} (f(\tilde{y})
 - f(\tilde{x}))\, ds_{\epsilon}(\tilde{y})\\
& + \displaystyle \frac{1}{\pi}\int_{\partial D} \frac{\langle
\tilde{x} - \tilde{y} , \tilde{\nu}(\tilde{x}) \rangle\langle
\tilde{x} - \tilde{y} , \tilde{\nu}(\tilde{y}) \rangle}{|\tilde{x}
- \tilde{y}|^4} (f(\tilde{y}) - f(\tilde{x}))
\,ds_{\epsilon}(\tilde{y}) .
\end{array}
\end{displaymath}

The outward unit normal to $\partial C$ at $x$, $\nu(x)$, and the
tangential vector, $T(x)$, are, in terms of polar coordinates:
 $$\nu(x) = {e_r}(x), \quad T(x) = {e_{\theta}(x)}.$$

The outward unit normal to $\partial C_{\epsilon}$ at $\tilde{x}$,
$\tilde{\nu}(\tilde{x})$, is given by
\begin{displaymath}
\tilde{\nu}(\tilde{x}) = \displaystyle
\frac{R_{-\frac{\pi}{2}}(\tilde{x'}(\theta))}{|\tilde{x'}(\theta)|},
\end{displaymath}
where $R_{-\frac{\pi}{2}}$ is rotation by $-\frac{\pi}{2}$. In our
case, we then have
\begin{equation}\label{eq:nu}
\tilde{\nu}(\tilde{x}) = \displaystyle \frac{(R + \epsilon
h(\theta)) {e_r} - \epsilon h'(\theta)  {e_{\theta}}}{\left((R+
\epsilon h(\theta))^2 + (\epsilon
h'(\theta))^2\right)^{\frac{1}{2}}} .
\end{equation}

We can expand $\tilde{\nu}(\tilde{x})$, for $x \in \partial C$, as
follows:

\begin{equation}\label{eq:decnu}
\tilde{\nu}(\tilde{x}) = \sum_{n=0}^{\infty}
\epsilon^n\nu^{(n)}(\theta),
\end{equation}
where the vector-valued functions $\nu^{(n)}$ are uniformly
bounded independently of $n$.

In particular, at first order, $\tilde{\nu}(\tilde{x})$, for
$\tilde{x} \in \partial C_{\epsilon}$, is given by

\begin{equation}\label{eq:decnu1}
 \tilde{\nu}(\tilde{x}) = {e_r} - \displaystyle \frac {h'(\theta)}{R} \,
 {e_{\theta}} + o(\epsilon).
\end{equation}

Set $\tilde{x}, \tilde{y} \in \partial C_{\epsilon}$. We have
\begin{equation}\label{eq:x-y}
\tilde{x} - \tilde{y} = R ({e_{r}({x})} - {e_{r}({y})}) + \epsilon
(h(\theta^x)\,{e_{r}({x})} - h(\theta^y)\,{e_{r}({y})}).
\end{equation}
If we denote
\begin{equation} \label{defc}
c = \cos (\theta^x - \theta ^y), \quad s = \sin (\theta^x - \theta
^y), \end{equation} then we obtain:
\begin{displaymath}
\begin{array}{ll}
\vspace{0.3 cm} |\tilde{x} - \tilde{y}|^2 = 2 R^2 (1 - c) +& 2 \epsilon R (1 - c) (h(\theta^x) + h(\theta^y)) \\
&\qquad \qquad  + \epsilon ^2\left(h(\theta^x)^2 + h(\theta^y)^2 -
2 h(\theta^x) h(\theta^y) c\right).
\end{array}
\end{displaymath}

and

\begin{equation}\label{eq:x-y2}
\displaystyle \frac{1}{|\tilde{x} - \tilde{y}|^2} =
\displaystyle\frac{1}{2 R^2 (1 - c)} \,\displaystyle \frac{1}{1 +
\epsilon F(\theta^x,\theta^y) + \epsilon^2 G(\theta^x,\theta^y)},
\end{equation}

where
\begin{displaymath}
F(\theta^x,\theta^y) =\displaystyle \frac {(h(\theta^x) +
h(\theta^y))}{R}, \quad G(\theta^x,\theta^y) = \displaystyle
\frac{\langle h(\theta^x)\,{e_{r}({x})} -
h(\theta^y)\,{e_{r}({y})} \rangle^2}{2 R^2 (1 - c)}.
\end{displaymath}

Likewise, we write
\begin{equation}\label{eq:x-y4}
\displaystyle \frac{1}{|\tilde{x} - \tilde{y}|^4} =
\displaystyle\frac{1}{4 R^4 (1 - c)^2} \,\displaystyle \frac{1}{(1
+ \epsilon F(\theta^x,\theta^y) + \epsilon^2
G(\theta^x,\theta^y))^2} .
\end{equation}

It follows, from (\ref{eq:nu}), (\ref{eq:sigma}) and
(\ref{eq:x-y2}), that
\begin{displaymath}
\begin{array}{l}
\vspace{0.3cm}\displaystyle  \frac{\langle \tilde{\nu}(\tilde{x}), \tilde{\nu}(\tilde{y}) \rangle}{|\tilde{x} - \tilde{y}|^2}\,
 ds_{\epsilon}(\tilde{y}) = \displaystyle \frac{K_0 + \epsilon K_1 + \epsilon^2 K_2}{2 R^2 (1 - c)}\\
\qquad \qquad  \times  \displaystyle \frac{1}{1 + \epsilon
F(\theta^x,\theta^y) + \epsilon^2 G(\theta^x,\theta^y)}
\frac{R}{\left((R+ \epsilon h(\theta^x))^2 + (\epsilon
h'(\theta^x))^2\right)^{\frac{1}{2}}} R d\theta^y,
\end{array}
\end{displaymath}

where
\begin{displaymath}
\begin{array}{l}
\vspace{0.5 cm} K_0 = c,\\
\vspace{0.5 cm} K_1 = \displaystyle \frac{1}{R} \left [ (h(\theta^x)
+ h(\theta^y))c + (h'(\theta^x) - h'(\theta^y))s \right], \\
K_2 =  \displaystyle \frac{h'(\theta^x) h'(\theta^y)}{R^2} c.
\end{array}
\end{displaymath}

One can see, from the previous formulas, that the singularity of
$\displaystyle \frac{K_i}{2 R^2 (1 - c)}$ for $i \in [0,2]$ is of
order $O(|\theta^x-\theta^y|^{-2})$, since $1 - c =
O(|\theta^x-\theta^y|^{-2})$.

Likewise, thanks to (\ref{eq:nu}), (\ref{eq:sigma}) and
(\ref{eq:x-y2}), we can explicit $M_i$ for $i \in [0,4]$ such that
\begin{displaymath}
\begin{array}{l}
\vspace{0.3cm}\displaystyle \frac{\langle \tilde{x} - \tilde{y} ,
\tilde{\nu}(\tilde{x}) \rangle\langle  \tilde{x} - \tilde{y} ,
\tilde{\nu}(\tilde{y}) \rangle}{|\tilde{x} - \tilde{y}|^4} \,
ds_{\epsilon}(\tilde{y}) = \displaystyle \frac{M_0 + \epsilon M_1
+ \epsilon^2 M_2 +
\epsilon^3 M_3 + \epsilon^4 M_4 }{4 R^4 (1 - c)^2}\\
\qquad \qquad \times  \displaystyle \frac{1}{(1 + \epsilon
F(\theta^x,\theta^y) + \epsilon^2 G(\theta^x,\theta^y))^2}
\frac{R}{\left((R+ \epsilon h(\theta^x))^2 + (\epsilon
h'(\theta^x))^2\right)^{\frac{1}{2}}} R d\theta^y,
\end{array}
\end{displaymath}
and the singularity of $\displaystyle
 \frac{M_i}{4 R^4 (1 - c)^2}$ for $i \in [0,4]$ is of order $O(|\theta^x-\theta^y|^{-2})$.
Therefore, we get
\begin{displaymath}
\begin{array}{l}
\vspace{0.3cm} L_{\epsilon} \,ds_{\epsilon}(\tilde{y}) = \displaystyle \frac{N_0 + \epsilon N_1 + \epsilon^2 N_2 + \epsilon^3 N_3 + \epsilon^4 N_4}{2 R^4 (1 - c)^2}\\
\qquad \quad  \times \displaystyle \frac{1}{(1 + \epsilon
F(\theta^x,\theta^y) + \epsilon^2 G(\theta^x,\theta^y))^2}
\frac{R}{\left((R+ \epsilon h(\theta^x))^2 + (\epsilon
h'(\theta^x))^2\right)^{\frac{1}{2}}} \,R d\theta^y,
\end{array}
\end{displaymath}
where $L_{\epsilon} : = - \displaystyle  \frac{\langle
\tilde{\nu}(\tilde{x}), \tilde{\nu}(\tilde{y}) \rangle}{|\tilde{x}
- \tilde{y}|^2} + 2 \displaystyle \frac{\langle \tilde{x} -
\tilde{y} , \tilde{\nu}(\tilde{x}) \rangle\langle  \tilde{x} -
\tilde{y} , \tilde{\nu}(\tilde{y}) \rangle}{|\tilde{x} -
\tilde{y}|^4}$ is the kernel of $\mathcal{L}_{\epsilon}$ and the
singularity of $\displaystyle
 \frac{N_i}{2 R^4 (1 - c)^2}$ for $i \in [0,4]$ is of order $O(|\theta^x-\theta^y|^{-2})$.
We  do not give here the expressions of $N_2,N_3,N_4$ due to their
length, but $N_0$ and $N_1$ are given by
\begin{displaymath}
\begin{array}{rl}
\vspace{0.5 cm} N_0 &=  - R^2 (1 - c),\\
\vspace{0.5 cm} N_1 &= - 2  R (1-c) (h(\theta^x) + h(\theta^y)).
\end{array}
\end{displaymath}
Recall that
\begin{displaymath}
F(\theta^x,\theta^y) =\displaystyle \frac {(h(\theta^x) +
h(\theta^y))}{R}, \quad G(\theta^x,\theta^y) =  \displaystyle
\frac{(h(\theta^x) - h(\theta^y)^2 + 2 h(\theta^x)
h(\theta^y)(1-c)}{2 R^2 (1 - c)} .
\end{displaymath}
We introduce the following series, which converges absolutely and
uniformly,
\begin{displaymath}
\displaystyle \frac{1}{(1 + \epsilon F(\theta^x,\theta^y) +
\epsilon^2 G(\theta^x,\theta^y))^2} \frac{R}{\left((R+ \epsilon
h(\theta^x))^2 + (\epsilon h'(\theta^x))^2\right)^{\frac{1}{2}}} =
\sum_{p=0}^{\infty} \epsilon^p F_p(\theta^x, \theta^y) .
\end{displaymath}
The first order term is given by
\begin{equation}\label{eq:F1}
F_1(\theta^x, \theta^y) = - \displaystyle \frac {(3 h(\theta^x) +
2 h(\theta^y))}{R} .
\end{equation}
Note that $(F_p)_{p \in \mathbb{N}}$, like $F$ and $G$, have no
singularity and are uniformly bounded.

We define the following functions, for all $x,y \in \partial C$:
\begin{displaymath}
\begin{array}{ll}
\vspace{0.4 cm} L^{(0)} =  \displaystyle \frac{N_0 }{2 R^4 (1 -
c)^2}, & L^{(1)} =
\displaystyle \frac{N_0 F_1 + N_1}{2 R^4 (1 - c)^2} ,\\
L^{(2)} =  \displaystyle \frac{N_0 F_2 + N_1 F_1 + N_2}{2 R^4 (1 -
c)^2} , \qquad \quad  & L^{(3)} =  \displaystyle \frac{N_0 F_3 +
N_1 F_2 + N_2 F_1 + N_3}{2 R^4 (1 - c)^2},
\end{array}
\end{displaymath}
 and, for $n \geq 4$,
\begin{equation}\label{eq:ln}
L^{(n)} = \displaystyle \frac{1}{2 R^4 (1 - c)^2} \left( N_0 F_n +
N_1 F_{n-1} + N_2 F_{n-2} + N_3 F_{n-3} + N_4 F_{n-4}\right).
\end{equation}
Thanks to the explicit formulas of $(N_{i})_{i \in [0,4]}$ and
(\ref{eq:F1}), we obtain in particular that, for all $x,y \in
\partial C$,
\begin{equation}\label{eq:L01}
L^{(0)} = - \displaystyle \frac{1}{2 R^3 (1 - c)} \quad
\textrm{and} \quad L^{(1)} = \displaystyle \frac{h(\theta^x)}{2
R^3 (1 - c)} ,
\end{equation}
where $c$ is given by (\ref{defc}).

By construction, $L^{(n)}$, for all $n \in \mathbb{N}$, have a
singularity of order $O(|\theta^x-\theta^y|^{-2})$.

The integral operators $(\mathcal{L}^{(n)})_{n \in \mathbb{N}}$,
associated to the kernels $(L^{(n)})_{n \in \mathbb{N}}$, are
given, for all $f \in \mathcal{C}^{2,\eta}(\partial C)$, $x \in
\partial C$, by
\begin{displaymath}
\mathcal{L}^{(n)}[f](x) = \displaystyle \frac{1}{2\pi}
\int_0^{2\pi} L^{(n)}(\theta^x,\theta^y) (f(\theta^y) -
f(\theta^x)) R d\theta^y.
\end{displaymath}

It follows from (\ref{eq:L01}) that, for all
$\mathcal{C}^{2,\eta}(\partial C)$, $x \in \partial C$:
\begin{equation}\label{eq:opL01}
\mathcal{L}^{(0)}[f](x) = \mathcal{L}[f](x) \quad \textrm{and}
\quad \mathcal{L}^{(1)}[f](x) = - h(\theta^x)  \mathcal{L}[f](x).
\end{equation}

We can now write, from our construction, an expansion of
$\mathcal{L}_{\epsilon}$.
\begin{prop} \label{prop432}
Let $N \in \mathbb{N}$. There exists $C$ depending only on R and
$||h||_{\mathcal{C}^2}$\, , such that, for any $\tilde{f} \in
\mathcal{C}^{2,\eta}(\partial C_{\epsilon})$, $0<\eta<1$, we have
\begin{displaymath}
\big || \mathcal{L}_{\epsilon}[\tilde{f}]\circ \tau_{\epsilon} -
\mathcal{L}[f] - \sum_{n=0}^N \epsilon^n \mathcal{L}^{(n)}[f] \big
||_{\mathcal{C}^{1,\eta}(\partial C)} \leq C \epsilon^{N+1}
||f||_{\mathcal{C}^{2,\eta}(\partial C)},
\end{displaymath}
where $\tau_{\epsilon}$ is the diffeomorphism from $\partial C$
onto $\partial C_{\epsilon}$ given by $\tau_{\epsilon}(x) =
\tilde{x}$ and the function $f$ is defined by $f:=\tilde{f} \circ
\tau_{\epsilon}$.
\end{prop}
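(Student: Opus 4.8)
The plan is to work on the fixed curve $\partial C$ by pulling the operator $\mathcal{L}_{\epsilon}=\partial\mathcal{D}^{(0)}_{C_{\epsilon}}/\partial\tilde\nu$ back through the diffeomorphism $\tau_{\epsilon}$. Since $\tau_{\epsilon}$ matches the points $\tilde x,\tilde y\in\partial C_{\epsilon}$ of angles $\theta^x,\theta^y$ to the corresponding points of $\partial C$, and since $f=\tilde f\circ\tau_{\epsilon}$, the subtracted difference $\tilde f(\tilde y)-\tilde f(\tilde x)$ appearing in the kernel representation of \cite{6} becomes exactly $f(\theta^y)-f(\theta^x)$. The computation carried out before the statement has already rewritten the whole integrand $L_{\epsilon}\,ds_{\epsilon}(\tilde y)$ as the series $\sum_{n\ge 0}\epsilon^n L^{(n)}(\theta^x,\theta^y)\,R\,d\theta^y$, where each $L^{(n)}$ is assembled from the fixed numerators $N_0,\dots,N_4$ and the coefficients $F_p$ of the absolutely and uniformly convergent expansion of the Jacobian factor; by construction every $L^{(n)}$ carries the same $O(|\theta^x-\theta^y|^{-2})$ singularity as the kernel of $\mathcal{L}$ itself. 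The candidate operators are thus $\mathcal{L}^{(n)}[f](x)=\frac{1}{2\pi}\int_0^{2\pi}L^{(n)}(\theta^x,\theta^y)(f(\theta^y)-f(\theta^x))R\,d\theta^y$, with $\mathcal{L}^{(0)}=\mathcal{L}$ by (\ref{eq:opL01}).

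The crux of the proof, and the step I expect to be the main obstacle, is to establish a bound $\|\mathcal{L}^{(n)}\|_{\mathcal{C}^{2,\eta}(\partial C)\to\mathcal{C}^{1,\eta}(\partial C)}\le M$ that is uniform in $n$, with $M$ depending only on $R$ and $\|h\|_{\mathcal{C}^2}$. Each $\mathcal{L}^{(n)}$ is hypersingular, so, as already noted in Proposition \ref{prop431}, the Fredholm/compactness theory is unavailable and the mapping property must be read off directly from the kernel, using the analysis of $L_D=\partial\mathcal{D}^{(0)}_D/\partial\nu$ from \cite{14} together with the structural (positivity) facts of \cite{2}. The subtraction of $f(\theta^x)$ and the $\mathcal{C}^{2,\eta}$-regularity of $f$ tame the $O(|\theta^x-\theta^y|^{-2})$ singularity exactly as they do for $L_D$, and the geometric factors $\langle\tilde x-\tilde y,\tilde\nu\rangle$ supply the extra diagonal vanishing. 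What is genuinely new relative to Proposition \ref{prop431} is the uniformity in $n$: only the smooth coefficients $F_p$ vary with $n$ inside $L^{(n)}$, and they are bounded, together with the $N_i$, in the relevant Hölder norm by a single constant depending only on $R$ and $\|h\|_{\mathcal{C}^2}$. Tracking this dependence through the kernel estimates yields the claimed $n$-independent $M$.

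Granting the uniform bound, the family $\sum_{n\ge 0}\epsilon^n\mathcal{L}^{(n)}$ converges in the operator norm $\mathcal{C}^{2,\eta}(\partial C)\to\mathcal{C}^{1,\eta}(\partial C)$ for $\epsilon<1$, and by the kernel identity established above its sum is $\mathcal{L}_{\epsilon}[\tilde f]\circ\tau_{\epsilon}$; in particular the term-by-term integration is legitimate. Using $\mathcal{L}^{(0)}=\mathcal{L}$, the error after truncating at order $N$ is the tail $\sum_{n>N}\epsilon^n\mathcal{L}^{(n)}[f]$, so for $\epsilon\le\frac12$ the geometric estimate
\[
\Big\|\sum_{n>N}\epsilon^n\mathcal{L}^{(n)}[f]\Big\|_{\mathcal{C}^{1,\eta}(\partial C)}\le M\,\|f\|_{\mathcal{C}^{2,\eta}(\partial C)}\sum_{n>N}\epsilon^n\le \frac{M}{1-\epsilon}\,\epsilon^{N+1}\,\|f\|_{\mathcal{C}^{2,\eta}(\partial C)}
\]
gives exactly the desired inequality, with $C=M/(1-\epsilon)$ a constant depending only on $R$ and $\|h\|_{\mathcal{C}^2}$.
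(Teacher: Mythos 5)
Your proposal is correct and follows essentially the same route as the paper: both rest on the kernel expansion $L_{\epsilon}\,ds_{\epsilon}=\sum_n\epsilon^n L^{(n)}\,R\,d\theta^y$ constructed beforehand, then establish a bound $\|\mathcal{L}^{(n)}\|_{\mathcal{C}^{2,\eta}\to\mathcal{C}^{1,\eta}}$ uniform in $n$ (the paper does this by noting the subtracted kernel has an $O(|\theta^x-\theta^y|^{-1})$ singularity, invoking Hölder boundedness of the Hilbert transform, and repeating the argument for the tangential derivative of $\mathcal{L}^{(n)}[f]$, with uniformity coming from the uniform bounds on the $F_p$ and $N_i$), and conclude with the same geometric tail estimate $\sum_{n>N}\epsilon^n\le\epsilon^{N+1}/(1-\epsilon)$. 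You correctly identified the uniform-in-$n$ operator bound as the crux, and your treatment of it matches the paper's in substance.
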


\begin{proof}
Let $f \in \mathcal{C}^{2,\eta}$. We know that $\displaystyle
\frac{N_i}{2 R^4 (1 - c)^2}$, for all $i \in [0, 4]$, have a
singularity of order $O(|\theta^x-\theta^y|^{-2})$.

Thanks to the $\mathcal{C}^1$-character of $f$,
$(\theta^x,\theta^y) \to \displaystyle \frac{N_i}{2 R^4 (1 -
c)^2}(f(\theta^y) - f(\theta^x))$ have a singularity of order
$O(|\theta^x-\theta^y|^{-1})$.

Besides the Hilbert transform is a bounded operator from
$\mathcal{C}^{0,\eta}$ to $ \mathcal{C}^{0,\eta}$. From the
boundness of $h$ and its derivatives, it follows that the
operators associated with the kernels $\displaystyle \frac{N_i}{2
R^4 (1 - c)^2}$ for $i \in [0, 4]$ are bounded from
$\mathcal{C}^{2,\eta}$ to $\mathcal{C}^{1,\eta}$.

Since the $(F_p)_{p \in \mathbb{N}}$ are uniformly bounded, the
construction of $L^{(n)}$ (\ref{eq:ln}) implies that there exists
a constant $K(R, ||h||_{\mathcal{C}^2})$  such that
\begin{displaymath}
|| \mathcal{L}^{(n)}[f]||_{\mathcal{C}^{0,\eta}(\partial C)}  \leq
K ||f'||_{\mathcal{C}^{0, \eta}(\partial C)},
\end{displaymath}
where $f^\prime$ is the derivative of $f$ with respect to
$\theta$.  Likewise, since the kernel of
$\mathcal{L}^{(n)}[f]^\prime (x)$ is of order $O\left (
\displaystyle \frac{f(y) - f(x) - (x-y) f'(x)}{|x-y|^2} \right )$,
the $\mathcal{C}^2$-character of $f$ gives us a singularity of
order $O(|\theta^x-\theta^y|^{-1})$. We therefore obtain that
\begin{displaymath}
|| \mathcal{L}^{(n)}[f]^\prime ||_{\mathcal{C}^{0,\eta}(\partial
C)} \leq \tilde{K} ||f''||_{\mathcal{C}^{0, \eta}(\partial C)},
\end{displaymath}
where $\tilde{K}(R, ||h||_{\mathcal{C}^2})$ is a constant and
$f^{''}$ is the second derivative of $f$. Therefore, there exists
a constant $\widehat{K}(R, ||h||_{\mathcal{C}^2})$ such that
\begin{displaymath}
|| \mathcal{L}^{(n)}[f]||_{\mathcal{C}^{1,\eta}(\partial C)} \leq
\widehat{K} ||f||_{\mathcal{C}^{2, \eta}(\partial C)} .
\end{displaymath}
For all $n \in \mathbb{N}$, the operator $\mathcal{L}^{(n)} :
\mathcal{C}^{2,\eta} \to \mathcal{C}^{1,\eta} $ is bounded and the
constant $\widehat{K}$ does not depend on $n$. Let $N \in
\mathbb{N}$. Let $\tilde{f} \in \mathcal{C}^{2,\eta}(\partial
C_{\epsilon})$. We introduce $f:=\tilde{f} \circ \tau_{\epsilon}$,
$f \in \mathcal{C}^{2,\eta}(\partial C)$. We have
\begin{displaymath}
||\sum_{n=N+1}^{\infty} \epsilon^n \mathcal{L}^{(n)}[f]
||_{\mathcal{C}^{1,\eta}(\partial C)} \leq \displaystyle \frac{
\epsilon^{N+1}}{1 - \epsilon}\,  \widehat{K}  \,
||f||_{\mathcal{C}^{2,\eta}(\partial C)},
\end{displaymath}
which ends the proof of the result.
\end{proof}

By substituting the result of Proposition \ref{prop432} into the
integral equation (\ref{psieps}) verified by $\Psi_{\epsilon}$, we
obtain for all $N \in \mathbb{N}$ that
\begin{equation}\label{eq:psidec}
\forall x \!\!\in \!\partial C , \,\,\, (I + \beta \mathcal{L} +
\beta \sum_{n=0}^N \epsilon^n
\mathcal{L}^{(n)})[\Psi_{\epsilon}](\tilde{x}) + o(\epsilon^N) = -
\beta \displaystyle \frac{\partial H_{\epsilon}}{\partial
\tilde{\nu}}(\tilde{x}).
\end{equation}

We use Taylor-Lagrange's theorem and (\ref{eq:decnu}) to expand
$\displaystyle \frac{\partial H_{\epsilon}}{\partial
\tilde{\nu}}(\tilde{x})$:
\begin{equation}\label{eq:deche}
\displaystyle \frac{\partial H_{\epsilon}}{\partial
\tilde{\nu}}(\tilde{x}) = \displaystyle \left (\sum_{p=0}^{\infty}
\sum_{|\alpha| = p} \frac{\epsilon^p}{\alpha!} (\partial^{\alpha}
\nabla H_{\epsilon})(x) (h(\theta)\nu(x))^{\alpha}\right ) \left
(\sum_{p=0}^{\infty} \epsilon^p\nu^{(p)}(\theta) \right ).
\end{equation}

In particular, at first order, we have
\begin{equation}\label{eq:he1}
\displaystyle \frac{\partial H_{\epsilon}}{\partial
\tilde{\nu}}(\tilde{x}) = \frac{\partial H_{\epsilon}}{\partial
r}(x) + \epsilon \left( - \displaystyle
\frac{h'(\theta)}{R^2}\frac{\partial H_{\epsilon}}{\partial
\theta}(x) + h(\theta) \frac{\partial^2 H_{\epsilon}}{\partial
r^2}(x) \right).
\end{equation}

Our integral equation (\ref{eq:psidec}) then becomes
\begin{equation}\label{eq:psidec2}
\forall x \!\!\in\! \partial C, \,\,\,(I + \beta \mathcal{L} +
\beta \sum_{p=0}^N \epsilon^n
\mathcal{L}^{(n)})[\Psi_{\epsilon}](\tilde{x}) + o(\epsilon^N) = -
\beta \sum_{n=0}^{\infty} \epsilon^n G_n(x),
\end{equation}

where $(G_n)_{n \in \mathbb N}$ are the coefficients in the
expansion (\ref{eq:deche}).

Equation (\ref{eq:psidec2}) can therefore be solved recursively in
the following way:
\begin{equation}\label{eq:psif}
\begin{array}{cl}
\vspace{0.2 cm}&\Psi^{(0)} = -\beta (I + \beta \mathcal{L})^{-1} \left[G_0 \right],\\
\forall n \leq N,  \quad & \Psi^{(n)} = - \beta (I + \beta
\mathcal{L})^{-1} \left[G_n + \displaystyle \sum_{p=0}^{n-1}
\mathcal{L}^{(n-p)}\Psi^{(p)} \right].
\end{array}
\end{equation}

In particular, thanks to (\ref{eq:opL01}) and (\ref{eq:he1}), we
have
\begin{equation}
\begin{array}{l}
\vspace{0.3 cm} \Psi^{(0)} =  - \beta (I + \beta \mathcal{L})^{-1} \left( \displaystyle \frac{\partial H_{\epsilon}}{\partial \nu}\right), \\
\Psi^{(1)} = - \beta (I + \beta \mathcal{L})^{-1} \left( -
\displaystyle \frac{h'}{R^2}\frac{\partial H_{\epsilon}}{\partial
\theta} + h \frac{\partial^2 H_{\epsilon}}{\partial r^2} - h
\displaystyle \frac{\partial}{\partial
\nu}\mathcal{D}^{(0)}_{C}[\Psi^{(0)}]\right).
\end{array}
\end{equation}

We obtain the following proposition.

\begin{prop} \label{prop433}
Let $N \in \mathbb N$. There exists $K$, depending only on $N$,
$R$ and the $\mathcal{C}^2$- norm of $h$, such that
\begin{equation}\label{eq:decpsi}
|| \Psi_{\epsilon} - \sum_{n=0}^N \epsilon^n\,
\Psi^{(n)}||_{\mathcal{C}^{2, \eta}(\partial C)} \leq K \epsilon
^{N+1},
\end{equation}
where $(\Psi^{(n)})_{n\leq N}$ are defined by the recursive
relation (\ref{eq:psif}).
\end{prop}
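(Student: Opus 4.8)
The plan is to show that both the exact density $\Psi_\epsilon$, transported back to the fixed curve $\partial C$ through the diffeomorphism $\tau_\epsilon$, and the truncated series $\Psi_\epsilon^N := \sum_{n=0}^N \epsilon^n \Psi^{(n)}$ solve one and the same integral equation on $\partial C$ up to a residual of order $O(\epsilon^{N+1})$, and then to invert the governing operator uniformly in $\epsilon$. Write $f_\epsilon := \Psi_\epsilon \circ \tau_\epsilon \in \mathcal{C}^{2,\eta}(\partial C)$; this is the object identified with $\Psi_\epsilon$ in the statement. Composing the exact equation (\ref{psieps}) with $\tau_\epsilon$ and inserting the operator expansion of Proposition \ref{prop432} (recalling $\mathcal{L}^{(0)}=\mathcal{L}$ from (\ref{eq:opL01})) together with the expansion (\ref{eq:deche}) of the right-hand side, I obtain
$$
\Big(I + \beta \sum_{n=0}^N \epsilon^n \mathcal{L}^{(n)}\Big)[f_\epsilon] = -\beta \sum_{n=0}^N \epsilon^n G_n + \rho_\epsilon^{(1)},
$$
where $\|\rho_\epsilon^{(1)}\|_{\mathcal{C}^{1,\eta}(\partial C)} \le C\,\epsilon^{N+1}\big(1 + \|f_\epsilon\|_{\mathcal{C}^{2,\eta}(\partial C)}\big)$, the two contributions arising respectively from the $O(\epsilon^{N+1})$ remainder in Proposition \ref{prop432} and from the tail $\sum_{n>N}$ of (\ref{eq:deche}), whose coefficients $G_n$ are uniformly bounded.

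Second, I would verify that $\Psi_\epsilon^N$ is an approximate solution of this very same equation. Applying the truncated operator to the truncated sum and collecting the coefficient of $\epsilon^j$ for $j\le N$, the recursion (\ref{eq:psif}) was designed precisely so that each such coefficient equals $-\beta G_j$; indeed $(I+\beta\mathcal{L})\Psi^{(j)} + \beta\sum_{m=1}^j \mathcal{L}^{(m)}\Psi^{(j-m)} = -\beta G_j$ is exactly (\ref{eq:psif}) after re-indexing the sum. The only surviving terms are the cross products $\epsilon^{n+k}\mathcal{L}^{(n)}\Psi^{(k)}$ with $n+k>N$; since the operators $\mathcal{L}^{(n)}$ are bounded from $\mathcal{C}^{2,\eta}$ to $\mathcal{C}^{1,\eta}$ by the constant $\widehat K$ \emph{independent of $n$} (established in the proof of Proposition \ref{prop432}) and the finitely many $\Psi^{(k)}$ have bounded $\mathcal{C}^{2,\eta}$-norms, this residual $\rho_\epsilon^{(2)}$ obeys $\|\rho_\epsilon^{(2)}\|_{\mathcal{C}^{1,\eta}} \le C\epsilon^{N+1}$. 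Subtracting the two identities gives
$$
\Big(I + \beta \sum_{n=0}^N \epsilon^n \mathcal{L}^{(n)}\Big)\big[f_\epsilon - \Psi_\epsilon^N\big] = \rho_\epsilon^{(1)} - \rho_\epsilon^{(2)}.
$$

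The crux of the argument is then the uniform invertibility, for small $\epsilon$, of the operator $I + \beta\sum_{n=0}^N \epsilon^n \mathcal{L}^{(n)} : \mathcal{C}^{2,\eta}(\partial C)\to\mathcal{C}^{1,\eta}(\partial C)$. I would factor it as $(I+\beta\mathcal{L})\big(I + (I+\beta\mathcal{L})^{-1}\beta\sum_{n=1}^N \epsilon^n\mathcal{L}^{(n)}\big)$. By Proposition \ref{prop431} the first factor has a bounded inverse $\mathcal{C}^{1,\eta}\to\mathcal{C}^{2,\eta}$, while the second factor is $I$ plus an operator on $\mathcal{C}^{2,\eta}$ of norm $O(\epsilon)$, again thanks to the $n$-independent bound on the $\mathcal{L}^{(n)}$. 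Hence for $\epsilon\le\epsilon_0(R,\|h\|_{\mathcal{C}^2},N)$ a Neumann series yields a bounded inverse $\mathcal{C}^{1,\eta}\to\mathcal{C}^{2,\eta}$ whose norm is controlled uniformly in $\epsilon$, and applying it to the last display gives
$$
\big\| f_\epsilon - \Psi_\epsilon^N \big\|_{\mathcal{C}^{2,\eta}(\partial C)} \le K\,\epsilon^{N+1}\big(1 + \|f_\epsilon\|_{\mathcal{C}^{2,\eta}(\partial C)}\big).
$$

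It remains to absorb the factor $\|f_\epsilon\|_{\mathcal{C}^{2,\eta}}$, which I would do by a bootstrap: specialising the estimate to $N=0$ gives $\|f_\epsilon - \Psi^{(0)}\|_{\mathcal{C}^{2,\eta}}\le K\epsilon(1+\|f_\epsilon\|_{\mathcal{C}^{2,\eta}})$, and since $\Psi^{(0)}$ has fixed norm, shrinking $\epsilon$ forces $\|f_\epsilon\|_{\mathcal{C}^{2,\eta}}\le M$ uniformly in $\epsilon$. Feeding this back into the displayed inequality yields (\ref{eq:decpsi}) with $K$ depending only on $N$, $R$ and $\|h\|_{\mathcal{C}^2}$. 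The main obstacle is precisely this uniform-in-$\epsilon$ inversion: it rests on combining the \emph{non-perturbative} invertibility of $I+\beta\mathcal{L}$ (Proposition \ref{prop431}, which cannot invoke the Fredholm alternative since $\mathcal{L}$ is not compact) with the crucial feature that the constants in Proposition \ref{prop432} do not deteriorate as the expansion order $n$ grows.
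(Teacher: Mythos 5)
Your argument is correct and follows essentially the same route as the paper: both compare the exact integral equation for $\Psi_\epsilon$ with the truncated equation satisfied by $\sum_{n\le N}\epsilon^n\Psi^{(n)}$ (whose residual is $O(\epsilon^{N+1})$ precisely because the recursion (\ref{eq:psif}) cancels the coefficients of $\epsilon^j$, $j\le N$, and because the bounds on the $\mathcal{L}^{(n)}$ from Proposition \ref{prop432} are uniform in $n$), and then invert the governing operator uniformly in $\epsilon$ as an $O(\epsilon^{N+1})$ perturbation built on the non-perturbative invertibility of $I+\beta\mathcal{L}$ from Proposition \ref{prop431}. The only cosmetic difference is that the paper centres the perturbation (via Kato's Lemma \ref{lemspec}) at $I+\beta\mathcal{L}_\epsilon$ and expresses the error as a difference of two resolvents applied to the fixed right-hand side $-\beta\sum_n\epsilon^n G_n$, which avoids the final bootstrap you use to absorb $\|\Psi_\epsilon\circ\tau_\epsilon\|_{\mathcal{C}^{2,\eta}}$ --- your bootstrap is nonetheless a valid way to close the estimate.
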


In order to prove Proposition \ref{prop433}, we need the
 following result \cite[Theorem 1.16]{15}.
\begin{lem} \label{lemspec} Let $X$ and $Y$ be two Banach spaces.
Let $T$ and $A$ be two operators from $X$ to $Y$, such that $D(T)
\subset D(A)$, where $D(T)$ and $D(A)$ are the domains of $T$ and
$A$, respectively. Let $T^{-1}$ exist and be a bounded operator
from $Y$ to $X$ (so that $T$ is closed). We suppose that two
positive constants $a, b$ exist such that
\begin{displaymath}
\begin{array}{rl}
\vspace{0.3cm} \forall u \in D(T),  &||Au|| \leq a ||u|| + b ||Tu||, \\
&a ||T^{-1}|| + b < 1.
\end{array}
\end{displaymath}
Then $S = T + A$ is closed and invertible, $S^{-1}$ is a bounded
operator from $Y$ to $X$ and the following inequalities hold:
\begin{displaymath}
||S^{-1}|| \leq \displaystyle \frac{||T^{-1}||}{1 - a ||T^{-1}|| -
b}, \quad ||S^{-1} - T^{-1}|| \leq \displaystyle
\frac{||T^{-1}||(a ||T^{-1}|| + b)}{1 - a ||T^{-1}|| - b}.
\end{displaymath}
If in addition $T^{-1}$ is compact, so is $S^{-1}$.
\end{lem}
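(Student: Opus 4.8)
The plan is to reduce the invertibility of $S=T+A$ to a Neumann series argument driven by the invertibility of $T$. The key preliminary observation is that, since $T^{-1}$ maps $Y$ boundedly into $D(T)\subset D(A)$, the composite operator $AT^{-1}$ is defined on all of $Y$ and takes values in $Y$. Moreover, for every $v\in Y$, writing $u=T^{-1}v$ and invoking the relative bound gives $||AT^{-1}v||=||Au||\leq a||u||+b||Tu||=a||T^{-1}v||+b||v||$, since $Tu=v$. Hence $AT^{-1}$ is a bounded operator on $Y$ with $||AT^{-1}||\leq a||T^{-1}||+b=:q<1$. This is the step where the two hypotheses $D(T)\subset D(A)$ and $a||T^{-1}||+b<1$ are genuinely consumed.

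With $q<1$ in hand, I would deduce from the Neumann series that $I+AT^{-1}$ is invertible on $Y$ with $||(I+AT^{-1})^{-1}||\leq (1-q)^{-1}=(1-a||T^{-1}||-b)^{-1}$. I would then record the factorization $S=(I+AT^{-1})\,T$ on $D(T)$: for $u\in D(T)$ one has $Su=Tu+Au=Tu+AT^{-1}(Tu)=(I+AT^{-1})Tu$, using $T^{-1}Tu=u$. Because $T:D(T)\to Y$ is a bijection with bounded inverse and $I+AT^{-1}:Y\to Y$ is a bijection with bounded inverse, the composite $S$ is a bijection from $D(T)$ onto $Y$, and one checks directly that $S^{-1}=T^{-1}(I+AT^{-1})^{-1}$.

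The two norm estimates then drop out of this formula. The first is immediate by submultiplicativity: $||S^{-1}||\leq ||T^{-1}||\,||(I+AT^{-1})^{-1}||\leq ||T^{-1}||/(1-a||T^{-1}||-b)$. For the second, I would use the resolvent-type identity $(I+AT^{-1})^{-1}-I=-(I+AT^{-1})^{-1}AT^{-1}$, which yields $S^{-1}-T^{-1}=-T^{-1}(I+AT^{-1})^{-1}AT^{-1}$; bounding the three factors and using $||AT^{-1}||\leq q$ for the last one gives exactly $||S^{-1}-T^{-1}||\leq ||T^{-1}||(a||T^{-1}||+b)/(1-a||T^{-1}||-b)$. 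Closedness of $S$ is then automatic: $S^{-1}$ is bounded and defined on all of $Y$, hence closed, and since the graph of $S$ is obtained from that of $S^{-1}$ by swapping coordinates, $S$ is closed as well. Finally, if $T^{-1}$ is compact, then $S^{-1}=T^{-1}(I+AT^{-1})^{-1}$ is the composition of the compact operator $T^{-1}$ with the bounded operator $(I+AT^{-1})^{-1}$, hence compact.

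I do not expect a serious obstacle: the statement is essentially a resolvent/Neumann series computation in the spirit of Kato. The only point demanding real care is the legitimacy of the factorization $S=(I+AT^{-1})T$ and of the operator $AT^{-1}$ itself, namely verifying that domains match so that $AT^{-1}$ is everywhere defined and that the identity $S^{-1}=T^{-1}(I+AT^{-1})^{-1}$ holds on the correct domains. Once this bookkeeping is settled, every remaining step is a manipulation of bounded operators on $Y$.
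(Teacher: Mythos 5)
Your proof is correct. Note that the paper itself offers no proof of this lemma --- it is quoted verbatim from Kato's \emph{Perturbation Theory for Linear Operators} (the citation [Theorem 1.16] in the paper's bibliography) --- so there is no in-paper argument to compare against; your argument, factoring $S=(I+AT^{-1})T$ and running a Neumann series off the bound $\|AT^{-1}\|\le a\|T^{-1}\|+b<1$, is precisely the classical proof of that theorem, and the domain bookkeeping (range of $T^{-1}$ equals $D(T)\subset D(A)$, hence $AT^{-1}$ everywhere defined, and $S^{-1}=T^{-1}(I+AT^{-1})^{-1}$), both norm estimates, the closedness of $S$ via its bounded everywhere-defined inverse, and the compactness assertion all check out.
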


\begin{proof}[Proof of Proposition \ref{prop433}]
By definition, $\Psi_{\epsilon}$ verifies:
$$(I + \beta \mathcal{L}_{\epsilon}) [\Psi_{\epsilon}] = - \beta \sum_{n=0}^{\infty} \epsilon^n G_n .$$
Besides, it follows, from our recursive construction of the
$(\Psi^{(i)})_{i \in [0,N]}$, that
\begin{displaymath}
(I + \beta \mathcal{L} + \beta \sum_{n=1}^N \epsilon^n
\mathcal{L}^{(n)})[\sum_{n=0}^N \epsilon^p \Psi^{(p)}] =  - \beta
\sum_{n=0}^{\infty} \epsilon^n G_n + \epsilon^{N+1} A_N,
\end{displaymath}
\vspace{0.2cm}where $A_N = \displaystyle \sum_{n=0}^{N}
\epsilon^n\, \sum_{p=0}^{N+n} \mathcal{L}^{(N+1+n-p)}\,
[\Psi^{(p)}] + \beta \sum_{n=0}^{\infty} \epsilon^n \, G_{N+1+n}.$

Therefore, we have
\begin{equation}\label{eq:diffpsi}
\begin{array}{ll}
 \Psi_{\epsilon} - \displaystyle \sum_{n=0}^N \epsilon^n\, \Psi^{(n)} = &\left ( (I +
 \beta \mathcal{L}_{\epsilon})^{-1} -  (I +  \beta \mathcal{L} + \beta \displaystyle
 \sum_{n=1}^N \epsilon^n \mathcal{L}^{(n)})^{-1} \right ) [ - \beta \displaystyle \sum_{n=0}^{\infty} \epsilon^n G_n]  \\
& - (I +  \beta \mathcal{L} + \beta \displaystyle \sum_{n=1}^N
\epsilon^n \mathcal{L}^{(n)})^{-1} [\epsilon^{N+1}   A_N].
 \end{array}
\end{equation}

We know from Proposition \ref{prop431} that the bounded operator
$T := I + \beta \mathcal{L}_{\epsilon} : \mathcal{C}^{2, \eta} \to
\mathcal{C}^{1, \eta}$ has a bounded inverse $T^{-1}:
\mathcal{C}^{1, \eta} \to \mathcal{C}^{2, \eta}$. We define
\begin{displaymath}
A:= \beta \mathcal{L} + \beta \sum_{n=1}^N \epsilon^n
\mathcal{L}^{(n)} - \beta \mathcal{L}_{\epsilon}.
\end{displaymath}
From Proposition \ref{prop432}, it follows that there exists a
constant $C(R, ||h||_{\mathcal{C}^2})$ such that
\begin{displaymath}
||A[u]||_{\mathcal{C}^{1, \eta}(\partial C)} \leq C \epsilon^{N+1}
||u||_{\mathcal{C}^{2, \eta}(\partial C)}.
\end{displaymath}
For $\epsilon$ small enough, we have
$$ C \epsilon^{N+1} ||T^{-1}|| <1.$$
In the following, we apply Lemma \ref{lemspec} with $a:=C \,
\epsilon^{N+1}$ and $b:=0$.

The operator $S:= I +  \beta \mathcal{L} + \beta \displaystyle
\sum_{n=1}^N \epsilon^n \mathcal{L}^{(n)}$ has a bounded inverse,
which satisfies:
\begin{displaymath}
\begin{array}{ll}
\vspace{0.2 cm} &||(I +  \beta \mathcal{L} + \beta \displaystyle
\sum_{n=1}^N \epsilon^n \mathcal{L}^{(n)})^{-1}|| \leq \displaystyle
\frac{||T^{-1}||}{1 - C \epsilon^{N+1} ||T^{-1}||},\\
\textrm{and} \qquad &||(I +  \beta \mathcal{L} + \beta
\displaystyle  \sum_{n=1}^N \epsilon^n \mathcal{L}^{(n)})^{-1} -
(I + \beta \mathcal{L}_{D})^{-1}|| \leq \displaystyle \frac{C
\epsilon^{N+1} ||T^{-1}||^2}{1 - C \epsilon^{N+1}||T^{-1}||}.
\end{array}
\end{displaymath}
We use (\ref{eq:diffpsi}) to get
\begin{displaymath}
\bigg|\bigg| \Psi_{\epsilon} - \sum_{n=0}^N \epsilon^n\,
\Psi^{(n)}\bigg|\bigg|_{\mathcal{C}^{2, \eta}}\leq  \displaystyle
\frac{\epsilon^{N+1} ||T^{-1}||}{1 - C \epsilon^{N+1} ||T^{-1}||}
\left(C ||T^{-1}|| \bigg|\bigg| \beta \displaystyle \frac{\partial
H_{\epsilon}}{\partial \tilde{\nu}}\bigg|\bigg|_{\mathcal{C}^{1,
\eta}}+ ||A_N||_{\mathcal{C}^{1, \eta}}\right).
\end{displaymath}
Recall that $H_{\epsilon}$ is $\mathcal{C}^{\infty}$ on $\partial
C$. Hence, for all $p \in \mathbb{N}$, $G_p$ is bounded. From
Proposition \ref{prop432}, we know that $\mathcal{L}^{(n)} :
\mathcal{C}^{2, \eta}(\partial C) \to \mathcal{C}^{1,
\eta}(\partial C)$, for all $n \in \mathbb{N}$, are bounded
operators. We have also, from Proposition \ref{prop431}, that $(I
+ \beta \mathcal{L})^{-1} : \mathcal{C}^{1, \eta}(\partial C) \to
\mathcal{C}^{2, \eta}(\partial C)$ is bounded. One can prove
recursively, from the construction (\ref{eq:decpsi}), that, for
all $p \in \mathbb{N}$, $\Psi^{(p)}$ is $\mathcal{C}^{2,
\eta}(\partial C)$ - bounded. $A_N$ and $ \displaystyle
\frac{\partial H_{\epsilon}}{\partial \tilde{\nu}}$ are therefore
$\mathcal{C}^{1, \eta}(\partial C)$ - bounded.

Finally, we obtain that there exists a constant $K(N, R,
||h||_{\mathcal{C}^2})$ such that
\begin{displaymath}
\bigg|\bigg| \Psi_{\epsilon} - \sum_{n=0}^N \epsilon^n\,
\Psi^{(n)}\bigg|\bigg|_{\mathcal{C}^{2, \eta}} \!\!\leq  K \,
\epsilon^{N+1},
\end{displaymath}
and the proof of Proposition \ref{prop433} is complete.
\end{proof}

We now explicit the first order term in the expansion of
$\widetilde{c_{\textrm{flr}}}$ as function of the cell membrane
perturbation.  For doing so, we introduce, for $n\in \mathbb{N}
\setminus \{0\}$ and $x \in
\partial \Omega$:
\begin{equation}\label{eq:vn}
v_n(x) := \displaystyle \sum_{i+j+k+l=n} \int_0^{2\pi}
\frac{h(y)^{i}}{i!}\,\left(\nabla_y \big(
\frac{\partial^{i}}{\partial r_y^i}\Gamma^{(0)}(x,y)\big) \cdot
\nu^{(j)}(y) \right)\Psi^{(k)}(\theta^y) \sigma^{(l)}(\theta^y)
d\theta^y.
\end{equation}
It follows from (\ref{eq:sigma1}), (\ref{eq:decnu1}),
(\ref{eq:decpsi}) and (\ref{eq:vn}), that for all $x\in \partial
\Omega$:
\begin{displaymath}
\begin{array}{rcl}
\vspace{0.3 cm} v_1(x) =&\displaystyle  \int_0^{2\pi}
\displaystyle \frac{\partial^2}{\partial r_y^2}\Gamma^{(0)}(x,y)
h(\theta^y) \Psi^{(0)}(\theta^y) R d\theta^y &- \displaystyle
\frac{1}{R} \displaystyle \int_0^{2\pi}  \displaystyle
\frac{\partial}{\partial \theta^y}\Gamma^{(0)}(x,y)
\Psi^{(0)}(\theta^y) h'(\theta^y) d\theta^y
 \\
+& \displaystyle \int_0^{2\pi} \displaystyle
\frac{\partial}{\partial r_y}\Gamma^{(0)}(x,y)
\Psi^{(1)}(\theta^y) R d\theta^y &+ \displaystyle \int_0^{2\pi}
\displaystyle \frac{\partial}{\partial r_y}\Gamma^{(0)}(x,y)
\Psi^{(0)}(\theta^y) h(\theta^y) d\theta^y .
\end{array}
\end{displaymath}
In terms of polar coordinates, the Laplacian has the following
expression:
\begin{displaymath}
\Delta = \displaystyle \frac{\partial^2}{\partial r^2} +
\frac{1}{r} \frac{\partial}{\partial r} + \frac{1}{r^2}
\frac{\partial^2}{\partial \theta^2}.
\end{displaymath}

Therefore, we have for all $x \in \partial \Omega$:
\begin{displaymath}
\begin{array}{rl}
\vspace{0.3 cm} v_1(x) =& - \displaystyle  \frac{1}{R}
\int_0^{2\pi} \displaystyle \frac{\partial^2}{\partial
{\theta^y}^2}\Gamma^{(0)}(x,y) h(\theta^y) \Psi^{(0)}(\theta^y)
d\theta^y -  \displaystyle  \frac{1}{R}\displaystyle \int_0^{2\pi}
\displaystyle \frac{\partial}{\partial \theta^y}\Gamma^{(0)}(x,y)
\Psi^{(0)}(\theta^y) h'(\theta^y) d\theta^y
 \\
+& \displaystyle \int_0^{2\pi} \displaystyle
\frac{\partial}{\partial r_y}\Gamma^{(0)}(x,y)
\Psi^{(1)}(\theta^y) R d\theta^y .
\end{array}
\end{displaymath}
Besides, we obtain, thanks to (\ref{eq:decpsi}) and (\ref{eq:vn}),
that
\begin{displaymath}
\mathcal{D}^{(0)}_{C_{\epsilon}}[\Psi_{\epsilon}](x) = - \beta
\,\mathcal{D}^{(0)}_C(I + \beta \mathcal{L})^{-1}\left[
\displaystyle \frac{\partial H_{\epsilon}}{\partial \nu}
\bigg|_{\partial C}\right] + \sum_{n=1}^N \epsilon^n v_n(x)+
o(\epsilon^N) .
\end{displaymath}

The integral equation (\ref{psio}) that $\Psi$ verifies, then
gives us
\begin{displaymath}
\mathcal{D}^{(0)}_{C_{\epsilon}}[\Psi_{\epsilon}] -
\mathcal{D}^{(0)}_{C} [\Psi] = - \beta \,\mathcal{D}^{(0)}_C(I +
\beta \mathcal{L})^{-1}\left[ \displaystyle \frac{\partial
H_{\epsilon}}{\partial \nu} \bigg|_{\partial C} -  \displaystyle
\frac{\partial H}{\partial \nu} \bigg|_{\partial C}\right] +
\sum_{n=1}^N \epsilon^n v_n + o(\epsilon^N).
\end{displaymath}

By definition, we have on $\partial C$
\begin{displaymath}
H_{\epsilon} - H =
\mathcal{D}^{(0)}_{\Omega}[u_{\epsilon}|_{\partial \Omega} -
u|_{\partial \Omega}].
\end{displaymath}

Let $\mathcal{E}$ be the operator defined by
\begin{equation} \label{defE}
\mathcal{E}[v](x) := \beta \,\mathcal{D}^{(0)}_C(I + \beta
\mathcal{L})^{-1} \left[ \displaystyle \frac{\partial}{\partial
\nu}(\mathcal{D}^{(0)}_{\Omega}v)\bigg|_{\partial C} \right](x) -
(\frac{I}{2} + \mathcal{K}^{(0)}_{\Omega})[v](x),
\end{equation}
for all $v \in L_0^2(\partial \Omega)$ and $x \in \partial
\Omega$.

Recall that on $\partial \Omega$:
\begin{displaymath}
u_{\epsilon}(x) - u(x) = (\frac{I}{2} +
\mathcal{K}^{(0)}_{\Omega})[u_{\epsilon} - u](x) +
\mathcal{D}^{(0)}_{C_{\epsilon}}[\Psi_{\epsilon}](x) -
\mathcal{D}^{(0)}_C[\Psi](x).
\end{displaymath}
We obtain, for all $x \in \partial \Omega$,  that
\begin{equation} \label{defE2}
(I + \mathcal{E})[u_{\epsilon} - u](x) =  \sum_{n=1}^N \epsilon^n
v_n(x)+ o(\epsilon^N),
\end{equation}
and, at first order,
\begin{displaymath}
(I + \mathcal{E})[u_{\epsilon} - u](x) =  \epsilon \, v_1(x) +
o(\epsilon),
\end{displaymath}
where $v_1$ is given by the formula:
\begin{equation}\label{eq:v1}
 v_1(x) = - \displaystyle  \frac{1}{R} \int_0^{2\pi} \displaystyle \frac{\partial}{\partial {\theta^y}}
 \left(h(\theta^y) \displaystyle \frac{\partial}{\partial \theta^y} \Gamma^{(0)}(x,y)\right)
 \Psi^{(0)}(\theta^y) d\theta^y + \mathcal{D}^{(0)}_{C}[\Psi^{(1)}](x).
 \end{equation}

\begin{prop} \label{prop434}
Let $\mathcal{E}$ be defined by (\ref{defE}). The operator $I +
\mathcal{E}$ is invertible on $L^2_0(\partial \Omega)$.
\end{prop}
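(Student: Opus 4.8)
The plan is to identify $I+\mathcal{E}$ as a compact perturbation of a multiple of the identity, so that it is Fredholm of index zero on $L^2_0(\partial\Omega)$, and then to prove injectivity; the Fredholm alternative will then give invertibility with bounded inverse. First I would unfold the definition (\ref{defE}) into the decomposition
\begin{displaymath}
(I+\mathcal{E})[v] = \left(\tfrac{1}{2}I - \mathcal{K}^{(0)}_{\Omega}\right)[v] + \mathcal{R}[v], \qquad \mathcal{R}[v] := \beta\,\mathcal{D}^{(0)}_C(I+\beta\mathcal{L})^{-1}\Big[\tfrac{\partial}{\partial\nu}\big(\mathcal{D}^{(0)}_{\Omega}v\big)\big|_{\partial C}\Big]\Big|_{\partial\Omega},
\end{displaymath}
so that $I+\mathcal{E} = \tfrac12 I + \big(\mathcal{R}-\mathcal{K}^{(0)}_{\Omega}\big)$. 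For a $\mathcal{C}^2$ domain the kernel of $\mathcal{K}^{(0)}_{\Omega}$ is bounded, hence $\mathcal{K}^{(0)}_{\Omega}$ is a Hilbert--Schmidt, thus compact, operator on $L^2(\partial\Omega)$. The operator $\mathcal{R}$ is compact because $\partial C$ and $\partial\Omega$ are disjoint: the map $v\mapsto \partial_\nu\mathcal{D}^{(0)}_{\Omega}[v]|_{\partial C}$ has a $\mathcal{C}^{\infty}$ kernel and sends $L^2(\partial\Omega)$ into $\mathcal{C}^{1,\eta}(\partial C)$, the inverse $(I+\beta\mathcal{L})^{-1}$ is bounded from $\mathcal{C}^{1,\eta}(\partial C)$ into $\mathcal{C}^{2,\eta}(\partial C)$ by Proposition \ref{prop431}, and $\mathcal{D}^{(0)}_C[\cdot]\big|_{\partial\Omega}$ is again smoothing. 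Thus $\mathcal{R}$ factors through a smooth-kernel operator and is compact, so $I+\mathcal{E}$ is Fredholm of index zero.

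For injectivity, which is the crux, suppose $(I+\mathcal{E})[v]=0$ with $v\in L^2_0(\partial\Omega)$. Set $\Psi_v := -\beta(I+\beta\mathcal{L})^{-1}\big[\partial_\nu\mathcal{D}^{(0)}_{\Omega}[v]|_{\partial C}\big]$ and define the function
\begin{displaymath}
w := \mathcal{D}^{(0)}_{\Omega}[v] + \mathcal{D}^{(0)}_{C}[\Psi_v] \qquad \textrm{on } \mathbb{R}^2\setminus(\partial C\cup\partial\Omega),
\end{displaymath}
which is harmonic off $\partial C\cup\partial\Omega$. By Lemma \ref{prop311} both terms have continuous normal derivative across $\partial C$, so $\partial_\nu w|_+ = \partial_\nu w|_-$ there, while the defining equation for $\Psi_v$ rearranges to $w|_+ - w|_- = -\Psi_v = \beta\,\partial_\nu w$ on $\partial C$; hence $w$ satisfies the membrane transmission conditions of (\ref{eq:u2}) with the correct factor $\beta$. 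Unfolding (\ref{defE}) also shows $(I+\mathcal{E})[v] = v - w|^{\mathrm{int}}_{\partial\Omega}$, so the hypothesis gives $w|^{\mathrm{int}}_{\partial\Omega}=v$; since $\mathcal{D}^{(0)}_{\Omega}[v]$ jumps by $-v$ across $\partial\Omega$ while $\mathcal{D}^{(0)}_C[\Psi_v]$ is smooth there, the exterior trace is $w|^{\mathrm{ext}}_{\partial\Omega}=0$.

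Because $v$ and $\Psi_v$ have mean zero, $w$ decays at infinity, so uniqueness for the exterior Dirichlet problem forces $w\equiv 0$ in $\mathbb{R}^2\setminus\overline{\Omega}$; continuity of $\partial_\nu w$ across $\partial\Omega$ then yields $\partial_\nu w|_{\partial\Omega}=0$. Therefore $w|_{\Omega}$ solves (\ref{eq:u2}) with $g_{\mathrm{ele}}=0$ and mean-zero trace, and the energy identity (using $\beta>0$, exactly as in the uniqueness statement of Proposition \ref{prop421}) forces $w|_{\Omega}=0$; hence $v=w|^{\mathrm{int}}_{\partial\Omega}=0$. Combining this with the Fredholm property gives that $I+\mathcal{E}$ is bijective on $L^2_0(\partial\Omega)$ with bounded inverse.

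I expect the main obstacles to be the two bookkeeping points rather than any deep estimate: first, verifying that $I+\mathcal{E}$ genuinely maps $L^2_0(\partial\Omega)$ into itself (which rests on a Gauss-type identity for the double-layer kernel and on $\Psi_v$ lying in $L^2_0(\partial C)$), and second, tracking the orientation and jump signs in the construction of $w$ so that the membrane condition emerges with precisely the coefficient $\beta$ of (\ref{eq:u2}). The compactness of $\mathcal{R}$ and $\mathcal{K}^{(0)}_{\Omega}$ and the energy uniqueness are then routine.
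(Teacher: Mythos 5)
Your proof is correct and follows essentially the same route as the paper: write $I+\mathcal{E}$ as an invertible multiple of the identity plus a compact remainder, then prove injectivity by recognizing a null vector $v$ as the boundary trace of the layer-potential function $w$ solving the homogeneous transmission problem (\ref{eq:u2}), whose uniqueness (via the energy identity and $\beta>0$) forces $v=0$. Your write-up is in fact slightly more careful than the paper's on two points: the paper asserts that $\mathcal{E}$ itself is compact (it is not, since it contains the term $-\tfrac{1}{2}I$; only $\mathcal{E}+\tfrac{1}{2}I$ is), and it does not address whether $\mathcal{E}$ preserves $L^2_0(\partial \Omega)$, which you correctly flag as a point still to be verified.
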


\begin{proof}
The operator $\mathcal{E}$ is compact. We can therefore apply the
Fredholm alternative. Let us prove the injectivity of $I +
\mathcal{E}$. For doing so, we introduce the function $v$ defined
on $\Omega$ by
\begin{displaymath}
v(x) = \mathcal{D}^{(0)}_{\Omega}[v|_{\partial \Omega}] -  \beta
\,\mathcal{D}^{(0)}_C(I + \beta \mathcal{L})^{-1} \left[
\displaystyle \frac{\partial}{\partial
\nu}(\mathcal{D}^{(0)}_{\Omega}[v])\bigg|_{\partial C} \right].
\end{displaymath}
It follows from Proposition \ref{prop421} that $v$ is solution to
(\ref{eq:u2}) with $H = \mathcal{D}^{(0)}_{\Omega}[v|_{\partial
\Omega}]$. The decomposition of the representation formula of such
a solution is unique so that we have
$\mathcal{S}^{(0)}_{\Omega}[\displaystyle \frac{\partial
v}{\partial \nu} |_{\partial \Omega}] = 0$ and hence
$\displaystyle \frac{\partial v}{\partial \nu} \bigg |_{\partial
\Omega} = 0$. Since $v$ is harmonic, we obtain that $v$ is
constant in $\Omega$. Recall that $\displaystyle \int_{\partial
\Omega} v = 0$. Therefore, we have $v = 0$ in $\Omega$. Besides,
on $\partial \Omega$, $v$ verifies:
\begin{displaymath}
\forall x \in \partial \Omega,  \qquad v(x) = - \mathcal{E}[v](x).
\end{displaymath}
We have proved the injectivity and hence invertibility of $I +
\mathcal{E}$ on $L^2_0(\partial \Omega)$.
\end{proof}

Now, combining Proposition \ref{prop434} and (\ref{defE2}) yields
\begin{displaymath}
u_{\epsilon}(x) - u(x) =  \sum_{n=1}^N \epsilon^n (I +
\mathcal{E})^{-1}[v_n](x)+ o(\epsilon^N).
\end{displaymath}

Note that by construction $\Psi^{(n)}$ and so $v_n$ still depend
on $\epsilon$. We can remove this dependance from our asymptotic
formula in the following way. We introduce $(G_n^0)_{n \in
\mathbb{N}}$ the expansion of $\displaystyle \frac{\partial
H}{\partial \tilde{\nu}}$. Let $(v_n^0)_{n \in \mathbb{N}
\setminus \{0\}}$ and $(\Psi^{(n)}_0)_{n \in \mathbb{N}}$ be
defined by (\ref{eq:vn}) and (\ref{eq:psif}), where $(G_n)_{n \in
\mathbb{N}}$ is replaced respectively by $(G_n^0)_{n \in
\mathbb{N}}$. We then obtain that
\begin{displaymath}
\begin{array}{ll}
\vspace{0.3 cm} \forall x \in \partial C,\quad & \Psi_{\epsilon}(x) = \Psi^{(0)}_0(x) + o(1), \\
\forall x \in \partial \Omega,\quad & u_{\epsilon}(x) = u(x) +
o(1).
\end{array}
\end{displaymath}
By repeating the same procedure with $H + \epsilon
\,\mathcal{D}^{(0)}_{\Omega}(I + \mathcal{E})^{-1}[v^0_1]$ instead
of $H$, one finds $(v^1_n)_{n \in \mathbb N^*}$ and
$(\Psi^{(n)}_1)_{n \in \mathbb{N}}$ such that
\begin{displaymath}
\begin{array}{ll}
\vspace{0.3 cm} \forall x \in \partial C,\quad & \Psi_{\epsilon}(x) = \Psi^{(0)}_1(x) + \epsilon \,\Psi^{(1)}_1(x) + o(\epsilon), \\
\forall x \in \partial \Omega,\quad & u_{\epsilon}(x) = u(x) +
\epsilon (I + \mathcal{E})^{-1}[v^1_1] +o(\epsilon).
\end{array}
\end{displaymath}

One can prove the following proposition, by repeating the same
procedure until one obtains $(v_n^{N})_{n \in \mathbb{N} \setminus
\{0\}}$.
\begin{prop}
Let $(v^N_n)_{n \in [1, N]}$ and $(\Psi^{(n)}_N)_{n \in [0,N]}$ be
\vspace{0.15 cm}  the functions defined above. The following
asymptotic formulas hold:
\begin{displaymath}
\begin{array}{ll}
\forall x \in \partial C,\quad & \Psi_{\epsilon}(x) =  \displaystyle \sum_{n=1}^N \epsilon^n \, \Psi^{(n)}_N+ o(\epsilon^N), \\
\forall x \in \partial \Omega,\quad & u_{\epsilon}(x) - u(x) =
\displaystyle \sum_{n=1}^N \epsilon^n (I + \mathcal{E})^{-1} \,
[v^N_n](x)+ o(\epsilon^N).
\end{array}
\end{displaymath}
The remainder $o(\epsilon^N)$ depends only on $N$, $R$ and
$||h||_{\mathcal{C}^2}$.
\end{prop}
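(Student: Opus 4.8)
The plan is to prove both expansions simultaneously by iterating the construction, viewing the preceding display as a self-consistent (fixed-point) equation for $u_\epsilon - u$. The coefficients $\Psi^{(n)}$ from (\ref{eq:psif}) and $v_n$ from (\ref{eq:vn}) are not genuinely $\epsilon$-independent: they depend on $\epsilon$ only through $H_\epsilon$, which in turn is fixed by $u_\epsilon|_{\partial\Omega}$ via $H_\epsilon - H = \mathcal{D}^{(0)}_\Omega[(u_\epsilon - u)|_{\partial\Omega}]$. Substituting this relation into the identity $(I+\mathcal{E})[u_\epsilon - u] = \sum_{n=1}^N \epsilon^n v_n + o(\epsilon^N)$ coming from (\ref{defE2}) closes the loop, and the aim of the iteration is to solve it order by order, thereby replacing the $\epsilon$-dependent coefficients by genuinely $\epsilon$-free ones.

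The engine of the induction is that this fixed-point map gains one power of $\epsilon$ at each application. Suppose $u_\epsilon - u = \sum_{n=1}^{j}\epsilon^n w_n + O(\epsilon^{j+1})$ with $\epsilon$-free $w_n$ already identified. Setting $\hat H_j := H + \sum_{n=1}^j \epsilon^n \mathcal{D}^{(0)}_\Omega[w_n]$ yields an approximation of $H_\epsilon$ with $H_\epsilon - \hat H_j = \mathcal{D}^{(0)}_\Omega[(u_\epsilon - u)|_{\partial\Omega} - \sum_{n\le j}\epsilon^n w_n] = O(\epsilon^{j+1})$; since $\partial C \Subset \Omega$, the potential $\mathcal{D}^{(0)}_\Omega$ is smoothing near $\partial C$, so this bound holds together with all the derivatives of $H_\epsilon$ entering (\ref{eq:deche}). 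Feeding $\hat H_j$ through the bounded linear recursions (\ref{eq:psif}) and (\ref{eq:vn})---using the boundedness of $(I+\beta\mathcal{L})^{-1}$ (Proposition \ref{prop431}) and of the $\mathcal{L}^{(n)}$ (Proposition \ref{prop432})---therefore reproduces the true $v_n$ up to $O(\epsilon^{j+1})$. Because every $v_n$ is multiplied by $\epsilon^n$ with $n\geq 1$ in $\sum_n \epsilon^n (I+\mathcal{E})^{-1}[v_n]$, re-collecting powers of $\epsilon$ shows that the resulting $\epsilon^{j+1}$-coefficient depends only on $w_1,\dots,w_j$; inverting $I+\mathcal{E}$ (Proposition \ref{prop434}) then produces the next coefficient $w_{j+1}$ as an $\epsilon$-free function. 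This is exactly the passage carried out in the text from $H$ to $H + \epsilon\,\mathcal{D}^{(0)}_\Omega(I+\mathcal{E})^{-1}[v^0_1]$, and so on.

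Starting from the trivial input $\hat H_0 = H$ (legitimate because $u_\epsilon - u = o(1)$) and performing this step for the stages $0,1,\dots,N$, the coefficients up to order $\epsilon^N$ stabilize into the $\epsilon$-free functions $\Psi^{(n)}_N$ and $(I+\mathcal{E})^{-1}[v^N_n]$, which gives the two claimed formulas. The remainder is then controlled by inserting the stabilized expansion into Proposition \ref{prop433}, which bounds $\Psi_\epsilon - \sum_n \epsilon^n\Psi^{(n)}$, and by using the uniform bounds of Propositions \ref{prop431}, \ref{prop432} and \ref{prop434}. Since all these constants depend only on $R$ and $\|h\|_{\mathcal{C}^2}$ and the iteration is run a fixed number $N+1$ of times, the $o(\epsilon^N)$ remainder depends only on $N$, $R$ and $\|h\|_{\mathcal{C}^2}$.

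The main difficulty I anticipate is making the one-order gain rigorous: one must check that replacing the true $H_\epsilon$ by the $\epsilon$-free polynomial $\hat H_j$ disturbs the computed $\Psi^{(n)}_j$ and $v^j_n$ only at order $O(\epsilon^{j+1})$, so that the lower-order coefficients, once determined, are never altered and become genuinely $\epsilon$-independent. This rests on the smoothing of $\mathcal{D}^{(0)}_\Omega$ away from $\partial\Omega$---needed to control every derivative of $H_\epsilon - \hat H_j$ appearing in (\ref{eq:deche})---and on the $\epsilon$-uniformity of the operator bounds of Propositions \ref{prop431}--\ref{prop434}. The delicate part is the bookkeeping: tracking the error as it propagates through the nested recursions (\ref{eq:psif})--(\ref{eq:vn}) while keeping each constant uniform in $\epsilon$ and independent of $n$.
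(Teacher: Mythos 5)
Your proposal is correct and follows essentially the same route as the paper, which itself only sketches the argument ("one can prove the following proposition by repeating the same procedure"): a bootstrap in which $H_\epsilon$ is replaced at each stage by the $\epsilon$-polynomial built from the previously determined coefficients, the recursions (\ref{eq:psif})--(\ref{eq:vn}) are re-run, and $I+\mathcal{E}$ is inverted to extract the next $\epsilon$-free coefficient, with remainders controlled by Propositions \ref{prop431}--\ref{prop434} and \ref{prop433}. You correctly identify the one technical point the paper leaves implicit, namely that the smoothing of $\mathcal{D}^{(0)}_{\Omega}$ away from $\partial\Omega$ propagates the $O(\epsilon^{j+1})$ error to all derivatives of $H_\epsilon$ entering (\ref{eq:deche}); note only that the first sum in the statement should start at $n=0$ (since $\Psi^{(0)}_N=\Psi\neq 0$), consistent with the first-order formulas displayed just before the proposition.
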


We can now compute the first order term in the expansion of
$\widetilde{c_{\mathrm{flr}}}$.

Recall that $\widetilde{c_{\mathrm{flr}}} = \delta
\,[u_{\epsilon}]\big|_{\partial C_{\epsilon}}$. The boundary
conditions (\ref{eq:u2}), that $u_{\epsilon}$ satisfies, give us
\begin{displaymath}
\widetilde{c_{\mathrm{flr}}} = \delta \beta \displaystyle
\frac{\partial u_{\epsilon}}{\partial \nu} = - \delta
\Psi_{\epsilon}.
\end{displaymath}

Let us find the first order approximation of $\Psi_{\epsilon}$. We
apply the previous procedure to obtain $\Psi^{(1)}_1$. Hence, one
introduces:
\begin{equation}\label{eq:psi0}
\begin{array}{l}
\vspace{0.3cm} \Psi^{(0)}_0 = -\beta (I + \beta \mathcal{L})^{-1}
\left[\displaystyle \frac{\partial H}{\partial \nu} \right], \\
\Psi^{(1)}_0 = - \beta (I + \beta \mathcal{L})^{-1} \left[ -
\displaystyle \frac{h'}{R^2}\frac{\partial H}{\partial \theta} + h
\frac{\partial^2 H}{\partial r^2} - h \displaystyle
\frac{\partial}{\partial
r}\mathcal{D}^{(0)}_{C}(\Psi^{(0)}_0)\right].
\end{array}
\end{equation}
Observe that $\Psi^{(0)}_0 = \Psi$. Thanks to (\ref{eq:v1}), one
can write $v_1^0$ for all $x \in
\partial \Omega$:
\begin{equation}\label{eq:v10}
 v_1^0(x) = - \displaystyle  \frac{1}{R}
  \int_0^{2\pi} \displaystyle \frac{\partial}{\partial
  {\theta^y}}\left(h(\theta^y) \displaystyle \frac{\partial}{\partial \theta^y}
  \Gamma^{(0)}(x,y)\right) \Psi(\theta^y) d\theta^y + \mathcal{D}^{(0)}_{C}[\Psi^{(1)}_0](x).
 \end{equation}
 Therefore, we get
 \begin{equation}\label{eq:psi1}
\begin{array}{l}
\vspace{0.3cm} \Psi^{(0)}_1 = \Psi^{(0)}_0 = \Psi\\
\Psi^{(1)}_1 = - \beta (I + \beta \mathcal{L})^{-1} \left( -
\displaystyle \frac{h'}{R^2}\frac{\partial H}{\partial \theta} + h
\frac{\partial^2 H}{\partial r^2} + \frac{\partial}{\partial r}
\mathcal{D}^{(0)}_{\Omega}\,(I + \mathcal{E})^{-1}[v^0_1] - h
\displaystyle \frac{\partial}{\partial r}
\mathcal{D}^{(0)}_{C}[\Psi]\right).
\end{array}
 \end{equation}

We first recall the mapping properties of the operators
$\mathcal{K}^{(0)}_D$ and $(\mathcal{K}^{(0)}_D)^*$. It is known
that if $D$ is a $\mathcal{C}^{2, \eta}$ domain, then
$\mathcal{K}^{(0)}_D$ and $(\mathcal{K}^{(0)}_D)^*$ map
continuously $\mathcal{C}^{1, \eta}(\partial D)$ into
$\mathcal{C}^{2, \eta}(\partial D)$ (see, for instance,
\cite{stein}). We also need the following result.
\begin{lem} \label{prop437}
Let $D$ be a $\mathcal{C}^{2, \eta}$ domain in $\mathbb{R}^2$, for
$0<\eta<1$. Let $\Psi \in \mathcal{C}^{1, \eta}(\partial D)$. We
have
\begin{displaymath}
\displaystyle \frac{\partial}{\partial
T}\mathcal{D}^{(0)}_D[\Psi]\bigg|_{\pm} = \mp\,\frac{1}{2}\,
\frac{\partial \Psi}{\partial T} + \displaystyle
\frac{\partial}{\partial T} \,\mathcal{K}^{(0)}_D[\Psi].
\end{displaymath}
\end{lem}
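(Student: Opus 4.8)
The plan is to obtain the identity by tangentially differentiating the classical trace formula for the double layer potential. Exactly as in Lemma \ref{prop311}, but now for the Laplacian fundamental solution $\Gamma^{(0)}$, the operator $\mathcal{D}^{(0)}_D$ satisfies the jump relation
\[
\mathcal{D}^{(0)}_D[\Psi]\big|_{\pm} = \left(\mp\tfrac12 I + \mathcal{K}^{(0)}_D\right)[\Psi] \qquad \textrm{on } \partial D .
\]
Formally, applying $\partial/\partial T$ to both sides yields precisely the claimed formula. The entire content of the lemma is therefore to justify that this formal differentiation is legitimate, that is, that the tangential derivative may be interchanged with the one-sided boundary trace.

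First I would record the regularity that makes each side meaningful. Since $\partial D$ is $\mathcal{C}^{2,\eta}$ and $\Psi\in\mathcal{C}^{1,\eta}(\partial D)$, the mapping property recalled just before the lemma gives $\mathcal{K}^{(0)}_D[\Psi]\in\mathcal{C}^{2,\eta}(\partial D)$, so the right-hand side of the jump relation, and hence each one-sided trace $\mathcal{D}^{(0)}_D[\Psi]|_{\pm}$, belongs to $\mathcal{C}^{1,\eta}(\partial D)$ and may be differentiated along the boundary. Moreover, the classical Schauder regularity theory for layer potentials shows that $\mathcal{D}^{(0)}_D[\Psi]$, which is harmonic in $D$ and in $\mathbb{R}^2\setminus\overline{D}$, extends to a $\mathcal{C}^{1,\eta}$ function up to $\partial D$ from each side; in particular the one-sided limits of $\nabla\mathcal{D}^{(0)}_D[\Psi]$ exist and are $\mathcal{C}^{0,\eta}$ on $\partial D$. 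This is what gives meaning to $\frac{\partial}{\partial T}\mathcal{D}^{(0)}_D[\Psi]|_{\pm} = (\nabla\mathcal{D}^{(0)}_D[\Psi])|_{\pm}\cdot T$.

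The core step is the interchange. I would parametrize $\partial D$ by arclength, $s\mapsto x(s)$, so that $T(x(s)) = x'(s)$, and let $x(s)$ be approached nontangentially from inside $D$. Because $\mathcal{D}^{(0)}_D[\Psi]$ is $\mathcal{C}^1$ up to $\partial D$ from inside, $\nabla\mathcal{D}^{(0)}_D[\Psi]$ converges uniformly in $s$ to its trace, so the chain rule gives
\[
\frac{d}{ds}\Big(\mathcal{D}^{(0)}_D[\Psi]\big|_{+}(x(s))\Big)
= \big(\nabla\mathcal{D}^{(0)}_D[\Psi]\big)\big|_{+}(x(s))\cdot x'(s)
= \frac{\partial}{\partial T}\mathcal{D}^{(0)}_D[\Psi]\big|_{+}(x(s)),
\]
and likewise on the $-$ side. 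Thus the tangential derivative of the one-sided trace equals the one-sided trace of the tangential gradient, which is exactly the interchange needed.

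Combining these observations, I would differentiate the jump relation in $s$ and use the interchange to conclude
\[
\frac{\partial}{\partial T}\mathcal{D}^{(0)}_D[\Psi]\big|_{\pm}
= \mp\tfrac12\,\frac{\partial \Psi}{\partial T} + \frac{\partial}{\partial T}\mathcal{K}^{(0)}_D[\Psi],
\]
which is the assertion. The only genuinely delicate point is the one-sided $\mathcal{C}^{1,\eta}$ regularity of $\mathcal{D}^{(0)}_D[\Psi]$ up to $\partial D$ that underlies the interchange; this is where the $\mathcal{C}^{2,\eta}$ smoothness of $\partial D$ and the $\mathcal{C}^{1,\eta}$ regularity of the density $\Psi$ are essential, and it is the step I expect to require the most care, invoking the classical regularity theory for the double layer potential rather than a direct computation.
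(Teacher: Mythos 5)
Your proposal is correct and follows exactly the paper's own route: the paper likewise takes the jump relation $\mathcal{D}^{(0)}_D[\Psi]|_{\pm} = (\mp\tfrac12 I + \mathcal{K}^{(0)}_D)[\Psi]$, differentiates tangentially, and invokes the mapping properties of $\mathcal{K}^{(0)}_D$. Your additional care in justifying the interchange of the tangential derivative with the one-sided trace only makes explicit what the paper leaves implicit.
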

\begin{proof}
Let $\Psi \in \mathcal{C}^{1, \eta}(\partial D)$. Recall the jump
relation of the double layer potential across the boundary
$\partial D$:
\begin{displaymath}
\mathcal{D}^{(0)}_D [\Psi]|_{\pm} = \displaystyle \left( \mp
\frac{I}{2} + \mathcal{K}^{(0)}_D \right)[\Psi].
\end{displaymath}
The result of the proposition is simply obtained by taking the
tangential derivative of the previous formula and making use of
the mapping properties of $\mathcal{K}^{(0)}_D$.
\end{proof}

\begin{cor} \label{prop438}
Let $D$ be a $\mathcal{C}^{2, \eta}$ domain in $\mathbb{R}^2$, for
$0<\eta<1$. Let $h\in \mathcal{C}^2(\partial D)$ and let $\Psi \in
\mathcal{C}^{2, \eta}(\partial D)$. We have
\begin{equation}\label{eq:dDdT}
  - \displaystyle \frac{\partial}{\partial T}h \frac{\partial }{\partial T}
  \mathcal{D}^{(0)}_D[\Psi]
  \bigg|_{-} + \left (- \frac{I}{2} + (\mathcal{K}^{(0)}_D)^* \right )\!\!\left [  - \displaystyle
  \frac{\partial}{\partial T} h \frac{\partial \Psi}{\partial T} \right ] = \displaystyle \frac{\partial}{\partial T}
  \mathcal{K}^{(0)}_D\big[h\frac{\partial \Psi}{\partial T}\big] - \frac{\partial}{\partial T} h
  \frac{\partial}{\partial T}
  \mathcal{K}^{(0)}_D[\Psi].
 \end{equation}
 In the particular case of the disk $C$, we obtain that
 \begin{displaymath}
  - \displaystyle \frac{1}{R^2}\frac{\partial}{\partial \theta}\, h \frac{\partial }{\partial \theta}
  \,\mathcal{D}^{(0)}_C[\Psi] \bigg|_{-} + \left (- \frac{I}{2} + (\mathcal{K}^{(0)}_C)^*
   \right )\!\!\left [  - \displaystyle \frac{1}{R^2}\frac{\partial}{\partial \theta} h
   \frac{\partial \Psi}{\partial \theta} \right ] = 0.
 \end{displaymath}

\end{cor}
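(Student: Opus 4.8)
The plan is to reduce the asserted identity, via Lemma \ref{prop437}, to a single commutation relation between the Neumann--Poincar\'e operator $\mathcal{K}^{(0)}_D$ and its adjoint, and then to establish that relation by a kernel computation. Concretely, I would first substitute the interior tangential jump $\frac{\partial}{\partial T}\mathcal{D}^{(0)}_D[\Psi]\big|_{-} = \frac12\,\frac{\partial \Psi}{\partial T} + \frac{\partial}{\partial T}\mathcal{K}^{(0)}_D[\Psi]$ from Lemma \ref{prop437} into the first term on the left-hand side. The resulting term $-\frac12\frac{\partial}{\partial T}\big(h\,\frac{\partial\Psi}{\partial T}\big)$ cancels exactly against the $-\frac{I}{2}$ contribution of the second operator acting on $-\frac{\partial}{\partial T}(h\frac{\partial\Psi}{\partial T})$, while $-\frac{\partial}{\partial T}\big(h\,\frac{\partial}{\partial T}\mathcal{K}^{(0)}_D[\Psi]\big)$ reproduces the second term of the right-hand side. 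After these cancellations the whole statement collapses to the single identity
\[
\frac{\partial}{\partial T}\mathcal{K}^{(0)}_D[\psi] = -\,(\mathcal{K}^{(0)}_D)^*\Big[\frac{\partial \psi}{\partial T}\Big], \qquad \psi := h\,\frac{\partial \Psi}{\partial T}, \qquad (\star)
\]
which is the heart of the matter. All functions in sight are regular enough for the terms to make sense: since $\partial D$ is $\mathcal{C}^{2,\eta}$, the operators $\mathcal{K}^{(0)}_D$ and $(\mathcal{K}^{(0)}_D)^*$ map $\mathcal{C}^{1,\eta}$ into $\mathcal{C}^{2,\eta}$, and $\psi\in\mathcal{C}^{1,\eta}(\partial D)$.

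To prove $(\star)$ I would work at the level of kernels, writing $\mathcal{K}^{(0)}_D$ with kernel $K(x,y)=\partial_{\nu_y}\Gamma^{(0)}(x-y)$ and $(\mathcal{K}^{(0)}_D)^*$ with kernel $K^*(x,y)=\partial_{\nu_x}\Gamma^{(0)}(x-y)$. Parametrizing $\partial D$ by arc length and integrating by parts on the closed curve moves $\frac{\partial}{\partial T}$ off $\psi$ in the adjoint term, so that $(\star)$ becomes equivalent to the pointwise kernel identity $\nabla_x K(x,y)\cdot\tau_x = \nabla_y K^*(x,y)\cdot\tau_y$ for $x\neq y$, where $\tau$ is the unit tangent. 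Computing the two gradients explicitly, with $r:=x-y$ and $\rho:=|r|$, reduces this to the purely geometric relation
\[
\nu_y\cdot\tau_x - \nu_x\cdot\tau_y = \frac{2}{\rho^2}\big[(r\cdot\nu_y)(r\cdot\tau_x) - (r\cdot\nu_x)(r\cdot\tau_y)\big].
\]
This I would verify by writing $\nu = J\tau$ with $J$ the rotation by $-\tfrac{\pi}{2}$: the left side becomes $2\,\tau_x\cdot J\tau_y$, and, decomposing $\tau_x,\tau_y$ in the orthogonal frame $\{r, Jr\}$, the right side equals the same quantity. Hence $(\star)$, and therefore the general identity, holds.

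The disk case then follows at once from the general identity once one recalls that for the circle of radius $R$ the kernel $\partial_{\nu_y}\Gamma^{(0)}(x-y)$ is the constant $\tfrac{1}{4\pi R}$ on $\partial C\times\partial C$, so that $\mathcal{K}^{(0)}_C$ sends every density to a constant. Consequently $\mathcal{K}^{(0)}_C[\Psi]$ is constant, its tangential derivative vanishes, and both terms on the right-hand side of the general identity, namely $\frac{\partial}{\partial T}\mathcal{K}^{(0)}_C\big[h\frac{\partial\Psi}{\partial T}\big]$ and $\frac{\partial}{\partial T}\big(h\,\frac{\partial}{\partial T}\mathcal{K}^{(0)}_C[\Psi]\big)$, are zero, where $\frac{\partial}{\partial T}=\frac1R\frac{\partial}{\partial \theta}$. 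This yields the stated vanishing.

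The step I expect to be delicate is the rigorous justification of $(\star)$: differentiating $\mathcal{K}^{(0)}_D[\psi]$ under the integral sign and integrating by parts in the presence of the singularity of $\nabla K$. The integrands $\nabla_x K\cdot\tau_x$ and $\nabla_y K^*\cdot\tau_y$ are only $O(\rho^{-1})$ (using $r\cdot\nu=O(\rho^2)$ and $\nu_y\cdot\tau_x=O(\rho)$ on a $\mathcal{C}^2$ curve), so one must argue in the principal-value sense, or regularize and pass to the limit, exploiting the smoothing property $\mathcal{C}^{1,\eta}\to\mathcal{C}^{2,\eta}$ of $\mathcal{K}^{(0)}_D$ and its adjoint. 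Keeping the orientation and the $\mp\frac12$ jump signs consistent with Lemma \ref{prop437} throughout is the other point that requires care.
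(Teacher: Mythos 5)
Your proposal is correct and takes essentially the same route as the paper: expand the first term via Lemma \ref{prop437}, cancel the $-\tfrac{I}{2}$ contribution, reduce everything to the commutation relation $\frac{\partial}{\partial T}\mathcal{K}^{(0)}_D[\psi]=-(\mathcal{K}^{(0)}_D)^*\big[\frac{\partial \psi}{\partial T}\big]$, and for the disk use the fact that $\mathcal{K}^{(0)}_C$ has constant kernel $\tfrac{1}{4\pi R}$ so its output is constant. The only difference is that the paper simply cites that commutation relation from Kress's book, whereas you prove it directly through the kernel identity $\nabla_x K\cdot\tau_x=\nabla_y K^*\cdot\tau_y$ (your verification of which is correct); this is a self-contained addition rather than a different approach.
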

\begin{proof}
From Lemma \ref{prop437}, we know that
\begin{displaymath}
- \displaystyle \frac{\partial}{\partial T} \,h\, \frac{\partial
}{\partial T} \mathcal{D}^{(0)}_D[\Psi] \bigg|_{-} = - \frac{1}{2}
\displaystyle \frac{\partial}{\partial T} \,h \,\frac{\partial
\Psi}{\partial T} - \frac{\partial}{\partial T} \,h\,
\frac{\partial}{\partial T}\, \mathcal{K}^{(0)}_D[\Psi].
\end{displaymath}
Besides, the tangential derivative of the operator
$\mathcal{K}^{(0)}_D$ can be expressed as follows
\cite[p.144]{kress}
\begin{displaymath}
\displaystyle \frac{\partial}{\partial T}
\mathcal{K}^{(0)}_D[\Psi] = -
(\mathcal{K}^{(0)}_D)^*[\frac{\partial \Psi}{\partial T} ],
\end{displaymath}
for $\Psi \in \mathcal{C}^{2, \eta}(\partial D)$. We thus obtain
easily the result (\ref{eq:dDdT}).

Recall that, for a disk of radius $R$, the operator
$\mathcal{K}^{(0)}_C$ admits the explicit formula:
\begin{displaymath}
\mathcal{K}^{(0)}_C [\Psi] = \displaystyle \frac{1}{4 \pi}
\int_0^{2 \pi} \Psi(\phi) d\phi,
\end{displaymath}
which does not depend on $\theta$. Its tangential derivative is
therefore zero, and we have the formula for the disk. Finally, we
note that $(\mathcal{K}^{(0)}_C)^*=\mathcal{K}^{(0)}_C$ and hence,
$$
(\mathcal{K}^{(0)}_C)^*\!\!\left [\frac{\partial}{\partial \theta}
h
   \frac{\partial \Psi}{\partial \theta} \right ] = 0.
$$
\end{proof}

The next step is to find $w$ such that
 \begin{equation}\label{eq:defw}
 (I + \mathcal{E})[w] =  v_1^0.
 \end{equation}
 From Proposition \ref{prop434}, it follows that there exists a unique function $w$ solution to (\ref{eq:defw}).
 The following result holds.
 \begin{prop} \label{prop436}
 The solution to (\ref{eq:defw}) verifies the following equation and boundary conditions:
\begin{equation}\label{eq:w}
\left\{
\begin{array}{ll}
\vspace{0.25 cm}\Delta w = 0 & \textrm{\rm in}\, C \cup \Omega \setminus \overline{C},\\
\vspace{0.25 cm}\displaystyle\frac{\partial w}{\partial \nu}\bigg|_{+} - \frac{\partial w}{\partial \nu}
\bigg|_{-} = - \displaystyle \frac{1}{R^2}\frac{\partial}{\partial \theta} h \frac{\partial \Psi}{\partial
\theta} & \textrm{\rm on}\, \partial C,\\
\vspace{0.25 cm}w\mid_{+} - w\mid_{-} -\beta \displaystyle \frac{\partial w}{\partial \nu}
\bigg|_{-}= - \beta \, \bigg( \frac{h}{R}\frac{\partial u}{\partial r} +
\displaystyle \frac{1}{R^2}\frac{\partial}{\partial \theta} h \frac{\partial u}{\partial \theta}
\bigg|_{-}\bigg)& \textrm{\rm on} \,\partial C,\\
\displaystyle \frac{\partial w}{\partial \nu}\bigg|_{\partial
\Omega} = 0 & \textrm{\rm on}\, \partial \Omega .
\end{array}
\right.
\end{equation}
\end{prop}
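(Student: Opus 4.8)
The plan is to turn $w$ into an honest function on all of $\mathbb{R}^2$ and then read off (\ref{eq:w}) from the jump relations of the layer potentials. Since $v_1^0$ and $\mathcal{E}$ were produced as the first-order-in-$\epsilon$ pieces of the representation formula for $u_\epsilon-u$, I would define $w$ in $\Omega$ by the same representation, $w=w_{\mathrm{hom}}+v_1^0$ with $w_{\mathrm{hom}}=\mathcal{D}^{(0)}_\Omega[w|_{\partial\Omega}]-\beta\,\mathcal{D}^{(0)}_C(I+\beta\mathcal{L})^{-1}[\partial_\nu\mathcal{D}^{(0)}_\Omega[w|_{\partial\Omega}]|_{\partial C}]$ the ``solution-operator'' part and $v_1^0$ the explicit source built from $\Psi$, $\Psi^{(1)}_0$ and $h$. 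Its trace on $\partial\Omega$ then satisfies $(I+\mathcal{E})[w|_{\partial\Omega}]=v_1^0|_{\partial\Omega}$, which is (\ref{eq:defw}); conceptually $w$ is simply the Eulerian shape derivative $\partial_\epsilon u_\epsilon|_{\epsilon=0}$, and this is the picture I would use to anticipate each condition.

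The two bulk conditions are immediate. Because $\Delta u_\epsilon=0$ on either side of the moving interface, the first-order term is harmonic, giving $\Delta w=0$ in $C\cup(\Omega\setminus\overline{C})$. Because $u_\epsilon$ and $u$ carry the same Neumann datum $g_{\mathrm{ele}}$ and the same normalization $\int_{\partial\Omega}(\cdot)=0$ on the fixed outer boundary, $w$ inherits $\partial_\nu w|_{\partial\Omega}=0$ and $\int_{\partial\Omega}w=0$; in the layer-potential picture this is exactly the absence of a single-layer source $-\mathcal{S}^{(0)}_\Omega[g_{\mathrm{ele}}]$ in $w_{\mathrm{hom}}$.

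The interface conditions on $\partial C$ are the heart of the matter. First I would integrate by parts in $\theta$ in the leading term of (\ref{eq:v10}) to recognize it as a single-layer potential $-\mathcal{S}^{(0)}_C[\tfrac{1}{R^2}\partial_\theta(h\,\partial_\theta\Psi)]$; since double layers have continuous normal derivative (Lemma \ref{prop311}) while the single layer has the flux jump $[\partial_\nu\mathcal{S}^{(0)}_C[\varphi]]=\varphi$ (Lemma \ref{prop312}), this produces $\partial_\nu w|_+-\partial_\nu w|_-=-\tfrac{1}{R^2}\partial_\theta(h\,\partial_\theta\Psi)$, the second line of (\ref{eq:w}). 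For the modified jump I would assemble $[w]$ and the interior flux $\partial_\nu w|_-$ from the same pieces, use (\ref{eq:psi0}) to make $(I+\beta\mathcal{L})[\Psi^{(1)}_0]$ explicit, and then invoke Corollary \ref{prop438} to replace the tangential-derivative contribution $\big(-\tfrac12 I+(\mathcal{K}^{(0)}_C)^*\big)\big[\tfrac1{R^2}\partial_\theta(h\,\partial_\theta\Psi)\big]$ by $-\tfrac1{R^2}\partial_\theta\big(h\,\partial_\theta\mathcal{D}^{(0)}_C[\Psi]\big|_-\big)$. Substituting $u=H+\mathcal{D}^{(0)}_C[\Psi]$ and using the harmonicity of $H$ in the form $\partial_{rr}H=-\tfrac1R\partial_r H-\tfrac1{R^2}\partial_{\theta\theta}H$ on $\partial C$ then collapses everything to $w|_+-w|_- -\beta\partial_\nu w|_-=-\beta\big(\tfrac{h}{R}\partial_r u+\tfrac1{R^2}\partial_\theta(h\,\partial_\theta u)\big|_-\big)$, the third line of (\ref{eq:w}).

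The main obstacle is precisely this last manipulation: the interior trace of the tangential-derivative single-layer term sits on top of the singularity of $\Gamma^{(0)}$, so naive integration by parts is illegitimate and silently drops a curvature ($1/R$) contribution. Corollary \ref{prop438} — through Lemma \ref{prop437} and the identity $\partial_T\mathcal{K}^{(0)}_C=-(\mathcal{K}^{(0)}_C)^*\partial_T$ — is exactly the bookkeeping that handles this, which is why it is established just beforehand. I would then close by uniqueness: the homogeneous version of (\ref{eq:w}) admits only $w\equiv 0$ (same argument as in Proposition \ref{prop421}, using $\int_{\partial\Omega}w=0$), and (\ref{eq:defw}) has a unique solution by Proposition \ref{prop434}, so the extended $w$ constructed above is the solution of (\ref{eq:defw}) and satisfies (\ref{eq:w}).
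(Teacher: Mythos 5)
Your argument is correct and is, in substance, the paper's own proof run in the opposite logical direction: the paper takes $w$ to be the solution of (\ref{eq:w}), writes the representation formula with the single layer $\mathcal{S}^{(0)}_C\big[-\tfrac{1}{R^2}\partial_\theta(h\,\partial_\theta\Psi)\big]$ plus a double layer on $\partial C$ whose density solves an integral equation of the form $(I+\beta\mathcal{L})[\Lambda]=\cdots$, and then uses the identities (\ref{eq:21})--(\ref{eq:23}) together with Corollary \ref{prop438} to show $(I+\mathcal{E})[w]=v_1^0$ on $\partial\Omega$, concluding via the uniqueness of Proposition \ref{prop434} exactly as you do. Since the decomposition, the key lemma handling the tangential-derivative single-layer trace, and the polar-Laplacian manipulation of $H$ are all the same, this is essentially the same proof.
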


\begin{proof}
The solution $w$ of the problem (\ref{eq:w}) satisfies the
representation formula:
\begin{equation} \label{eq:decw}
\forall x \in \Omega,  \quad w(x) =
\mathcal{D}^{(0)}_{\Omega}[w|_{\partial \Omega}](x) +
\mathcal{S}^{(0)}_C[ - \displaystyle
\frac{1}{R^2}\frac{\partial}{\partial \theta} h \frac{\partial
\Psi}{\partial \theta}](x) + \mathcal{D}^{(0)}_{C}[\Lambda](x),
\end{equation}
where the density $\Lambda$ on $\partial C$ is given by
\begin{equation}\label{eq:lambda}
\begin{array}{ll}
\vspace{0.3 cm} \Lambda = - \beta (I + \beta \mathcal{L})^{-1}
\left [- \displaystyle \frac{h}{R}\frac{\partial u}{\partial r}
\right. & - \displaystyle \frac{1}{R^2}\frac{\partial}{\partial
\theta} h \frac{\partial u}{\partial \theta} \bigg|_{-}
+ \frac{\partial}{\partial \nu} \mathcal{D}^{(0)}_{\Omega}[w|_{\partial \Omega}] \\
&\qquad \quad \left. + \left (- \displaystyle \frac{I}{2} +
(\mathcal{K}^* _C)^{(0)}\right )\left[  - \displaystyle
\frac{1}{R^2}\frac{\partial}{\partial \theta} h \frac{\partial
\Psi}{\partial \theta} \right]\right ].
\end{array}
\end{equation}
Thus, for $x \in \partial \Omega$,
\begin{equation}\label{eq:verif}
\begin{array}{l}
\vspace{0.25 cm} (I+\mathcal{E})[w](x) =  \mathcal{S}^{(0)}_C[ -
\displaystyle
\frac{1}{R^2}\frac{\partial}{\partial \theta} h \frac{\partial \Psi}{\partial \theta}](x) \\
\vspace{0.2cm}\qquad- \beta\, \mathcal{D}^{(0)}_C (I + \beta
\mathcal{L})^{-1}\left [-\displaystyle \frac{h}{R}\frac{\partial
u}{\partial r} - \displaystyle
\frac{1}{R^2}\frac{\partial}{\partial \theta} h \frac{\partial
u}{\partial \theta} \bigg|_{-} +\left (- \frac{I}{2} +
(\mathcal{K}^{(0)}_C)^* \right )\left[  - \displaystyle
\frac{1}{R^2}\frac{\partial}{\partial \theta} h \frac{\partial
\Psi}{\partial \theta} \right] \right ].
\end{array}
\end{equation}
By integrating by parts twice, the first term in our equation
becomes:
\begin{equation}\label{eq:first}
 \mathcal{S}^{(0)}_C[- \displaystyle \frac{1}{R^2}\frac{\partial}{\partial \theta} h \frac{\partial \Psi}{\partial \theta}]
 (x) = - \displaystyle  \frac{1}{R} \int_0^{2\pi} \displaystyle
 \frac{\partial}{\partial {\theta^y}}\left(h(\theta^y) \displaystyle
 \frac{\partial}{\partial \theta^y} \Gamma^{(0)}(x,y)\right) \Psi(\theta^y)
 d\theta^y.
\end{equation}
Hence, we obtain that
\begin{equation}\label{eq:first2}
 \mathcal{S}^{(0)}_C[- \displaystyle \frac{1}{R^2}\frac{\partial}{\partial \theta} h \frac{\partial \Psi}{\partial \theta}]
 (x) = v^1_1(x) - \mathcal{D}^{(0)}_C[\Psi^{(1)}_1](x).
\end{equation}
The representation formula of $u$ and the expression of the
Laplacian in terms of polar coordinates give us
\begin{equation}\label{eq:21}
\displaystyle \frac{1}{R^2}\frac{\partial}{\partial \theta} h
\frac{\partial u}{\partial \theta} \bigg|_{-} =  \displaystyle
\frac{h'}{R^2}\frac{\partial H}{\partial \theta}- h
\frac{\partial^2 H}{\partial r^2} - \displaystyle
\frac{h}{R}\frac{\partial H}{\partial r} + \displaystyle
\frac{1}{R^2}\frac{\partial}{\partial \theta} h \frac{\partial
}{\partial \theta} \mathcal{D}^{(0)}_C[\Psi] \bigg|_{-}.
\end{equation}
Observe that by definition of $\Psi$, we have on $\partial
\Omega$:
\begin{equation}\label{eq:22}
\displaystyle \frac{\partial u}{\partial r} = - \beta^{-1} \Psi.
\end{equation}
One can then derive the integral equation that  $\Psi$ verifies
and obtain that
\begin{equation}\label{eq:23}
-\displaystyle \frac{h}{R}\frac{\partial u}{\partial r} +
\displaystyle \frac{h}{R}\frac{\partial H}{\partial r} = -
\displaystyle \frac{h}{R}\frac{\partial}{\partial r}
\mathcal{D}^{(0)}_C[\Psi].
\end{equation}
The second term in our equation (\ref{eq:verif}) becomes
\begin{displaymath}
- \beta \mathcal{D}^{(0)}_C (I + \beta \mathcal{L})^{-1}\!\! \left
[ - \displaystyle \frac{h'}{R^2}\frac{\partial H}{\partial \theta}
+ h \frac{\partial^2 H}{\partial r^2} - \displaystyle
\frac{h}{R}\frac{\partial}{\partial r} \mathcal{D}^{(0)}_C(\Psi)
\right ].
\end{displaymath}
It follows from (\ref{eq:psi1}) and (\ref{eq:first2}) that
$$
\forall x \in \partial \Omega , \qquad (I + \mathcal{E})[w](x) =
v_1^0(x).
$$
\end{proof}
We have obtained an approximation at first order of
$\widetilde{c_{\mathrm{flr}}}$:
\begin{displaymath}
\widetilde{c_{\mathrm{flr}}} = c_{\mathrm{flr}} - \epsilon \delta
\Psi^{(1)}_1+ o(\epsilon),
\end{displaymath}
 where $\Psi^{(1)}_1$ is given by
  \begin{displaymath}
 \Psi^{(1)}_1 = - \beta (I + \beta \mathcal{L})^{-1} \left[ - \displaystyle \frac{h'}{R^2}\frac{\partial H}{\partial \theta}
  + h \frac{\partial^2 H}{\partial r^2} + \frac{\partial}{\partial r}
  \mathcal{D}^{(0)}_{\Omega} w - h \displaystyle \frac{\partial}{\partial r} \mathcal{D}^{(0)}_{C}(\Psi)\right],
 \end{displaymath}

 \noindent and $w$ is the solution of (\ref{eq:w}).

We can now derive the first order term in the asymptotic expansion
of (\ref{eq:int}) as $\epsilon\rightarrow 0$.

\begin{thm} \label{prop439} The integral (\ref{eq:int}) admits the following asymptotic expansion:
\begin{equation}\label{eq:int1}
\begin{array}{l}
\vspace{0.3 cm} \displaystyle \int_{\partial C_{\epsilon}}
\tilde{\gamma} \widetilde{c_{\mathrm{flr}}}(x)
\Phi_{\mathrm{exc}}^n(x)\Phi_{\mathrm{exc}}^m(x) ds(x) =
\displaystyle \int_{\partial C} \tilde{\gamma} c_{\mathrm{flr}}(x)
\Phi_{\mathrm{exc}}^n(x)\Phi_{\mathrm{exc}}^m(x) ds(x)\\
\qquad \qquad \qquad + \,\epsilon \displaystyle \int_{\partial
C}\tilde{\gamma} \left( A\,c_{\mathrm{flr}}(\theta)\,h(\theta) -
\delta \, B\, \Psi^{(1)}_1(\theta)\right)e^{-i(n + m)\theta}
\,d\theta + o(\epsilon),
\end{array}
\end{equation}
where the constants $A$ and $B$ are given by
\begin{equation}
\begin{array}{l}
\vspace{0.3 cm} A = ik J_n'(ikR) J_m(ikR)\,R + ik J_n(ikR) J_m'(ikR) \,R + J_n(ikR) J_m(ikR),\\
B =J_n(ikR) J_m(ikR) \,R.
\end{array}
\end{equation}
\end{thm}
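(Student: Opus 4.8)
The plan is to insert the three first-order expansions already at our disposal into the integral identity (\ref{eq:int}) and to collect the terms of order $\epsilon^0$ and $\epsilon^1$. After pulling $\partial C_\epsilon$ back to $\partial C$ through the diffeomorphism $\tau_\epsilon$, the integrand in (\ref{eq:int}), viewed as a function of $\theta\in[0,2\pi]$, is a product of four factors, each of which we expand to first order: the concentration $\widetilde{c_{\mathrm{flr}}}(\tilde{x}) = c_{\mathrm{flr}}(\theta) - \epsilon\,\delta\,\Psi^{(1)}_1(\theta) + o(\epsilon)$ obtained just above (using $\widetilde{c_{\mathrm{flr}}}=-\delta\Psi_\epsilon$); the two Bessel factors $J_n(ik\tilde{R})$ and $J_m(ik\tilde{R})$ expanded via (\ref{eq:dechnf}); and the arc-length element $ds_\epsilon = (R + \epsilon h)\,d\theta + o(\epsilon)$ from (\ref{eq:sigma1}).

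First I would record the zeroth-order term. Since $\widetilde{c_{\mathrm{flr}}}\to c_{\mathrm{flr}}$, $J_i(ik\tilde{R})\to J_i(ikR)$ and $ds_\epsilon\to R\,d\theta$, the $\epsilon^0$ part of the integrand is $\tilde{\gamma}\,c_{\mathrm{flr}}(\theta)\,J_n(ikR)J_m(ikR)\,e^{-i(n+m)\theta}\,R$. Recalling that on $\partial C$ one has $\Phi^n_{\mathrm{exc}}(R,\theta)=J_n(ikR)e^{-in\theta}$ and $ds=R\,d\theta$, this integrates to exactly $\int_{\partial C}\tilde{\gamma}\,c_{\mathrm{flr}}\,\Phi^n_{\mathrm{exc}}\Phi^m_{\mathrm{exc}}\,ds$, the leading term in (\ref{eq:int1}).

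Next I collect the coefficient of $\epsilon$ by taking, in the product of the four expansions, exactly one first-order factor together with the zeroth-order value of the other three (all cross products of two or more first-order factors being genuinely $O(\epsilon^2)$). This yields
\[
-\delta\,\Psi^{(1)}_1\,J_n(ikR)J_m(ikR)\,R + c_{\mathrm{flr}}\,h\,\Big(ik\,J_n'(ikR)J_m(ikR)\,R + ik\,J_n(ikR)J_m'(ikR)\,R + J_n(ikR)J_m(ikR)\Big),
\]
where the first summand comes from the $\epsilon$-term of $\widetilde{c_{\mathrm{flr}}}$, the two middle summands from the $\epsilon$-terms of the Bessel factors, and the last from the $\epsilon h$ part of the arc-length (which carries no extra $R$, explaining why $A$ has $R$ only in its first two terms). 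Multiplying by $\tilde{\gamma}\,e^{-i(n+m)\theta}$, integrating in $\theta$, and reading off the bracketed expressions as the constants $A$ and $B$ of the statement gives precisely the $\epsilon$-term of (\ref{eq:int1}).

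The only point requiring care is the uniform control of the remainder. The $o(\epsilon)$ in the Bessel and arc-length expansions is uniform in $\theta$ because $h\in\mathcal{C}^2([0,2\pi])$, while the $o(\epsilon)$ in $\widetilde{c_{\mathrm{flr}}}=-\delta\Psi_\epsilon$ is controlled by Proposition \ref{prop433} together with the subsequent asymptotic expansion of $\Psi_\epsilon$, whose remainder depends only on $N$, $R$ and $\|h\|_{\mathcal{C}^2}$. Since the $\epsilon^0$ coefficients of the four factors are bounded and the three remainders combine additively, the error after collecting the first two orders is a single $o(\epsilon)$, uniform in $\theta$. This is the main, though essentially bookkeeping, obstacle: the substantive analytic work, namely the expansion of $\widetilde{c_{\mathrm{flr}}}$, has already been carried out in the propositions preceding the statement.
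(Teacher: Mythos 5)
Your proposal is correct and is exactly the argument the paper intends: the theorem is stated as an immediate consequence of substituting the three first-order expansions (\ref{eq:dechnf}), (\ref{eq:sigma1}) and $\widetilde{c_{\mathrm{flr}}}=c_{\mathrm{flr}}-\epsilon\delta\Psi^{(1)}_1+o(\epsilon)$ into (\ref{eq:int}) and collecting the $\epsilon^0$ and $\epsilon^1$ terms, which is precisely your bookkeeping. Your identification of which contribution produces each summand of $A$ (in particular the $R$-free term coming from the arc-length perturbation) and of $B$ matches the paper.
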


\subsubsection{Fourier coefficients of $\Psi^{(1)}_1$}

Recall that $\Omega$ is the unit disk and $C$ is the disk with
radius $R<1$. In terms of polar coordinates, the fundamental
solution $\Gamma^{(0)}$ of $\Delta$ in $\mathbb{R}^2$, given by
(\ref{gamma0}), has the expression:
\begin{displaymath}
 \forall y\,(r,\theta) \in \overline{\Omega},\, \forall z\,(R,\phi) \in
 \overline{\Omega},
 \qquad \Gamma_z^0(y) = \displaystyle \frac{1}{4 \pi} \log(R^2 + r^2 - 2 r R \cos(\theta - \phi)).
\end{displaymath}
The decomposition of $\log$ into a power series gives us the
following formulas:
\begin{equation}\label{eq:gamma0}
\vspace{0.3 cm} \Gamma_z^0(y) = \left\{\begin{array}{ll}
\displaystyle \frac{1}{2 \pi}\log R - \displaystyle \frac{1}{4
\pi} \sum_{n \in \mathbb{Z}^*} \frac{1}{|n|}
(\frac{r}{R})^{|n|} \,e^{in(\theta - \phi)} \quad & \textrm{if}\, r < R, \\
 \nm  \displaystyle \frac{1}{2 \pi}\log r - \displaystyle \frac{1}{4 \pi}
  \sum_{n \in \mathbb{Z}^*} \frac{1}{|n|} (\frac{R}{r})^{|n|}
  \,e^{in(\theta - \phi)} \quad & \textrm{if}\, R < r .
\end{array}
\right.
\end{equation}
Let $f \in L^2(]0, 2\pi[)$. By reinjecting (\ref{eq:gamma0}) into
the definition of the following operators, we obtain for $y(R,
\theta) \in \partial C$ that
\begin{displaymath}
\begin{array}{rl}
\vspace{0.3 cm} \mathcal{S}^{(0)}_{\Omega}[f](y) & = -
\displaystyle \frac{1}{2} \sum_{n \in \mathbb{Z}^*}
 \frac{1}{|n|} R^{|n|}\,\hat{f}(n) \,e^{in\theta},\\
\vspace{0.3 cm} \mathcal{D}^{(0)}_{\Omega}[f](y) & = \hat{f}(0) +
\displaystyle \frac{1}{2} \displaystyle  \sum_{n \in \mathbb{Z}^*} R^{|n|}\,\hat{f}(n) \,e^{in\theta},\\
\vspace{0.3 cm} \displaystyle \frac{\partial
\mathcal{D}^{(0)}_{\Omega}}{\partial r}[f](y) &
= \displaystyle \frac{1}{2}\displaystyle  \sum_{n \in \mathbb{Z}^*} |n|R^{|n| - 1}\,\hat{f}(n) \,e^{in\theta},\\
\displaystyle \frac{\partial \mathcal{D}^{(0)}_{C}}{\partial
r}[f](y) & = \displaystyle \frac{1}{2}\displaystyle
 \sum_{n \in \mathbb{Z}^*} |n|\,\frac{1}{R} \,\hat{f}(n) \,e^{in\theta} .\\
\end{array}
\end{displaymath}
Recall that $H$ satisfies the following representation formula on
$\partial C$:
\begin{displaymath}
H = - \mathcal{S}^{(0)}_{\Omega}[g_{\mathrm{ele}}] +
\mathcal{D}^{(0)}_{\Omega}[f_0],
\end{displaymath}
where $g_{\mathrm{ele}} = \displaystyle \frac{\partial u}{\partial
\nu} \bigg|_{\partial \Omega}$ and $f_0 = u|_{\partial \Omega}$.
We therefore get
\begin{displaymath}
\begin{array}{cl}
\vspace{0.3 cm} H(\theta) &=  \hat{f_0}(0) + \displaystyle
\frac{1}{2} \displaystyle\sum_{n \in \mathbb{Z}^*}
\left(\frac{1}{|n|}\hat{g}_{\mathrm{ele}}(n) +
\hat{f_0}(n) \right ) R^{|n|}\, e^{in\theta},\\
\vspace{0.3 cm} \displaystyle \frac{\partial H}{\partial \theta}(\theta)
&= \displaystyle \sum_{n \in \mathbb{Z}^*} in \,\widehat{H}(n)\, e^{in\theta},\\
 \displaystyle \frac{\partial^2 H}{\partial r^2}(\theta) &= \displaystyle  \frac{1}{R^2}
 \sum_{n \in \mathbb{Z}^*} |n|(|n| -1) \widehat{H}(n) \,
 e^{in\theta}.
\end{array}
\end{displaymath}
Besides, for $f \in \mathcal{C}^{2, \eta}(\partial C)$,  we have
\begin{displaymath}
(I + \beta \mathcal{L})^{ -1}[f](\theta) =  \displaystyle  \sum_{n
\in \mathbb{Z}^*}\left( 1 + \beta\, \frac{|n|}{2R} \right ) ^{-1}
\hat{f}(n)\, e^{in\theta}.
\end{displaymath}
Note that $\widehat{\Psi}(n) = - \beta \displaystyle \left( 1 +
\beta \,\frac{|n|}{2R} \right ) ^{-1} \frac{|n|}{R}\,
\widehat{H}(n)$.

We can now write the Fourier coefficients of $\Psi^{(1)}_1\,$, for
$n \in \mathbb{Z}^{*} :=\{ m \in \mathbb{Z}, m \neq 0\}$,
\begin{equation}\label{eq:fourierpsi11}
\begin{array}{l}
\vspace{0.3 cm}\widehat{\Psi^{(1)}_1}(n) =- \beta\,\displaystyle \frac{1}{2} \displaystyle \frac{|n| \, R^{|n| -1}}{1 + \beta \, \displaystyle \frac{|n|}{2R}}\, \hat{w}(n) - \beta \, \displaystyle \sum_{p = - \infty}^{\infty} \hat{h}(p) \widehat{H}(n-p) \\
 \times\displaystyle \left((n - p)p + |n-p|(|n-p| - 1)+ \frac{\beta}{R} \frac{|n-p|^2}{2 R
 + \beta \, |n-p|}\right) \left(1 + \beta \, \frac{|n|}{2
 R}\right)^{-1}.
\end{array}
\end{equation}

Integral (\ref{eq:int}) becomes at first order:
\begin{displaymath}
\begin{array}{rl}
\vspace{0.3 cm} \mathcal{I} _{\epsilon}^{m,n} = \mathcal{I}_0^{m,n} &+\epsilon\, \,2 \pi \,A \,\delta \,\beta \tilde{\gamma} \displaystyle \sum_{p = - \infty}^{\infty}\hat{h}(p)\widehat{H}(m+n-p) \left(1 + \beta \, \displaystyle\frac{|m+n-p|}{2 R}\right)^{-1}\\
 &- \epsilon \,\,2 \pi\, B \,\delta\,\tilde{\gamma}\, \widehat{\Psi^{(1)}_1}(m+n),
\end{array}
\end{displaymath}
where $\mathcal{I}_{\epsilon}^{m,n} = \displaystyle \int_{\partial
C_{\epsilon}} \tilde{\gamma} \widetilde{c_{\mathrm{flr}}}(x)
\Phi_{\mathrm{exc}}^n(x)\Phi_{\mathrm{exc}}^m(x) \, ds(x)$ and
$\mathcal{I}_0^{m,n} = \displaystyle \int_{\partial
C}\tilde{\gamma} c_{\mathrm{flr}}(x)
\Phi_{\mathrm{exc}}^n(x)\Phi_{\mathrm{exc}}^m(x) ds(x)$.

\subsubsection{Reconstruction of $h$}

We introduce the linear operator $\mathcal{Q}$ defined on
$\mathcal{C}^2(\partial C)$ by
\begin{displaymath}
(\mathcal{Q}[\hat{h}])_{m,n} = \epsilon \displaystyle \sum_{p = -
\infty}^{\infty}F_{m,n}(p) \,\hat{h}(p),
\end{displaymath}
where
\begin{displaymath}
\begin{array}{r}
 \vspace{0.2 cm} F_{m,n}(p)  = 2 \pi\delta\beta \tilde{\gamma} \!\left[\displaystyle \frac{A}{1 + \beta \, \displaystyle\frac{|m+n-p|}{2 R}} + \displaystyle\frac{B}{1 + \beta \, \displaystyle\frac{|m+n|}{2 R}}  \bigg((m+n - p)p  \right. \qquad \qquad\qquad\quad\\
 \left. + |m+n-p|(|m+n-p| - 1)+\displaystyle \frac{\beta}{R} \frac{|m+n-p|^2}{2 R + \beta \,
 |m+n-p|}\right)\Bigg] \widehat{H}(m + n - p).
 \end{array}
 \end{displaymath}
 Recall that $\mathcal{I}_{\epsilon}^{m,n}$ and $\mathcal{I}_{0}^{m,n}$ can be computed from the
 knowledge of the outgoing light intensities
  $I^n_{\mathrm{emt}, \epsilon}$ and $I^n_{\mathrm{emt}}$ measured at the
  boundary of our domain (\ref{eq:iepsilonm}),
 (\ref{eq:im}):
 \begin{displaymath}
 \mathcal{I}_{\epsilon}^{m, n} = {2 \pi} \,E_m \, \widehat{I_{\mathrm{emt}, \epsilon}^n}(m),
 \qquad  \mathcal{I}_0^{m, n} = {2 \pi} \,E_m\,
 \widehat{I_{\mathrm{emt}}^n}(m).
 \end{displaymath}
 We denote $\hat{a}$ the data of our problem:
\begin{displaymath}
\forall m, n \in \mathbb{Z}, \qquad  \hat{a}_{m, n} : = {2 \pi}
E_m \!\! \left(\! \widehat{I_{\mathrm{emt}, \epsilon}^n}(m) -
\widehat{I_{\mathrm{emt}}^n}(m)\!\right) \!- \epsilon\,
\tilde{\gamma} B \, \beta\, \pi \, \delta\, \displaystyle \frac{|m
+ n| \, R^{|m + n| -1}}{1 + \beta\displaystyle \frac{ |m +
n|}{2R}}\, \hat{w}(m + n),
\end{displaymath}
 where $\epsilon w$ is the measured difference of the voltage potential on
 $\partial \Omega$, when the cell occupies $C_{\epsilon}$ and when it is the circle $C$.

The operator $\mathcal{Q}$ links the perturbation $h$ of the
membrane cell to the data of our problem:
\begin{displaymath}
\hat{a}_{m, n} = (\mathcal{Q}[\hat{h}])_{m,n} + \epsilon^2
\hat{V}_{m, n},
\end{displaymath}
 with the term $\vspace{0.2cm} \epsilon^2 \hat{V}_{m, n}$ modeling the linearization error.

 We choose to apply at the boundary of our domain $\Omega$ an electric field $g_{\mathrm{ele}}:\theta \to
 e^{iz\theta}$ with $z\in \mathbb{Z}$. Let us compute the resulting voltage potential at
 the boundary of $\Omega$, $f_0$ and more specifically its Fourier coefficients. From the representation
 formula (\ref{eq:decu}) of $u$  and the jumps relation of the single and double layer potentials, we obtain the following equation
 at the boundary of our domain:
 \begin{displaymath}
f_0= - \mathcal{S}^{(0)}_{\Omega}[g_{\mathrm{ele}}] +
\displaystyle \frac{1}{2} f_0 + \mathcal{K}^{(0)}_{\Omega}[f_0] +
\mathcal{D}^{(0)}_C[\Psi].
 \end{displaymath}

Since $\displaystyle \int_{\partial \Omega} f_0 = 0$ from
(\ref{eq:u2}), we immediately get $\hat{f_0}(0) = 0$ and
$\mathcal{K}^{(0)}_{\Omega}[f_0]=0$. We write, like in the
previous section, the Fourier coefficients of the various layer
potentials and of $\Psi$ and get for $n \in \mathbb{Z} \setminus
\{0\}$:
\begin{displaymath}
\hat{f_0}(n) = \displaystyle \frac{2(1 + \beta\,
\displaystyle\frac{|n|}{2R}) + \beta |n| R^{2|n|-2}}{2(1 + \beta
\,\displaystyle\frac{|n|}{2R}) - \beta |n|
R^{2|n|-2}}\,\frac{1}{|n|}\,\hat{g}_{\mathrm{ele}}(n).
\end{displaymath}
Note that $\hat{g}_{\mathrm{ele}}(n)=\delta_z(n)$. We can now
write the Fourier coefficients of $H|_{\partial C}$ in our case:
\begin{displaymath}
\widehat{H}(0) = 0,\qquad \textrm{and} \qquad \forall n \in
\mathbb{Z}\setminus \{0\}, \, \, \widehat{H}(n) = \displaystyle
\frac{2(1 + \beta\, \displaystyle\frac{|n|}{2R})}{2(1 + \beta
\,\displaystyle\frac{|n|}{2R}) - \beta |n|
R^{2|n|-2}}\,\frac{1}{|n|}\,\delta_z(n) R^{|z|}.
\end{displaymath}

The operator $\mathcal{Q}$ has therefore the following simplified
expression:
\begin{displaymath}
(\mathcal{Q}[\hat{h}])_{m,n} = \epsilon \, F_{m,n}(z)
\,\hat{h}(m+n-z),
\end{displaymath}
where
\begin{displaymath}
\begin{array}{r}
 \vspace{0.2 cm} F_{m,n}(z) =\left[\displaystyle \frac{A}{1 + \beta \, \displaystyle\frac{|z|}{2 R}} +
 \displaystyle\frac{B}{1 + \beta \, \displaystyle\frac{|m+n|}{2 R}}  \bigg((m+n-z)z  \displaystyle+
 |z|(|z| - 1)+\displaystyle \frac{\beta}{R} \frac{|z|^2}{2 R + \beta \, |z|}\bigg)\right] \\
  \times \displaystyle \frac{ 2 \pi\delta\beta \tilde{\gamma}}{|z|} \,\displaystyle\frac{2(1 + \beta\,
  \displaystyle\frac{|z|}{2R})}{2(1 + \beta \,\displaystyle\frac{|z|}{2R}) -
  \beta |z| R^{2|z|-2}}\,R^{|z|}.\qquad \qquad
 \end{array}
 \end{displaymath}
 Recall that the constants $A$ and $B$ depend on $R$ and $k$.

The adjoint of the operator $\mathcal{Q}$ is given by
\begin{displaymath}
(\mathcal{Q}^{\star}[\hat{a}])_p = \epsilon \displaystyle \sum_{j
= - \infty}^{\infty} \overline{F}_{j,p+z-j}(z)\,
\hat{a}_{j,p+z-j}.
\end{displaymath}
Then we obtain that
\begin{displaymath}
(\mathcal{Q}^{\star} \mathcal{Q}[\hat{h}])_p = \epsilon^2
\displaystyle \sum_{j = - \infty}^{\infty} |F_{j, p+z-j}(z)|^2 \,
\hat{h}(p).
\end{displaymath}

We now consider the presence of measurement or instrument noise in
our measured data. We thus introduce:
\begin{displaymath}
\hat{a}_{m, n}^{meas} = (\mathcal{Q}[\hat{h}])_{m,n} + \epsilon^2
\hat{V}_{m, n} + \sigma \hat{W}_{m, n},
\end{displaymath}
with the noise term $\hat{W}_{m, n}$ modeled as independent
standard complex circularly symmetric Gaussian random variables
(such that $\mathbb{E}[|\hat{W}_{m, n}|^2] = 1$; $\mathbb{E}$
being the expectation). Here, $\sigma$ corresponds to the noise
magnitude. We consider that $\sigma$ verifies $\epsilon^2 \ll
\sigma$, so that the linearization error is negligible over the
measurement error and we can write:
\begin{displaymath}
\hat{a}_{m, n}^{meas} = (\mathcal{Q}[\hat{h}])_{m,n} + \sigma
\hat{W}_{m, n}.
\end{displaymath}
Following the methodology of \cite{19,20}, we want to asses the
resolving power of the measured data in the presence of this
noise.

Since $h$ is $\mathcal{C}^2$, $|\hat{h}(p)| \leq C/p^2$ for some
constant $C$, for all $p \in \mathbb{Z} \setminus \{0\}$. Besides,
one can see that for all $m, n \in \mathbb{Z}$, $F_{m,n}$ is
bounded, for given $R$ and $k$. Let $M$ be a positive real such
that $M\ll1/\epsilon^2$. We can reconstruct the Fourier
coefficients of the shape deformation $h$ only for $p$ such that
$|p| \leq M$, otherwise the linearization error $\epsilon^2
\hat{V}_{m,n}$ is too large. We suppose that $\hat{h}_p = 0$ for
all $|p| \geq M$.

To reconstruct $h$, one can minimize the following quadratic
functional over $\varphi$:
\begin{displaymath}
\bigg|\bigg|\mathcal{Q} [\hat{\varphi}] - \hat{a}^{meas}
\bigg|\bigg|_F^2,
\end{displaymath}
where $\hat{a}^{meas} = (\hat{a}^{meas}_{m,n})_{m,n}$,
$\hat{\varphi}=(\hat{\varphi}(p))_p$, and $|| \; ||_F$ is the
Frobenius norm. The obtained least squares estimate is given by
\begin{equation} \label{lsq}
\forall p \in \ [-M,M],  \quad \hat{h}_{est}(p) =
(\mathcal{Q}^{\star} \mathcal{Q})^{-1}
\mathcal{Q}^{\star}[\hat{a}^{meas}](p) = \hat{h}(p) + \sigma
\left((\mathcal{Q}^{\star} \mathcal{Q})^{-1} \mathcal{Q}^{\star}
[\hat{W}]\right)_p.
\end{equation}

One can prove with the explicit formulas of the operators
$\mathcal{Q}$ and $\mathcal{Q}^{\star}$ that the following result
holds.
\begin{prop} \label{proplsa}
Estimation (\ref{lsq}) is unbiased and has the following variance:
\begin{equation}\label{eq:var}
\mathbb{E}\left(|\hat{h}_{est}(p) -
\hat{h}(p)|^2\right)=\displaystyle \frac{\sigma^2}{\epsilon^2}
\left(  \sum_{j = - \infty}^{\infty} |F_{j, p+z-j}|^2\right)^{-1}.
\end{equation}
\end{prop}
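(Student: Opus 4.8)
The plan is to read off everything from the diagonal structure of $\mathcal{Q}^{\star}\mathcal{Q}$ that the excerpt has already made explicit, together with the Gaussian hypotheses on the noise. The starting point is the decomposition (\ref{lsq}), which yields
\begin{displaymath}
\hat{h}_{est}(p) - \hat{h}(p) = \sigma \left((\mathcal{Q}^{\star}\mathcal{Q})^{-1}\mathcal{Q}^{\star}[\hat{W}]\right)_p .
\end{displaymath}
Since the operator $(\mathcal{Q}^{\star}\mathcal{Q})^{-1}\mathcal{Q}^{\star}$ is linear and deterministic and each $\hat{W}_{m,n}$ is centered, taking expectations gives $\mathbb{E}[\hat{h}_{est}(p)]=\hat{h}(p)$, which is the asserted unbiasedness.

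For the variance I would first exploit that, by the formula $(\mathcal{Q}^{\star}\mathcal{Q}[\hat{h}])_p = \epsilon^2\big(\sum_j |F_{j,p+z-j}(z)|^2\big)\hat{h}(p)$ established above, the operator $\mathcal{Q}^{\star}\mathcal{Q}$ acts on the $p$-th Fourier mode as multiplication by the positive scalar $\lambda_p := \epsilon^2 \sum_j |F_{j,p+z-j}(z)|^2$. Consequently $(\mathcal{Q}^{\star}\mathcal{Q})^{-1}$ is multiplication by $\lambda_p^{-1}$ on that mode, and using the explicit adjoint $(\mathcal{Q}^{\star}[\hat{a}])_p = \epsilon \sum_j \overline{F}_{j,p+z-j}(z)\,\hat{a}_{j,p+z-j}$ we obtain
\begin{displaymath}
\left((\mathcal{Q}^{\star}\mathcal{Q})^{-1}\mathcal{Q}^{\star}[\hat{W}]\right)_p = \frac{\epsilon}{\lambda_p}\sum_{j=-\infty}^{\infty} \overline{F}_{j,p+z-j}(z)\,\hat{W}_{j,p+z-j}.
\end{displaymath}
Taking the second moment and invoking that the pairs $(j,p+z-j)$ are pairwise distinct, so that the corresponding $\hat{W}$'s are independent with $\mathbb{E}[\hat{W}_{m,n}\overline{\hat{W}_{m',n'}}]=\delta_{mm'}\delta_{nn'}$, all cross terms cancel and
\begin{displaymath}
\mathbb{E}\Big(\big|(\mathcal{Q}^{\star}[\hat{W}])_p\big|^2\Big) = \epsilon^2\sum_{j=-\infty}^{\infty}|F_{j,p+z-j}(z)|^2 = \lambda_p .
\end{displaymath}
Hence $\mathbb{E}(|\hat{h}_{est}(p)-\hat{h}(p)|^2) = \sigma^2\,\lambda_p^{-2}\,\lambda_p = \sigma^2/\lambda_p$, which is exactly (\ref{eq:var}) after substituting the definition of $\lambda_p$.

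The computation is essentially bookkeeping, and the only points requiring care will be the indexing and the probabilistic reduction. Specifically, I would make sure that for the reconstruction of mode $p$ the relevant data indices are precisely $\{(j,p+z-j): j\in\mathbb{Z}\}$ (coming from the constraint $m+n-z=p$), that these pairs are distinct in $j$ so that independence of the Gaussian entries legitimately kills the off-diagonal contributions, and that $\lambda_p>0$ for every reconstructed mode $|p|\le M$ so that $(\mathcal{Q}^{\star}\mathcal{Q})^{-1}$ is well defined; the latter follows from the nonvanishing of $F_{j,p+z-j}(z)$ for the chosen excitation $g_{\mathrm{ele}}$.
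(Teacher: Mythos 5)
Your proposal is correct and is precisely the computation the paper has in mind: the paper omits the proof, saying only that it follows "with the explicit formulas of the operators $\mathcal{Q}$ and $\mathcal{Q}^{\star}$", and your argument carries out exactly that bookkeeping (diagonal action of $\mathcal{Q}^{\star}\mathcal{Q}$ on the $p$th mode, zero-mean linearity for unbiasedness, and independence of the Gaussian entries indexed by the distinct pairs $(j,p+z-j)$ for the variance). No issues.
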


Besides Proposition \ref{proplsa}, Parseval's identity and Graf's
addition formula yield
\begin{displaymath}
\sum_{j = - \infty}^{\infty} |F_{j, p+z-j}|^2 = \displaystyle
\frac{2} {\pi} \int_{0}^{\pi/2} |f_p(\theta)|^2 d\theta,
\end{displaymath}
where the function $f_p$ is defined by
$$f_p(\theta)=a \,2 i k R \sin(\theta) \,J'_{p+z}(2ikR\sin(\theta)) + (a+Rb) \,J_{p+z}(2ikR\sin(\theta)),$$
\begin{displaymath}
\begin{array}{ll}
\vspace{0.3 cm} \textrm{with}& a(R,z)= \displaystyle \frac{2R^{|z|}}{2(1 + \beta
 \,\displaystyle\frac{|z|}{2R}) - \beta |z| R^{2|z|-2}} \displaystyle \frac{ 2 \pi\delta\beta \tilde{\gamma}}{|z|}, \\
&b(R,p,z)= a(R,z)\, \displaystyle\frac{2R + \beta \, |p+z|}{2R +
\beta\, |z|}  \bigg(pz  \displaystyle+ |z|(|z| - 1)+\displaystyle
\frac{\beta}{R} \frac{|z|^2}{2 R + \beta \, |z|}\bigg).
\end{array}
\end{displaymath}

We introduce the signal to noise ratio $\textrm{SNR}$:
\begin{equation} \label{defsnr}
\textrm{SNR} = \displaystyle (\frac{\epsilon}{\sigma})^2.
\end{equation}
The following result holds thanks to (\ref{eq:var}).
\begin{thm} \label{thmresolving}
Suppose that the $p$th mode of $h$, $\hat{h}(p)$, is of order $1$,
we can resolve it if the following condition is satisfied:
\begin{displaymath}
\textrm{SNR}^{-1}<\displaystyle \frac{2} {\pi} \int_{0}^{\pi/2}
|f_p(\theta)|^2 d\theta.
\end{displaymath}
\end{thm}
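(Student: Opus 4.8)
The plan is to read the theorem as a direct interpretation of the variance formula~(\ref{eq:var}) of Proposition~\ref{proplsa}, combined with the Parseval--Graf reduction stated immediately afterwards and a standard detectability threshold. Since Proposition~\ref{proplsa} asserts that the least squares estimator is unbiased, its entire reconstruction error is measured by the mean-square deviation
\[
\mathbb{E}\left(|\hat{h}_{est}(p) - \hat{h}(p)|^2\right) = \frac{\sigma^2}{\epsilon^2}\left(\sum_{j = -\infty}^{\infty} |F_{j, p+z-j}|^2\right)^{-1}.
\]
First I would substitute the identity $\sum_{j} |F_{j, p+z-j}|^2 = \frac{2}{\pi}\int_0^{\pi/2}|f_p(\theta)|^2\,d\theta$ and the definition $\textrm{SNR}=(\epsilon/\sigma)^2$ from~(\ref{defsnr}) to rewrite this error as
\[
\mathbb{E}\left(|\hat{h}_{est}(p) - \hat{h}(p)|^2\right) = \textrm{SNR}^{-1}\left(\frac{2}{\pi}\int_0^{\pi/2}|f_p(\theta)|^2\,d\theta\right)^{-1}.
\]

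Next I would impose the resolution criterion. Following the detection methodology of~\cite{19,20}, the $p$th mode is declared resolvable exactly when the fluctuation of its estimate stays below the amplitude of the signal one is trying to recover; since $\hat{h}(p)$ is assumed to be of order $1$, this amounts to the requirement $\mathbb{E}(|\hat{h}_{est}(p) - \hat{h}(p)|^2) < 1$. Inserting the expression above into this inequality and multiplying through by the positive factor $\frac{2}{\pi}\int_0^{\pi/2}|f_p(\theta)|^2\,d\theta$ gives $\textrm{SNR}^{-1} < \frac{2}{\pi}\int_0^{\pi/2}|f_p(\theta)|^2\,d\theta$, which is the asserted condition.

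The technical work is thus entirely absorbed into Proposition~\ref{proplsa} and the subsequent Parseval--Graf computation, so no new estimate is needed at this stage. The only point requiring care is conceptual rather than computational: justifying that comparing the estimator variance against the unit-scale signal amplitude is the correct notion of \emph{resolving} a Fourier mode. I expect this to be the main obstacle, and I would address it by appealing to the standard statistical-detection threshold of~\cite{19,20}, under which a mode is detectable precisely once its reconstructed value exceeds the noise-induced standard deviation.
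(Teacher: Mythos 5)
Your proposal is correct and follows exactly the paper's (implicit) argument: the paper simply states that the theorem ``holds thanks to (\ref{eq:var})'', i.e., one substitutes the Parseval--Graf identity and the definition (\ref{defsnr}) of the SNR into the variance formula of Proposition \ref{proplsa} and requires the resulting mean-square error to be smaller than the unit-order amplitude of $\hat{h}(p)$. You have merely written out the steps the paper leaves to the reader, including the correct identification of the detectability threshold from \cite{19,20}.
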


Let us simplify this stability condition under the respective
asymptotic assumptions $|k|R\gg1$ and $|k|R \ll 1$.

Since $J_{-n}=(-1)^n J_n$ (\cite[Formula 9.1.5]{12}), we can
consider without any restriction that $p+z\geq 0$.

\paragraph{Assumption 1: $\mathbf{|k|R \gg1}$}
We assume in this paragraph that $|k|R \gg 1$. We use the
asymptotic expansions of the Bessel functions of the first kind
and their derivative (\cite[Formulas 9.2.5 and 9.2.11]{12}) to
find that, in this case, when $p+z<2 |k| R$, we have
\begin{displaymath}
\displaystyle \frac{2} {\pi} \int_{0}^{\pi/2} |f_p(\theta)|^2
d\theta \sim \displaystyle \frac{4a^2}{\pi^2} |k| R
\sum_{n=0}^{\infty} \frac{(4\mathit{Im}(ik)R)^{2n}}{(2n)!}
\frac{2^{2n}(n!)^2}{(2n+1)!}.
\end{displaymath}
Then the resolving condition becomes
\begin{displaymath}
\textrm{SNR}^{-1}< C(R,z) |k| \qquad \textrm{with}\quad C(R,z) =
\displaystyle \frac{4\,a(R,z)^2}{\pi^2} R  \sum_{n=0}^{\infty}
\frac{(4\mathit{Im}(ik)R)^{2n}}{(2n)!}
\frac{2^{2n}(n!)^2}{(2n+1)!}.
\end{displaymath}

With large $|k|R$, we can estimate the coefficients $\hat{h}(p)$
for all $\textrm{SNR}$ of order $1/|k|$, as long as $p+z<2 |k|R$.

When $p+z>2 |k| R$,  from \cite[Formulas (9.3.35) and
(9.3.43)]{12} it follows that the following asymptotic behavior of
our integrand holds:
\begin{displaymath}
|f_p(\theta)|^2 \sim \frac{\sqrt{\left |1-x\right|}}{2(p+z)\pi}\,
\left|1+\sqrt{1- x}\right |^{-(p+z)}\,
\textrm{e}^{2(p+z)\mathit{Re}(\sqrt{1- x})}\, x^{2(p+z-1)},
\end{displaymath}
where $\vspace{0.2cm} x=\displaystyle
\left(\frac{2ikR\sin(\theta)}{p+z}\right)^2$.

Since $|x|<1$, the last term in the preceding expression is the
dominant one, and makes the integral exponentially small. To
resolve the $p$th mode of $h$ in this context, we therefore need a
SNR exponentially large, which is impossible in practice.

We choose for each $p<M$ an electric model with $z<2 |k| R -p$.
The condition $p+z<2 |k| R$ is in this way always satisfied, and
the $p$th mode can be resolved as long as $\textrm{SNR}^{-1}<
C(R,z) |k|$.

For a fixed $z$, $k$ and $\textrm{SNR}$, this inequality gives us
a constraint on the cell radius. In order to be able to image the
cell with a given $\textrm{SNR}$, its radius has to be larger than
a minimal value, $R^{\star}$ given by
\begin{displaymath}
R^{\star}(\textrm{SNR}) = \mathcal{F}^{-1}(\textrm{SNR}^{-1}),
\end{displaymath}
with
\begin{displaymath}
\mathcal{F}(t)=\displaystyle \frac{4\,a(t,z)^2}{\pi^2} t |k|
\sum_{n=0}^{\infty} \frac{(4\,\mathit{Re}(k)t)^{2n}}{(2n)!}
\frac{2^{2n}(n!)^2}{(2n+1)!}.
\end{displaymath}

The typical size of eukaryotes cell is $10/100 \,{\mu}\textrm{m}$.
We use for our different parameters the following realistic values
reported in \cite{11}, \cite{3}, \cite{21}, \cite{9}:
\begin{itemize}
\item the absorption coefficient $\mu = 0.03$, \item the reduced
scattering coefficient $\mu'_s=0.275$, \item the fluorophore
quantum efficiency $\eta=0.016$, \item the fluorophore
fluorescence lifetime $\tau=0.56\, \textrm{s}^{-1}$, \item the
fluorophore extinction coefficient $\varepsilon_{\mathrm{exc}}
=5*10^4 \,\textrm{mm}^{-1}\textrm{mol}^{-1}$, \item The constant
$\delta$ defined in (\ref{eq:cf}) is given by $\delta =
0.91*10^{-6}\, \textrm{mol}\,\textrm{V}^{-1}$.
\end{itemize}

It is worth mentioning that the absorption coefficient $\mu$ is
low compared to the reduced scattering coefficient $\mu'_s$.
Recall that $k= \left(\displaystyle \frac{\mu + i
\omega/c}{D}\right)^{1/2}$. Then, for given absorption and reduced
scattering coefficients, Assumption $1$ corresponds to frequencies
$\omega$ such that $\vspace{0.2cm} \omega \gg 10^{16}$ and
therefore, are nonphysical. The minimal radius $R^{\star}$
increases with $z$, we thus choose $z$ such as $|z|=1$. Since $M
\sim 10$ with these values of the parameters, this choice does not
impose any restriction, because we have always $M-1<2|k|R$.

\paragraph{Assumption 2: $\mathbf{|k|R \ll1}$}
Note  that the larger the reduced scattering coefficient is, the
smaller is $|k|$. The asymptotic expansions of the Bessel
functions of the first kind and their derivative when the argument
tends to zero
 (\cite[Formula 9.1.7]{12}), give us the asymptotic behavior of our integral in the case
 of a small $|k|R$:
 \begin{displaymath}
 \displaystyle \frac{2} {\pi} \int_{0}^{\pi/2} |f_p(\theta)|^2 d\theta \sim \displaystyle
 \left(\frac{|k|R}{2}\right)^{2(p+z)} \frac{(2(p+z))!}{(p+z)!^4} \left(a(p+z+1)+Rb\right)^2.
 \end{displaymath}

For fixed $z$, $k$ and $R$, the $p$th mode of $h$ can be resolved
under  Assumption 2 as long as the $\textrm{SNR}$ verifies:
 \begin{displaymath}
 \textrm{SNR}^{-1}< \displaystyle \left(\frac{|k|R}{2}\right)^{2(p+z)} \frac{(2(p+z))!}{(p+z)!^4}
 \left(a(R,z)(p+z+1)+Rb(R,p,z)\right)^2.
 \end{displaymath}

 If we consider now that the $\textrm{SNR}$, $k$ and $z$ are given, we can define,
 for each mode $p$, the minimal resolving
 radius $R^{\star}$, {\it i.e.}, the smallest radius that the cell can
 have if we want to resolve the $p$th mode of its membrane deformation.
\begin{thm} \label{proplsa2}
 The minimal resolving radius $R^{\star}$ has
  the following expression:
 \begin{displaymath}
R^{\star}(\textrm{SNR},p) = \mathcal{F}_p^{-1}(\textrm{SNR}^{-1}),
\end{displaymath}
 where the function $\mathcal{F}_p$  in this regime is given by
 \begin{displaymath}
\mathcal{F}_p(t)=\displaystyle
\left(\frac{|k|t}{2}\right)^{2(p+z)} \frac{(2(p+z))!}{(p+z)!^4}
\left(a(t,z)\,(p+z+1)+t\,b(t,p,z)\right)^2.
\end{displaymath}
\end{thm}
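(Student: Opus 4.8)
The plan is to obtain the statement by specialising the general resolving condition of Theorem~\ref{thmresolving} to the regime $|k|R\ll 1$ and then inverting the resulting threshold function in the radius.

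First I would recall that, by Theorem~\ref{thmresolving}, a $p$th mode $\hat{h}(p)$ of order $1$ is resolvable exactly when
$$\textrm{SNR}^{-1} < \frac{2}{\pi}\int_0^{\pi/2}|f_p(\theta)|^2\,d\theta.$$
Under Assumption~2, the small-argument asymptotics of the Bessel functions and their derivatives already established above yield
$$\frac{2}{\pi}\int_0^{\pi/2}|f_p(\theta)|^2\,d\theta \sim \left(\frac{|k|R}{2}\right)^{2(p+z)}\frac{(2(p+z))!}{(p+z)!^4}\left(a(R,z)(p+z+1)+Rb(R,p,z)\right)^2,$$
and the right-hand side is precisely $\mathcal{F}_p(R)$. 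Thus in this regime the resolving condition reduces to $\textrm{SNR}^{-1}<\mathcal{F}_p(R)$, all parameters other than the radius being fixed.

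Next I would read off $R^\star$ as the threshold at which this strict inequality first becomes satisfiable. For this to yield a single number I need $\mathcal{F}_p$ to be strictly increasing in its argument $t$, so that $\{t:\ \textrm{SNR}^{-1}<\mathcal{F}_p(t)\}=(R^\star,\infty)$ and $R^\star$ is characterised by $\textrm{SNR}^{-1}=\mathcal{F}_p(R^\star)$. The monotonicity I would establish factor by factor: the prefactor $(|k|t/2)^{2(p+z)}$ is strictly increasing since we may assume $p+z\ge 0$ by the parity relation $J_{-n}=(-1)^nJ_n$ used above, while the squared factor $(a(t,z)(p+z+1)+tb(t,p,z))^2$ must be shown positive and nondecreasing on the admissible range. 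Inverting $\mathcal{F}_p$ on that range and solving $\textrm{SNR}^{-1}=\mathcal{F}_p(R^\star)$ then gives exactly $R^\star(\textrm{SNR},p)=\mathcal{F}_p^{-1}(\textrm{SNR}^{-1})$.

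I expect the monotonicity and invertibility step to be the only real obstacle. Controlling the $t$-dependence of $a(t,z)$ and $b(t,p,z)$ is delicate: one must check that the denominator $2+\beta|z|/t-\beta|z|t^{2|z|-2}$ appearing in $a$ stays bounded away from zero, so that $a$ and $b$ are smooth and do not create a spurious zero of the squared factor. Here I would invoke the physical smallness of $\beta$ together with $|k|t\ll 1$ and $t<1$, under which this denominator is close to $2$ and the dominant $t$-dependence comes from the algebraic prefactor $t^{2(p+z)}$, securing strict monotonicity and hence the well-definedness of $\mathcal{F}_p^{-1}$.
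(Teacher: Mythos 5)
Your proposal follows essentially the same route as the paper: the theorem is obtained directly by inserting the small-argument Bessel asymptotics into the resolving condition of Theorem \ref{thmresolving}, identifying the right-hand side with $\mathcal{F}_p(R)$, and reading off $R^\star$ as the threshold radius via $\mathcal{F}_p^{-1}$. Your extra attention to the monotonicity and invertibility of $\mathcal{F}_p$ addresses a point the paper leaves implicit (and note that when $p+z=0$ the prefactor $(|k|t/2)^{2(p+z)}$ is constant, so monotonicity there rests entirely on $a$ and $b$), but this does not change the argument in substance.
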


Note that the higher the reduced scattering coefficient is, the
better is the resolving power of the imaging method. In fact, in
order to resolve the mode $p$, the higher the reduced scattering
coefficient is, the smaller is the required $\textrm{SNR}$.

We plot in Figure \ref{fig2} this minimal resolving radius as a
function of the $\textrm{SNR}$ for $p=0$, $1$, $2$ and $3$. We
centered the $y$-axis on the typical radii of eukaryotes cells,
like in the preceding paragraph. Assumption 2 corresponds to
frequencies $\omega$ such that $\omega \ll 10^{13}$. We choose
$\omega = 10^9$, which is a typical frequency used in cellular
tomography. For each $p$, we took $z=\delta_0(p) -p$, because
$R^{\star}$ decreases with $p+z$. Since we can not take $z=0$, the
mode $0$ is not the easiest to resolve. For the other parameters,
we kept the values of the previous paragraph.

\begin{figure}
\centering
\includegraphics[width=0.7\textwidth]{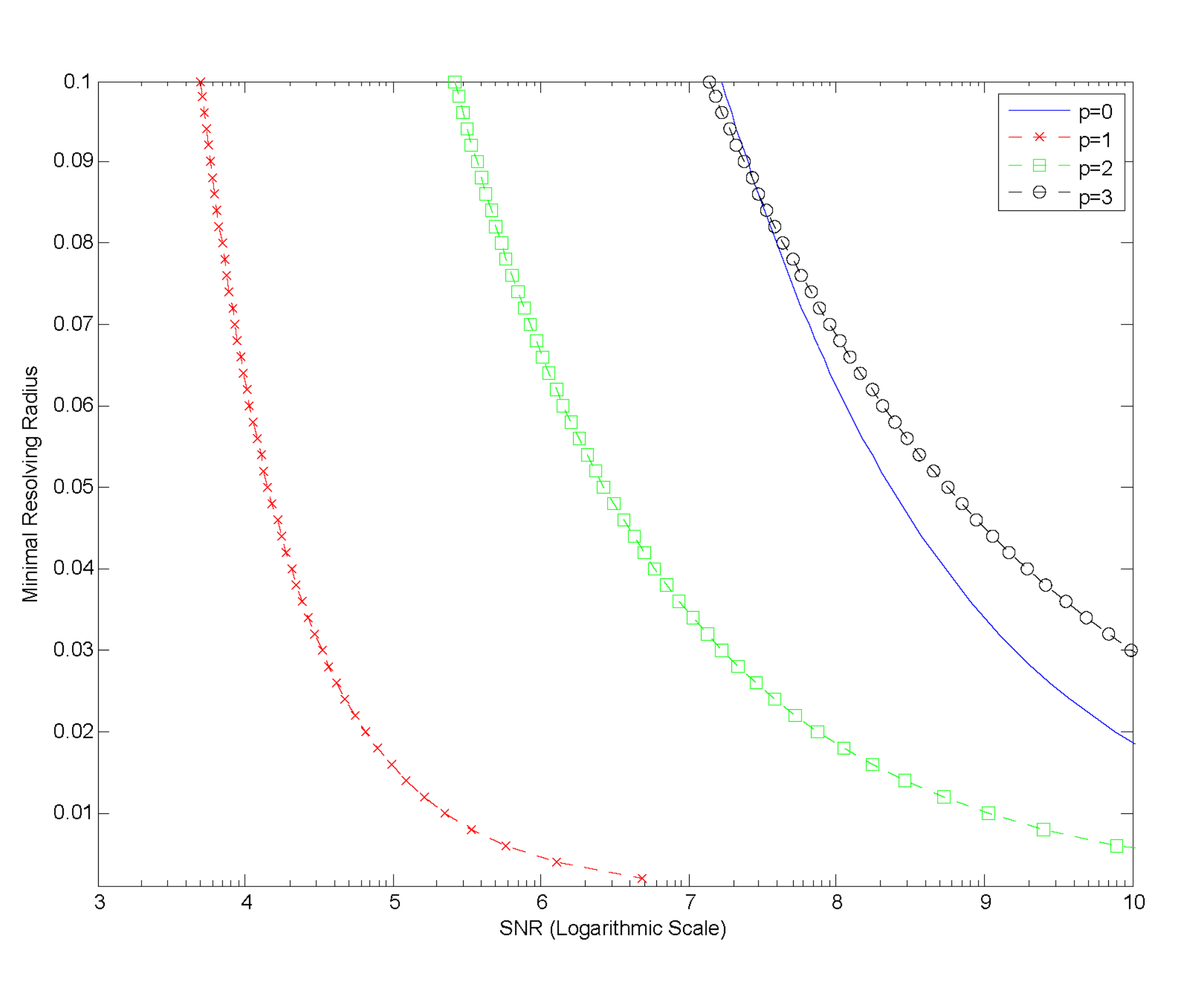}
 \caption{\it{Minimal resolving radius as function of the $\textrm{SNR}$ when $|k|R\ll1$.}\label{fig2}}
\end{figure}

Under Assumption 1, for given $z$, $R$ and $\textrm{SNR}$, if the
resolving condition was verified, we could resolve all modes of
$h$ up to $M$. Because the constraint depends this time on $p$, a
new question arises: "how many modes can we resolve for fixed $R$
and $\textrm{SNR}$?". We introduce the maximal mode number
$p(R,\textrm{SNR})$ defined by
\begin{displaymath}
p(R,\textrm{SNR})=\sup\left\{p'\in\mathbb{N}
\setminus\{0\}|\displaystyle \inf_{1\geq p' \geq
p}\mathcal{F}_{p'}(R)
> \textrm{SNR}^{-1}\right \} +
\mathbb{1}_{\mathcal{F}_0(R)>\textrm{SNR}^{-1}},
\end{displaymath}
which answers this question.

We plot in Figure \ref{fig3} the maximal mode number as a function
of the cell radius for different values of the $\textrm{SNR}$. We
took the same values of our parameters as in Figure \ref{fig2}.

\begin{figure}
\centering
\includegraphics[width=0.7\textwidth]{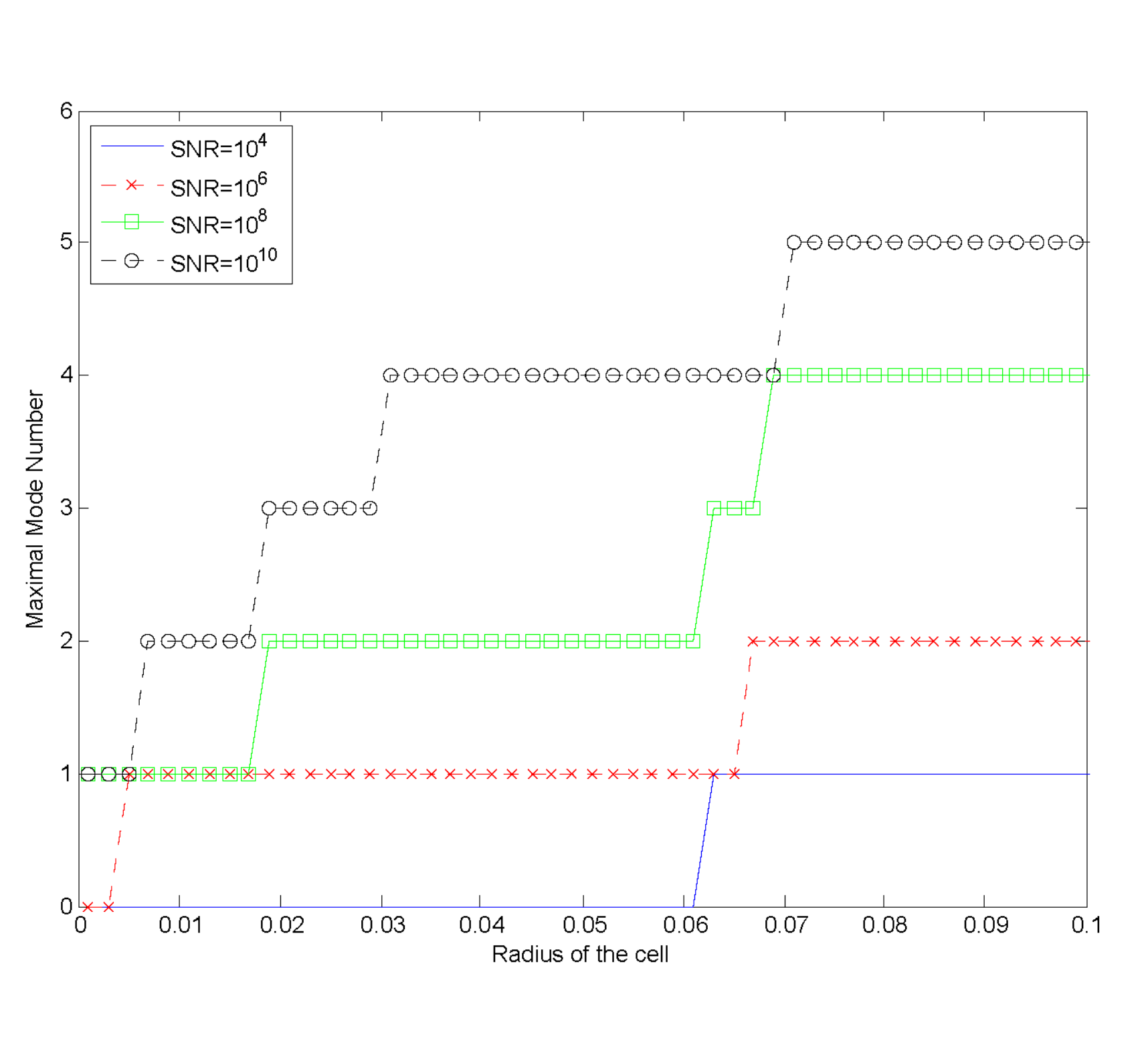}
 \caption{\it{Maximal Mode Number as function of the cell radius when $|k|R\ll1$.}\label{fig3}}
\end{figure}

\subsection{Reconstruction of the cell membrane in the general two-dimensional case}

We leave the specific case of a circular domain to go back to the
general case in dimension two. Let $a,b \in \mathbb{R}$ with
$a<b$. Let $x : [a,b] \to \mathbb{R}^2$ be a parametrization of
$\partial C$ such that $x \in \mathcal{C}^{2, \eta}(\mathbb{R})$
for an $\eta
>0$ and $|x'| = 1$. The outward unit normal to $\partial C$ at
$x(t)$, $\nu(x)$ and the tangential vector, $T(x)$, are given by
\begin{displaymath}
\nu(x) = R_{- \frac{\pi}{2}} x'(t), \qquad T(x) = x'(t),
\end{displaymath}
where $R_{-\frac{\pi}{2}}$ is rotation by $-\frac{\pi}{2}$.

We introduce the curvature $\tau$ defined for all  $x \in \partial
C$ by
\begin{displaymath}
x''(t) = \tau(x) \nu(x).
\end{displaymath}

Let $C_{\epsilon}$ be an $\epsilon$-perturbation of $C$, {\it
i.e.}, there is $h \in \mathcal{C}^2([a, b])$, such that $\partial
C_{\epsilon}$ is given by
\begin{displaymath}
\partial C_{\epsilon} = \left\{\displaystyle \tilde{x}; \tilde{x}(t) =
x(t) + \epsilon \,h(t)\nu(x(t)), t \in [a, b]\right\}.
\end{displaymath}

Like in the previous section, our goal is to reconstruct the shape
deformation $h$ of our cell. Let $I^{\,g}_{\mathrm{emt},\epsilon}$
(resp. $I_{\mathrm{emt}}^{\,g}$) be the outgoing light intensities
measured at the boundary of our domain when the cell occupies
$C_{\epsilon}$ (resp. $C$) and the optical source $g$ is applied
at $\partial \Omega$. It follows from Proposition \ref{prop411}
that
\begin{equation}\label{eq:im2}
\begin{array}{ll}
\vspace{0.3 cm}&\displaystyle \int_{\partial C_{\epsilon}}
\tilde{\gamma} \widetilde{c_{\mathrm{flr}}}(x)
\Phi_{\mathrm{exc}}^{\,f}(x)\Phi_{\mathrm{exc}}^{\,g}(x) ds(x) =
\int_{\partial \Omega}
f \,I^{\,g}_{\mathrm{emt},\epsilon}\, ds(x),\\
\textrm{resp.} \quad &\displaystyle \int_{\partial C}
\tilde{\gamma} c_{\mathrm{flr}}(x)
\Phi_{\mathrm{exc}}^{\,f}(x)\Phi_{\mathrm{exc}}^{\,g}(x) \, ds(x)
= \int_{\partial \Omega} f \,I^{\,g}_{\mathrm{emt}}\, ds(x),
\end{array}
\end{equation}
where $f,g \in L^2(\partial \Omega)$ and
$\widetilde{c_{\mathrm{flr}}}$ (resp. $c_{\mathrm{flr}}$) is the
concentration of fluorophores on the boundary of the cell
$\partial C_{\epsilon}$ (resp. $\partial C$).

We introduce the voltage potential $u$ such that $c_{\mathrm{flr}}
= \delta [u]|_{\partial C}$. We know, from Proposition
\ref{prop421}, that $u$ admits the following representation
formula:
\begin{displaymath}
\begin{array}{rccccc}
\forall x \in \Omega, & \quad u(x) &=& H(x) &+&
\mathcal{D}^{(0)}_C[\Psi](x),
\end{array}
\end{displaymath}
where the harmonic function $H$ is given by
\begin{displaymath}
\begin{array}{rccccc}
\forall x \in \mathbb R^2 \setminus \partial \Omega, & \quad H(x)
&=& - \mathcal{S}^{(0)}_{\Omega}[g_{\mathrm{ele}}](x) &+&
\mathcal{D}^{(0)}_{\Omega}[u|_{\partial \Omega}](x),
\end{array}
\end{displaymath}
and $\Psi \in \mathcal{C}^{2, \eta}(\partial C)$ satisfies the
integral equation:
\begin{displaymath}
\begin{array}{rccccc}
\Psi &+& \beta \displaystyle \frac{\partial
\mathcal{D}^{(0)}_C[\Psi]}{\partial \nu} &=& - \beta \displaystyle
\frac{\partial H}{\partial \nu} \quad &\textrm{on} \, \partial C.
\end{array}
\end{displaymath}

We compute the first order approximation of
$\widetilde{c_{\mathrm{flr}}}$ using exactly the same method as in
Subsection \ref{subsectcell}. Doing so, we arrive with the help of
Corollary \ref{prop437} at
\begin{displaymath}
\widetilde{c_{\mathrm{flr}}} = c_{\mathrm{flr}} - \epsilon\,
\delta \,\Psi^{(1)}_1+ o(\epsilon),
\end{displaymath}
 where the function $\Psi^{(1)}_1$ is defined by
  \begin{displaymath} \begin{array}{l}
 \Psi^{(1)}_1 = - \beta (I + \beta \mathcal{L})^{-1}
 \bigg(( - \displaystyle \tau {h'} \frac{\partial H}{\partial T} + h \frac{\partial^2 H}{\partial \nu^2}
 + \frac{\partial}{\partial \nu} \mathcal{D}^{(0)}_{\Omega}[w]
 - h \displaystyle \frac{\partial}{\partial \nu}
 \mathcal{D}^{(0)}_{C}[\Psi]  \\
 \nm
 \ds \qquad + \frac{\partial}{\partial T} \mathcal{K}^{(0)}_C [h \frac{\partial \Psi}{\partial T}] - \frac{\partial}{\partial T}
 h \frac{\partial}{\partial T} \mathcal{K}^{(0)}_C[\Psi] \bigg),
 \end{array}
 \end{displaymath}
and $w$ is the solution to the problem:
 \begin{equation}\label{eq:wf}
\left\{
\begin{array}{ll}
\vspace{0.25 cm}\Delta w = 0 & \textrm{in}\, C \cup \Omega \setminus \overline{C},\\
\vspace{0.25 cm}\displaystyle\frac{\partial w}{\partial \nu}\bigg|_{+} -
\frac{\partial w}{\partial \nu}\bigg|_{-} = - \displaystyle
\frac{\partial}{\partial T} h \frac{\partial \Psi}{\partial T} & \textrm{on}\, \partial C,\\
\vspace{0.25 cm}w\mid_{+} - w\mid_{-} -\beta \displaystyle \frac{\partial w}{\partial \nu}
 \bigg|_{-}= - \beta \, \bigg( \tau {h} \frac{\partial u}{\partial \nu}
 + \frac{\partial}{\partial T} h \frac{\partial u}{\partial T} \bigg|_{-}\bigg)& \textrm{on} \,\partial C,\\
\displaystyle \frac{\partial w}{\partial \nu}\bigg|_{\partial
\Omega} = 0 & \textrm{on}\, \partial \Omega .
\end{array}
\right.
\end{equation}
We then obtain an expansion of (\ref{eq:im2}) as
$\epsilon\rightarrow 0$.
\begin{prop}\label{prop441}
Integral (\ref{eq:im2}) admits at first order in $\epsilon$ the
following expansion:
\begin{equation}
\begin{array}{l}
\vspace{0.3 cm} \displaystyle \int_{\partial C_{\epsilon}}
\tilde{\gamma} \widetilde{c_{\mathrm{flr}}}(x)
 \Phi_{\mathrm{exc}}^{\,f}(x)\Phi_{\mathrm{exc}}^{\,g}(x) \, ds(x) =
 \displaystyle \int_{\partial C} \tilde{\gamma} c_{\mathrm{flr}}(x) \Phi_{\mathrm{exc}}^{\,f}(x)\Phi_{\mathrm{exc}}^{\,g}(x) \,
 ds(x)\\
\qquad \qquad \qquad + \,\epsilon \displaystyle \int_{a}^{b}
\tilde{\gamma} \left( A(t)\,c_{\mathrm{flr}}(t)\,h(t) - \delta \,
B(t)\, \Psi^{(1)}_1(t)\right) \,dt + o(\epsilon),
\end{array}
\end{equation}
where the functions $A$ and $B$ are given by
\begin{equation}
\begin{array}{l}
\vspace{0.3 cm} A = \displaystyle
\frac{d\Phi_{\mathrm{exc}}^{\,f}(t)}{dt}
\Phi_{\mathrm{exc}}^{\,g}(t)
+ \Phi_{\mathrm{exc}}^{\,f}(t) \frac{d\Phi_{\mathrm{exc}}^{\,g}(t)}{dt} - \tau(t) \Phi_{\mathrm{exc}}^{\,f}(t) \Phi_{\mathrm{exc}}^{\,g}(t),\\
B = \Phi_{\mathrm{exc}}^{\,f}(t) \Phi_{\mathrm{exc}}^{\,g}(t).
\end{array}
\end{equation}
\end{prop}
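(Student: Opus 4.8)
The plan is to mimic the proof of Theorem~\ref{prop439}, replacing the explicit Bessel expansions by abstract first-order expansions of the ambient quantities. I would first change variables in the left-hand integral from $\partial C_{\epsilon}$ to the reference curve $\partial C$ through the diffeomorphism $\tau_{\epsilon}\colon x(t)\mapsto \tilde{x}(t)=x(t)+\epsilon\,h(t)\,\nu(x(t))$, so that
\begin{displaymath}
\int_{\partial C_{\epsilon}}\tilde{\gamma}\,\widetilde{c_{\mathrm{flr}}}\,\Phi_{\mathrm{exc}}^{\,f}\,\Phi_{\mathrm{exc}}^{\,g}\,ds=\int_a^b\tilde{\gamma}\,(\widetilde{c_{\mathrm{flr}}}\circ\tau_{\epsilon})(t)\,\Phi_{\mathrm{exc}}^{\,f}(\tilde{x}(t))\,\Phi_{\mathrm{exc}}^{\,g}(\tilde{x}(t))\,|\tilde{x}'(t)|\,dt .
\end{displaymath}
Then I would expand each of the three $t$-dependent factors to first order in $\epsilon$ and collect the powers $\epsilon^0$ and $\epsilon^1$.

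Three expansions enter. For the concentration, the computation carried out immediately before the statement (the arbitrary-curve analogue of Proposition~\ref{prop433}, obtained by the same layer-potential argument together with Corollary~\ref{prop438}) gives $\widetilde{c_{\mathrm{flr}}}\circ\tau_{\epsilon}=c_{\mathrm{flr}}-\epsilon\,\delta\,\Psi^{(1)}_1+o(\epsilon)$ on $\partial C$; paired with the leading fluence product and the leading length element this furnishes the term $-\delta\,B\,\Psi^{(1)}_1$ with $B=\Phi_{\mathrm{exc}}^{\,f}\Phi_{\mathrm{exc}}^{\,g}$. For the excitation fluences, which are \emph{fixed} functions on $\Omega$ since the excitation problem~(\ref{eq:ex}) does not involve the cell, Taylor's theorem along the normal displacement yields $\Phi_{\mathrm{exc}}^{\,f}(\tilde{x}(t))=\Phi_{\mathrm{exc}}^{\,f}(x(t))+\epsilon\,h(t)\,\frac{\partial\Phi_{\mathrm{exc}}^{\,f}}{\partial\nu}(x(t))+o(\epsilon)$ and similarly for $\Phi_{\mathrm{exc}}^{\,g}$, so the product rule contributes $h\,\bigl(\frac{\partial\Phi_{\mathrm{exc}}^{\,f}}{\partial\nu}\Phi_{\mathrm{exc}}^{\,g}+\Phi_{\mathrm{exc}}^{\,f}\frac{\partial\Phi_{\mathrm{exc}}^{\,g}}{\partial\nu}\bigr)$. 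For the length element, differentiating $\tilde{x}=x+\epsilon h\nu$ and using $\nu'=-\tau\,T$ (which follows from $x''=\tau\nu$ and $|x'|=1$) gives $\tilde{x}'(t)=(1-\epsilon h\tau)T+\epsilon h'\nu$, hence $|\tilde{x}'(t)|=1-\epsilon\,h(t)\,\tau(t)+o(\epsilon)$, contributing the curvature term $-\tau\,\Phi_{\mathrm{exc}}^{\,f}\Phi_{\mathrm{exc}}^{\,g}$. Collecting the three geometric contributions produces $A\,c_{\mathrm{flr}}\,h$ with $A=\frac{\partial\Phi_{\mathrm{exc}}^{\,f}}{\partial\nu}\Phi_{\mathrm{exc}}^{\,g}+\Phi_{\mathrm{exc}}^{\,f}\frac{\partial\Phi_{\mathrm{exc}}^{\,g}}{\partial\nu}-\tau\,\Phi_{\mathrm{exc}}^{\,f}\Phi_{\mathrm{exc}}^{\,g}$, while the $\epsilon^0$ part reproduces $\int_{\partial C}\tilde{\gamma}\,c_{\mathrm{flr}}\,\Phi_{\mathrm{exc}}^{\,f}\Phi_{\mathrm{exc}}^{\,g}\,ds$. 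As a consistency check, specializing $x$ to the circle of radius $R$, where $\tau=-1/R$, $\partial/\partial\nu=\partial/\partial r$, $\frac{\partial}{\partial r}\Phi^n_{\mathrm{exc}}=ik\,J_n'(ikR)e^{-in\theta}$ and $dt=R\,d\theta$, reproduces exactly the first-order correction of Theorem~\ref{prop439}, which fixes the normalisation of the two derivative terms in $A$.

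The routine part is this bookkeeping; the real work, and the main obstacle, is to show that every discarded term is genuinely $o(\epsilon)$ uniformly, with implied constant depending only on the geometry of $\partial C$ and $\|h\|_{\mathcal{C}^2}$. I would control the three sources of error separately. (i) For the concentration, Proposition~\ref{prop433} (in its arbitrary-curve form) bounds $\Psi_{\epsilon}-\sum_{n\le N}\epsilon^n\Psi^{(n)}$ in $\mathcal{C}^{2,\eta}(\partial C)$; pulling back through $\tau_{\epsilon}$ and recalling $\widetilde{c_{\mathrm{flr}}}=-\delta\Psi_{\epsilon}$ transfers this to a uniform $\mathcal{C}^0(\partial C)$ estimate on the concentration remainder. (ii) For the fluences, $\Phi_{\mathrm{exc}}^{\,f}$ and $\Phi_{\mathrm{exc}}^{\,g}$ are $\mathcal{C}^{\infty}$ in a fixed neighbourhood of $\partial C$ (being solutions of $-\Delta+k^2$ there), so the second-order Taylor remainder is bounded uniformly on the tubular neighbourhood swept out by $\tau_{\epsilon}$ for $\epsilon$ small. (iii) For the length element, the analogues of the uniformly bounded coefficients in~(\ref{eq:sigma}) and~(\ref{eq:decnu}) give a convergent expansion of $|\tilde{x}'|$ with a controlled $\epsilon^2$ tail. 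Combining (i)--(iii) with the boundedness of each factor on $\partial C$ and the triangle inequality yields the claimed $o(\epsilon)$ remainder, which proves the proposition.
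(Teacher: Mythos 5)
Your argument is correct and is exactly the route the paper intends: the paper gives no written proof of Proposition \ref{prop441}, saying only that one repeats the method of Subsection \ref{subsectcell}, and your change of variables through $\tau_{\epsilon}$, first-order expansion of the three factors (concentration, fluences, length element), and separate control of the three remainders is precisely that method. The computation of the length element via $\nu'=-\tau T$ and the use of the general-curve analogue of Proposition \ref{prop433} for the concentration are both sound.

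One substantive point deserves emphasis. Your expansion of the fluences necessarily produces the \emph{normal} derivatives $\frac{\partial \Phi_{\mathrm{exc}}^{\,f}}{\partial \nu}$ and $\frac{\partial \Phi_{\mathrm{exc}}^{\,g}}{\partial \nu}$, since the displacement $\tilde{x}(t)-x(t)=\epsilon h(t)\nu(x(t))$ is purely normal; the statement of the proposition instead writes $\frac{d\Phi_{\mathrm{exc}}^{\,f}(t)}{dt}$, which read literally is the tangential (arc-length) derivative of the restriction to $\partial C$. Your consistency check against Theorem \ref{prop439} settles the matter: on the circle the constant $A$ there contains $ik J_n'(ikR)$, i.e.\ the radial derivative $\frac{\partial}{\partial r}\Phi_{\mathrm{exc}}^{n}$, not $\frac{1}{R}\frac{\partial}{\partial \theta}\Phi_{\mathrm{exc}}^{n}$, so the normal-derivative version you derive is the correct one and the $\frac{d}{dt}$ in the printed formula for $A$ should be read as (or corrected to) $\frac{\partial}{\partial \nu}$. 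In short, your proof is right, matches the paper's approach, and in passing fixes a notational slip in the statement rather than introducing a gap.
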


Let $f_1,\ldots, f_L,$ be a finite number of linearly independent
functions in $L^2(\partial \Omega)$. We introduce the functional
$\mathcal{J}$ defined on $\mathcal{C}^2([a,b])$ by
\begin{displaymath}
\mathcal{J}(h) = \sum_{i,j=1}^{L} \left | \int_{\partial \Omega}
f_i(I^{\,f_j}_{\mathrm{emt},\epsilon} - I^{\,f_j}_{\mathrm{emt}})
\, ds - \epsilon \,\displaystyle \int_{a}^{b} \tilde{\gamma}
\left( A_{i,j}(t)\,c_{\mathrm{flr}}(t)\,h(t) - \delta \,
B_{i,j}(t)\, \Psi^{(1)}_1(t)\right) \,dt  \right |^2,
\end{displaymath}
where the functions $A_{i,j}$ and $B_{i,j}$ are given by
\begin{displaymath}
\begin{array}{l}
\vspace{0.3 cm} A _{i,j}= \displaystyle
\frac{d\Phi_{\mathrm{exc}}^{\,f_i}(t)}{dt}
\Phi_{\mathrm{exc}}^{\,f_j}(t) + \Phi_{\mathrm{exc}}^{\,f_i}(t)
\frac{d\Phi_{\mathrm{exc}}^{\,f_j}(t)}{dt} - \tau(t)
\Phi_{\mathrm{exc}}^{\,f_i}(t)
\Phi_{\mathrm{exc}}^{\,f_j}(t),\\
B_{i,j} = \Phi_{\mathrm{exc}}^{\,f_i}(t)
\Phi_{\mathrm{exc}}^{\,f_j}(t).
\end{array}
\end{displaymath}

We reconstruct the shape deformation $h$ by minimizing the
functional $\mathcal{J}$ over $h$. In order to maximize the
resolution of the reconstructed images, we choose
$f_1,\ldots,f_L,$ such that the functions $A_{i,j}$ and $B_{i,j}$
for $i,j \in [1,L]$ are highly oscillating. We will then be able
to obtain a resolved reconstruction of the boundary changes $h$.

We introduce the operator $\Lambda : L^2(\partial \Omega) \to
L^2(\partial C)$ defined by
\begin{displaymath}
\forall f \in L^2(\partial \Omega), \forall z \in \partial C,
\qquad \Lambda[f](z) = \Phi_{\mathrm{exc}}^{\,f}|_{\partial C}(z)
= \int_{\partial \Omega}G_z(y)f(y)\, ds(y).
\end{displaymath}

The adjoint operator $\Lambda^{\star} : L^2(\partial C) \to
L^2(\partial \Omega)$ is given by
\begin{displaymath}
\forall q \in L^2(\partial C), \forall y \in \partial \Omega,
\qquad \Lambda^{\star}[q](y) = p|_{\partial \Omega} (y) =
\int_{\partial C}\overline{G_z(y)}\, q(z)\, ds(z),
\end{displaymath}
where $p$ is the solution to the problem:
\begin{equation}
\left\{
\begin{array}{lll}
\vspace{0.3 cm} -\Delta p + k^2 p = 0 & \textrm{in} & \Omega ,\\
\vspace{0.3 cm} \displaystyle \frac{\partial p}{\partial \nu}\bigg|_{+}
- \,\frac{\partial p}{\partial \nu}\bigg|_{-}= - \,q \quad& \textrm{on} & \partial C , \\
\vspace{0.3 cm}p|_{+} - \,p|_{-}= 0 & \textrm{on} & \partial C \\
\displaystyle \ell \frac{\partial p}{\partial \nu} +  \,p= 0 & \textrm{on} & \partial \Omega .\\
\end{array}
\right.
\end{equation}

We therefore obtain the following expression for $\Lambda^{\star}
\Lambda$:
\begin{displaymath}
\forall f \in L^2(\partial \Omega), \forall y \in \partial \Omega,
\qquad \Lambda^{\star} \Lambda[f](y) =  \int_{\partial \Omega} dt
f(t) \int_{\partial C} \overline{G_z(y)}G_z(t) \, ds(z).
\end{displaymath}

Following \cite{elisa, garapon}, we choose $f_1,\ldots, f_L,$ to
be the first singular vectors of the operator $\Lambda$. The
number $L$, which fixes the resolving power of the approach, is
chosen to maximize the trade-off between resolution and stability.
To gain resolution one has to choose $L$ as large as possible. But
if it is too large then it follows from the fact that $f_i$ is
highly oscillating for large $i$ that the algorithm is unstable in
the case of noisy data \cite{garapon, solna}.

\section{Conclusion}
In this paper we have introduced and analyzed a mathematical model
for optical imaging of cell membrane potentials changes induced by
applied currents. We have presented a direct imaging algorithm in
the linearized case and provided explicit formulas for its
resolving power of the measurements in the presence of measurement
noise. We have suggested an iterative algorithm for complex
shapes. It would be interesting to consider the case of cluttered
cells. Another challenging problem is the tracking of membrane
changes in cell mechanisms such as cell division. This would  be
the subject of a forthcoming work.

\appendix
\section{Explicit calculation of $G_z$ in the case of a sphere} \label{appendixA}

We consider, in this appendix, that the dimension is three and
$\Omega$ is the unit sphere. We expand $G$, the solution to
(\ref{eq:greenex}),  in spherical harmonics $(Y^l_m)$:
$$ \forall z \in \Omega, \,\forall y (1, \theta, \phi) \in \partial
\Omega, \qquad G_z(y) = \sum_{l = 0}^{\infty}\, \sum_{m=-l}^{l}
g_{m}^{l,z}\, Y^l_m(\theta, \phi).$$

An addition theorem \cite[Formula (10-1-45/46)]{12} gives us the
expansion of $\Gamma$:
$$\forall z(r',\theta', \phi')\! \in\!\Omega, \forall y \in \partial \Omega, \quad \Gamma_z(y)
=ik \sum_{l = 0}^{\infty}\, \sum_{m=-l}^{l}  j_l(ikr')\,
h^{(1)}_l(ik)Y^l_m(\theta', \phi')\,Y^l_m(\theta, \phi),$$ where
$j_l$ and $\vspace{0.5mm}h^{(1)}_l$ are respectively the spherical
Bessel and Hankel functions of first kind of order $l$.

We then express the operators $\mathcal{S}_{\Omega}$ and
$\mathcal{K}_{\Omega}$ in terms of spherical harmonics \cite {2},
in the same way we wrote in the previous section their Fourier
coefficients:
\begin{displaymath}
\begin{array}{ccl}
 \vspace{1.2 mm}\forall y \in \partial \Omega,  \qquad &(-\displaystyle
  \frac{I}{2} + \mathcal{K}_{\Omega})[q](y) &=  - \displaystyle\sum_{l = 0}^{\infty}\,
   \sum_{m=-l}^{l} k^2 \,j_l^{'}(ik) \,h^{(1)}_l(ik) \,q_m^l \,Y^l_m(\theta, \phi),\\
\vspace{1.2 mm}\forall y \in \partial \Omega , \qquad
&\mathcal{S}_{\Omega}[q](y) &=  i \displaystyle \sum_{l =
0}^{\infty}\, \sum_{m=-l}^{l} k\, j_l(ik)\, h^{(1)}_l(ik)\, q_m^l
\,Y^l_m(\theta, \phi),
\end{array}
\end{displaymath}
for
\begin{displaymath}
\forall y(1,\theta, \phi) \in \partial \Omega, \quad q (y) =
\sum_{l = 0}^{\infty}\, \sum_{m=-l}^{l} q_m^l \,Y^l_m(\theta,
\phi).
\end{displaymath}

From (\ref{eqprop314}) we obtain
$$g_m^{l,z} = \displaystyle \frac{ ik\,j_l(ikr')\, h^{(1)}_l(ik)Y^l_m(\theta', \phi')}{-k^2 \,j_l^{'}(k)
 \,h^{(1)}_l(ik) + \frac{1}{\ell}\, ik\, j_l(ik)\, h^{(1)}_l(ik)} = \frac{ j_l(ikr')}{ik \,j_l^{'}(ik)
 + \frac{1}{\ell} \,j_l(ik)\,}Y^l_m(\theta', \phi'),$$
or else, for all $z = (r',\theta', \phi') \in \Omega$ and $y =(1,
\theta, \phi) \in \partial \Omega$,
$$G_z(y) = \sum_{l = 0}^{\infty}\, \sum_{m=-l}^{l} \displaystyle
\frac{ j_l(ikr')}{ik \,j_l^{'}(ik) + \frac{1}{\ell}
\,j_l(ik)\,}Y^l_m(\theta', \phi') Y^l_m(\theta, \phi).$$

Note that we find a very similar formula as the one in $2$D. The
Bessel function of first kind is replaced by the spherical
function of first kind, and our operator is decomposed in the
spherical harmonics basis instead of the Fourier basis.

\end{document}